\title[f-KT on simple Lie algebras]
{On the full Kostant-Toda lattice and the flag varieties. \\I. The singular solutions}
\author{Yuancheng Xie} 
\date{\today}
\address{Beijing International Center for Mathematical Research, Peking University,
Beijing 100871, China}
\email{xieyuancheng@bicmr.pku.edu.cn}
\subjclass[2000]{}
\def\tbox(#1,#2)#3{
\x=#1 \y=#2 
\multiply\x by 12 
\multiply\y by 12 
\z=\x \t=\y
\advance\z by 12 
\advance\t by 12 
\psline(\x,\y)(\x,\t)(\z,\t)(\z,\y)(\x,\y)
\advance\x by 6
\advance\y by 6 
\rput(\x,\y){{\bf #3}}}
\def\proof{\par{\it Proof}. \ignorespaces}
\def\endproof{{\ \vbox{\hrule\hbox{%
     \vrule height1.3ex\hskip0.8ex\vrule}\hrule }}\par}
\def\sproof{\par{\it Sketch of Proof}. \ignorespaces}
\def\endsproof{{\ \vbox{\hrule\hbox{%
				\vrule height1.3ex\hskip0.8ex\vrule}\hrule }}\par}
\numberwithin{equation}{section}
\let\trueint=\int
\let\truesum=\sum
\def\int{\mathop{\textstyle\trueint}\limits}
\def\sum{\mathop{\textstyle\truesum}\limits}
\def\t{\mathbf{t}}
\def\0{\mathbf{0}}
\def\edge{\ar@{-}}
\def\dedge{\ar@{.}}
\newtheorem{theorem}{Theorem}[section]
\newtheorem{definition}[theorem]{Definition}
\newtheorem{proposition}[theorem]{Proposition}
\newtheorem{lemma}[theorem]{Lemma}
\newtheorem{problem}[theorem]{Problem}
\newtheorem{corollary}[theorem]{Corollary}
\newtheorem{remark}[theorem]{Remark}
\DeclareMathOperator{\I}{\mathcal I}
\renewcommand*\env@matrix[1][*\c@MaxMatrixCols c]{%
  \hskip -\arraycolsep
  \let\@ifnextchar\new@ifnextchar
  \array{#1}}
\newcommand{\thmrefer}[1]{\renewcommand\thetheorem
  {\protect\ref{#1}}\addtocounter{theorem}{-1}}
\begin{document}

\begin{abstract}
The full Kostant-Toda (f-KT) lattice is a natural generalization of the classical tridiagonal Toda lattice. We study singular structure of solutions of the f-KT lattices defined on simple Lie algebras in two different ways: through the $\tau$-functions and through the Kowalevski-Painlev\'e analysis. The $\tau$-function formalism relies on and is equivalent to the representation theory of the underlying Lie algebras, while the Kowalevski-Painlev\'e analysis is representation independent and we are able to characterize all the terms in the Laurent series solutions of the f-KT lattices via the structure theory of the Lie algebras. Through the above analysis we compactify the initial condition spaces of f-KT lattice by the corresponding flag varieties, that is fixing the spectral parameters which are invariant under the f-KT flows, we build a one to one correspondence between solutions of the f-KT lattices and points in the corresponding flag varieties. As all the important characters we obtain in the Kowalevski-Painlev\'e analysis are integral valued, results in this paper are valid in any field containing the rational field.
\end{abstract}

\maketitle

\tableofcontents

\section{Introduction}\label{sec:intro}

The goal of the current paper is to study singular solutions of the so-called full Kostant-Toda (f-KT) lattice which is a typical example of degenerate integrable system (a.k.a. super- or non-commutative integrable systems (c.f. \cite{Nekhoroshev1972, Gekhtman-Shapiro1999, Reshetikhin2016})) and their relations with geometry of the flag varieties. 

Toda lattice is first introduced in 1967 by Toda \cite{Toda1967a, Toda1967b} as a Hamiltonian system describing the dynamics of a chain of particles on a circle (the periodic case) or along a line (the non-periodic case) under certain exponential potentials between the nearest neighbors. In 1974, H\'enon \cite{Henon1974} and Flaschka \cite{Flaschka1974a} first recognized that the real finite periodic Toda lattice is an integrable system in the sense of Liouville, that is there exists half the dimension functionally independent conserved quantities of the phase space. Flaschka \cite{Flaschka1974a, Flaschka1974b} and Manakov \cite{Manakov1974} then found a remarkable change of variables so that Toda lattice could be put in the form of Lax pairs, and in this new form the Toda lattice describes the evolution of tridiagonal Jacobi matrices on the iso-spectral manifold. Moser \cite{Moser1975a} then constructed the general solutions for the finite real symmetric non-periodic Toda lattice via the isospectral deformation method. After that the Toda lattice model went through a series of generalizations in various directions, for example, Bogoyavlensky \cite{Bogoyavlensky1976} constructed generalized periodic Toda lattice for each simple Lie algebra. Kostant \cite{Kostant1979a}, Olshanetsky-Perelomov \cite{Olshanetsky-Perelomov1979} and Symes \cite{Symes1980} then showed that all the corresponding open Toda lattices are completely integrable systems by constructing their solutions explicitly. Turning into the new century Toda lattice is still one of the most successful models widely used to test various techniques developed in integrable systems with equally wide applications in many other branches of mathematics such as representation theory, random matrix, orthogonal polynomials, geometry, topology, and combinatorics etc.(c.f. \cite{ Kodama-Shipman2018, Deift1999, CHHL2018, Semenov-Tian-Shansky2021, Tomei2013, Kodama-Williams2015, Ercolani-Ramalheira-Tsu2022, Ercolani-Ramalheira-Tsu2023, Chernyakov-Sharygin-Talalaev2023}).

According to the theory of Kostant \cite{Kostant1979a, Kostant1979b}, Toda lattice should be viewed as the coadjoint action of a Borel subgroup $B_+$ of a simple Lie group $G$ on the dual $\mathfrak{b}_+^*$ of its Lie algebra $\mathfrak{b}_+ = \text{Lie}(B_+)$. Different realizations of $\mathfrak{b}_+^*$ correspond to several different versions of Toda lattice. For example, in $\mathfrak{sl}_n$ we can use the trace form to identify $\mathfrak{b}_+^*$ either with the linear space consisting of all symmetric matrices or with the affine space consisting of matrices in the lower Hessenberg form which corresponds to the classical QR and LU decompositions of $SL_n$ respectively. The phase spaces of the classical Toda lattice consisting of the symmetric tridiagonal matrices or the tridiagonal Hessenberg matrices can be viewed as the minimal indecomposable symplectic leaves. These systems  are called symmetric Toda lattice and Kostant-Toda lattice respectively, and they are clearly integrable with the conserved quantities given by the Chevalley invariants. The totality of the Toda flows induced by these invariants are called the Toda hierarchy. Deift, Li, Nanda and Tomei \cite{DLNT1986} considered the symmetric Toda flow on a generic orbit and showed that it is still integrable with the extra conserved quantities given by the chopping method. Ercolani, Flaschka and Singer \cite{Ercolani-Flaschka-Singer1993} then studied the Toda flows on the Hessenberg matrices and obtained similar results, later these results are generalized by Gekhtman and Shapiro \cite{Gekhtman-Shapiro1999} to Toda flows on all simple Lie algebras. Following \cite{Ercolani-Flaschka-Singer1993}, we call these systems the full symmetric Toda hierarchy and the full Kostant-Toda (f-KT) hierarchy respectively. 
According to \cite{Bloch-Gekhtman1998, Kodama-Williams2015}, there exists a map from the f-KT flows to the full symmetric Toda flows, and we will focus on the f-KT hierarchy in this paper.

As an integrable system, we insist that solutions of f-KT hierarchy should be constructed and analyzed explicitly. In this respect, Kodama-McLaughlin \cite{Kodama-McLaughlin1996} and Kodama-Ye \cite{Kodama-Ye1996} constructed the solutions explicitly for the full symmetric Toda lattice and certain generalized Toda lattice defined on $n \times n$ matrices respectively via iso-spectral deformation method. By a reduction procedure, the latter results can be used to study the f-KT flows on matrices in the Hessenberg form (which includes Lie algebras in type $A_n, B_n, C_n, G_2$ etc.). On the other hand the real regular soliton solutions of f-KT hierarchy in type $A$ and their asymptotic properties were studied in terms of the totally nonnegative parts of the flag variety in \cite{Kodama-Williams2015}.

The goal of the current paper is to understand the structures of singular solutions, i.e. solutions which blow up in finite times, of the f-KT hierarchy defined on all simple Lie algebras, and we show that the compactification of the phase spaces are exactly the flag varieties. The main techniques we use are representation theory of simple Lie algebras and the Kowalevski-Painlev\'e analysis.

With a view to studying real solutions in the future we set up the f-KT hierarchy on split simple Lie algebras over $\mathbb{R}$ in the main text. But to ease notations we only describe the main results for f-KT lattice (the first f-KT flow) over $\mathbb{C}$ in this Introduction. Let $\mathfrak{g}$ be a complex simple Lie algebra of rank $\ell$, $\mathfrak{h}$ a Cartan subalgebra of $\mathfrak{g}$ and $\Delta_+$ a chosen set of positive roots of $\mathfrak{g}$.
Let $L_{\mathfrak{g}}$ be the Lax matrix defined by
\begin{equation*}
L_{\mathfrak{g}} = \sum \limits_{i = 1}^{\ell} X_{\alpha_i} +  \sum \limits_{i = 1}^{\ell} a_i(t)H_i + \sum \limits_{\alpha \in \Delta_+}b_{\alpha}(t)Y_{\alpha},
\end{equation*}
where $\{H_i, 1 \le i \le \ell; X_{\alpha}, Y_{\alpha}, \alpha \in \Delta_+\}$ is a Chevalley basis of $\mathfrak{g}$. Let $\mathfrak{n}_{\pm} = \sum \limits_{\alpha \in \Delta_{\pm}}\mathbb{R}X_{\alpha}$ and $\mathfrak{b}_{\pm} = \mathfrak{h} + \mathfrak{n}_{\pm}$ be the corresponding nilpotent subalgebras and Borel subalgebras of $\mathfrak{g}$. Denote by $G$ the adjoint group of $\mathfrak{g}$, and $N_-, B_+$ subgroups of $G$ with Lie algebras $\mathfrak{n}_-$ and $\mathfrak{b}_+$ respectively.
Then the full Kostant-Toda lattice is defined as
\begin{equation}\label{eq:f-KThky}
\frac{d L_{\mathfrak{g}}}{d t} = [B, L_{\mathfrak{g}}], \qquad \text{with } B = \Pi_{\mathfrak{b}_+}L_{\mathfrak{g}},
\end{equation}
where $\Pi_{\mathfrak{b}_+}$ is the projection to the upper Borel $\mathfrak{b}_+$ along $\mathfrak{n}_-$. 

Let $(\rho_i, V^{\omega_i})$ be the $i$-th fundamental $\mathfrak{g}$-module with highest weight vector $v^{\omega_i}$, where $\omega_i, 1 \le i \le \ell$ are the fundamental weights. Let $(\cdot, \cdot)$ be a scalar Hermitian product on $V^{\omega_i}$ such that the weight vectors form an orthonormal basis and the operator $\rho_i(X_{\alpha})$ and $\rho_i(Y_{\alpha})$ are adjoint to each other (c.f. Section \ref{sec:representation}). Let the $j$-th $\tau$-function $\tau_j$ be defined as
\[\tau_j(t) = (v^{\omega_j}, \rho_i(\exp (tL_0))v^{\omega_i}), \qquad 1 \le i \le \ell,\]
where $L_0 = L(0)$ is the initial condition. We then have
\begin{proposition}[Corollary \ref{cor:diagonal}]\label{prop:diagonalintro}
For $t$ small enough, we have the following formula for the diagonal elements in f-KT lattice:
\begin{equation}\label{eq:aiintro}
a_i({t}) = \frac{d}{dt}\ln \tau_i({t}), \qquad 1 \le i \le \ell,
\end{equation}
and the other entries $b_{\alpha},\ \alpha \in \Delta_+$ in $L_{\mathfrak{g}}$ can be determined from $a_i(t)$ subsequently.
\end{proposition}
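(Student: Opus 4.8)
The plan is to solve the f-KT flow explicitly by the factorization (Kostant--Symes) method and then read off the diagonal from the highest-weight matrix elements that define the $\tau$-functions. First I would write, for $t$ in a neighborhood of $0$ where the $\tau_j$ do not vanish, the Gauss-type factorization $\exp(tL_0)=g_-(t)\,g_+(t)$ with $g_-(t)\in N_-$ and $g_+(t)\in B_+$; the nonvanishing of the fundamental matrix elements is precisely the condition guaranteeing this factorization, which is why the statement is local (``$t$ small enough''). Differentiating $\exp(tL_0)=g_-g_+$ and conjugating gives $L(t)=g_+L_0g_+^{-1}=g_-^{-1}L_0g_-$ together with the splitting $L(t)=g_-^{-1}\dot g_- + \dot g_+ g_+^{-1}$ along $\mathfrak n_-\oplus\mathfrak b_+$; comparing with \eqref{eq:f-KThky} identifies $B=\Pi_{\mathfrak b_+}L=\dot g_+ g_+^{-1}$ and verifies that this $L(t)$ does solve the f-KT lattice.

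Second, I would compute $\tau_j$ from the factorization and the defining properties of the fundamental module. Applying $\rho_j$ to $\exp(tL_0)=g_-g_+$ gives $\tau_j=(v^{\omega_j},\rho_j(g_-)\rho_j(g_+)v^{\omega_j})$. On the right factor, write $g_+=t_+u_+$ with $t_+\in H$ and $u_+\in N_+$; since $v^{\omega_j}$ is a highest weight vector it is fixed by $\rho_j(N_+)$, so $\rho_j(g_+)v^{\omega_j}=\chi_{\omega_j}(t_+)\,v^{\omega_j}$, where $\chi_{\omega_j}$ is the character of $H$ of weight $\omega_j$. On the left factor, the hypothesis that $\rho_j(X_\alpha)$ and $\rho_j(Y_\alpha)$ are mutually adjoint makes the adjoint of $\rho_j(g_-)$ lie in $\rho_j(N_+)$, so it again fixes $v^{\omega_j}$ and the left factor contributes nothing to the pairing. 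With $\|v^{\omega_j}\|=1$ this yields the clean formula $\tau_j(t)=\chi_{\omega_j}(t_+(t))$.

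Third, I differentiate. Since $\frac{d}{dt}\ln\chi_{\omega_j}(t_+)=\langle\omega_j,\,d\pi(\dot g_+ g_+^{-1})\rangle$, where $\pi:B_+\to H$ is the projection along $N_+$ and $d\pi=\Pi_{\mathfrak h}$, and since $B=\dot g_+ g_+^{-1}$ has the same Cartan part as $L$ (the only $\mathfrak n_+$-component of $L$ being the simple-root terms $X_{\alpha_i}$), I obtain $\frac{d}{dt}\ln\tau_j=\langle\omega_j,\Pi_{\mathfrak h}L\rangle=\sum_i a_i\langle\omega_j,H_i\rangle$. In the Chevalley basis $\langle\omega_j,H_i\rangle=\delta_{ij}$, giving $a_i=\frac{d}{dt}\ln\tau_i$, as claimed.

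Finally, for the remaining entries I would use the Lax equation coefficientwise. Because $B=\sum_i X_{\alpha_i}+\sum_i a_iH_i$, matching the $H_i$-component of $[B,L]$ gives $\dot a_i=b_{\alpha_i}$ (the only Cartan contribution being $\sum_i b_{\alpha_i}[X_{\alpha_i},Y_{\alpha_i}]$), so the simple-root entries are already $b_{\alpha_i}=\frac{d^2}{dt^2}\ln\tau_i$. Matching the $Y_\alpha$-component for a root $\alpha$ of height $\ge 2$ expresses $\dot b_\alpha$ in terms of the $a_i$, of $b_\alpha$ itself, and of the height-raised entries $b_{\alpha+\alpha_i}$; reading these relations by increasing height determines every $b_\alpha$ recursively as a differential-polynomial expression in the $a_i$. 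I expect the main obstacle to be bookkeeping rather than any conceptual difficulty: verifying that the factorization genuinely yields $\Pi_{\mathfrak b_+}L=\dot g_+ g_+^{-1}$, and carefully tracking which brackets and weight components survive in the pairing and in the Lax equation.
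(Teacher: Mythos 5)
Your proposal is correct and follows essentially the same route as the paper: differentiate the Gauss factorization $\exp(tL_0)=g_-g_+$ to get $\Pi_{\mathfrak b_+}L=\dot g_+g_+^{-1}$ (the paper's Proposition \ref{prop:RHfactorization}), use the highest-weight vector's invariance under $N_+$ and under the adjoint of $N_-$ to reduce $\tau_j$ to the character $\xi_{\omega_j}$ of the Cartan part, and then differentiate using $\omega_j(H_i)=\delta_{ij}$, with the $b_\alpha$ recovered height-by-height from the componentwise Lax equations. The only cosmetic differences are that the paper phrases the argument for the whole hierarchy via $\Theta_{L_0}(\mathbf t)$ and writes the Cartan character in coordinates $h(\mathbf t)=\exp(\sum_i h_i H_i)$ rather than via the projection $d\pi=\Pi_{\mathfrak h}$.
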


The proof of Proposition \ref{prop:diagonalintro} is based on the following factorization of the group element $\exp (tL_0)$
\begin{equation}\label{eq:RHintro}
\exp(tL_0) = u(t)b(t), \qquad \text{with } u(t) \in N_-,\ b(t) \in B_+,
\end{equation}
which always exists when $t$ is small enough. 
When there exists some $t_* \in \mathbb{C}$ such that the factorization \eqref{eq:RHintro} could not be performed which happens when there exists $1 \le k \le \ell$ such that $\tau_k(t_*) = 0$, we need to use the more general factorization
\[\exp (t_*L_0) = u_*\dot{w}_*b_*,\]
where $\text{id} \ne \dot{w}_* \in G$ is a representative of $w_* \in \mathcal{W}$ in the Weyl group. The set of $t_*$ where some of the $\tau$-functions vanish is called the Painlev\'e divisor. Expanding $\tau$-functions around $t_*$ by setting $t \to t + t_*$ we obtain
\begin{equation}\label{eq:tauintro}
\tau_k(t; w_*) = d_k(v^{\omega_k}, e^{tL_0}u_* \dot{w}_*v^{\omega_k}), \qquad 1 \le k \le \ell.
\end{equation}
One of the main purposes of the current paper is to identify which $u \in N_-$ and $w \in \mathcal{W}$ can appear in formula \eqref{eq:tauintro} so that $a_i(t)$'s given by \eqref{eq:aiintro} still are solutions to the f-KT lattice. The conclusion is as follows
\begin{theorem}[Theorem \ref{thm:generaltau}]\label{thm:generaltauintro}
For any $u \in N_-$ and $w \in \mathcal{W}$,
\begin{equation}\label{eq:tau2intro}
\tau_k(t; w) = (v^{\omega_k}, \exp (tL_0)u\dot{w}v^{\omega_k}), \qquad 1 \le k \le \ell,
\end{equation}
are $\tau$-functions of the f-KT lattice. The diagonal elements $a_i(t)$'s of $L_{\mathfrak{g}}$ are given by formula \eqref{eq:aiintro}.
\end{theorem}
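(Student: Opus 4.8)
The plan is to reduce the general statement to the already-established regular case (Proposition \ref{prop:diagonalintro}) by a base-point shift together with an isospectral conjugation that keeps us inside the f-KT phase space. Write $g_0=u\dot w\in G$ for the fixed group element and regard $\tau_k(t;w)=(v^{\omega_k},\exp(tL_0)g_0v^{\omega_k})$ as an entire function of $t$. First I would produce a good base point: I claim there is a $t_0$ for which $\exp(t_0L_0)g_0$ lies in the big cell $N_-B_+$, equivalently for which all principal matrix coefficients $\tau_k(t_0;w)$, $1\le k\le\ell$, are simultaneously nonzero. Granting this, factor $\exp(t_0L_0)g_0=u_0b_0$ with $u_0\in N_-$, $b_0\in B_+$, and set $\tilde L_0:=\mathrm{Ad}_{u_0^{-1}}L_0$. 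Because $\mathrm{Ad}_{N_-}$ preserves the affine slice $\epsilon+\mathfrak{b}_-$ (with $\epsilon=\sum_iX_{\alpha_i}$; one checks $\mathrm{ad}_{\mathfrak n_-}$ sends $\epsilon$ into $\mathfrak b_-$ and preserves $\mathfrak b_-$), the element $\tilde L_0$ is again of Lax form and is isospectral to $L_0$, hence a legitimate f-KT initial condition.

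The heart of the argument is the identity
\[\tau_k(t;w)=\beta_k^{(0)}\,\tilde\tau_k(t-t_0),\qquad \tilde\tau_k(s):=(v^{\omega_k},\exp(s\tilde L_0)v^{\omega_k}),\]
where $\beta_k^{(0)}\neq0$ is defined by $b_0v^{\omega_k}=\beta_k^{(0)}v^{\omega_k}$. To prove it, write $\exp(tL_0)=\exp((t-t_0)L_0)\exp(t_0L_0)$ and use $\exp(t_0L_0)g_0=u_0b_0$ to get $\tau_k(t;w)=\beta_k^{(0)}(v^{\omega_k},\exp((t-t_0)L_0)u_0v^{\omega_k})$, since $v^{\omega_k}$ is a highest weight vector and $B_+$ acts on its line by a scalar. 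Setting $s=t-t_0$, using $\exp(sL_0)u_0=u_0\exp(s\tilde L_0)$ (from $\tilde L_0=u_0^{-1}L_0u_0$) together with the adjointness relations, whence $u_0^{\dagger}\in N_+$ and $u_0^{\dagger}v^{\omega_k}=v^{\omega_k}$, the inner product collapses to $\tilde\tau_k(s)$. As $\beta_k^{(0)}$ is a nonzero constant, $\tfrac{d}{dt}\ln\tau_k(t;w)=\tfrac{d}{ds}\ln\tilde\tau_k(s)\big|_{s=t-t_0}$, which by Proposition \ref{prop:diagonalintro} applied to $\tilde L_0$ equals the $k$-th diagonal entry of the genuine f-KT solution $\tilde L(t-t_0)$. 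Thus the $\tau_k(t;w)$ are $\tau$-functions of an honest (time-translated) f-KT trajectory, formula \eqref{eq:aiintro} holds, and the remaining entries $b_\alpha$ are recovered as in Proposition \ref{prop:diagonalintro}.

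The main obstacle is the existence of the base point, i.e. that each $\tau_k(\cdot;w)\not\equiv0$. I would settle this by a leading-order analysis. Since $u\in N_-$ lowers weights, the top weight occurring in $\xi=u\dot wv^{\omega_k}$ is the extremal weight $w\omega_k$, with coefficient $1$; and in the decomposition $L_0=\epsilon+h_0+f_0$ with $h_0\in\mathfrak{h}$, $f_0\in\mathfrak{n}_-$, only $\epsilon$ raises weights. Hence the weight-$\omega_k$ component of $\exp(tL_0)\xi$ vanishes to order $d_k:=\mathrm{ht}(\omega_k-w\omega_k)$, and
\[\tau_k(t;w)=\frac{t^{d_k}}{d_k!}\,(v^{\omega_k},\epsilon^{d_k}\dot wv^{\omega_k})+O(t^{d_k+1}).\]
It therefore suffices to show the principal-nilpotent matrix element $(v^{\omega_k},\epsilon^{d_k}\dot wv^{\omega_k})$ is nonzero. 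This is the delicate point, which I would attack through the principal $\mathfrak{sl}_2$-triple containing $\epsilon$: both $v^{\omega_k}$ and $\dot wv^{\omega_k}$ are extremal, $v^{\omega_k}$ is annihilated by $\epsilon$, and one must show $\dot wv^{\omega_k}$ has nonzero projection onto the grade-$(-2d_k)$ vector of the principal string through $v^{\omega_k}$; the positivity of the Chevalley structure constants entering $\epsilon^{d_k}$ is what should rule out cancellation.

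Once each $\tau_k(\cdot;w)$ is a nonzero entire function, its zeros are isolated, the union of the $\ell$ zero-sets is discrete, and any $t_0$ off this set furnishes the required base point, completing the reduction. I expect the verification steps in the second paragraph to be essentially mechanical given Proposition \ref{prop:diagonalintro}, while the genuinely hard part is the non-vanishing of the extremal matrix element $(v^{\omega_k},\epsilon^{d_k}\dot wv^{\omega_k})$, which is where the structure theory of the underlying simple Lie algebra must be used in an essential way.
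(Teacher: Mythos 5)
Your reduction is a genuinely different route from the paper's. The paper proves Theorem \ref{thm:generaltau} by first completing the entire Kowalevski--Painlev\'e analysis --- classification of the weight-homogeneous Laurent solutions (Proposition \ref{prop:Uniqueness}), the bijection between the indicial locus and $\mathcal{W}$ (Theorem \ref{thm:Onetoone}), the Kowalevski spectrum $E^w_{\alpha}=L(w\alpha)$ (Theorem \ref{thm:Kmatrix}), the automatic compatibility conditions (Corollary \ref{cor:compatibility}) --- and then matching the free parameters of those Laurent solutions with the parameters in the Taylor expansion of \eqref{eq:tau2} via Lemma \ref{eq:auxiliarylemma}. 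You instead bootstrap from the regular case: choose $t_0$ with $\exp(t_0L_0)u\dot w\in N_-B_+$, factor it as $u_0b_0$, conjugate to $\tilde L_0=\mathrm{Ad}_{u_0^{-1}}L_0\in e+\mathfrak{b}_-$ (correct: $\mathrm{ad}_{\mathfrak{n}_-}$ sends $e$ into $\mathfrak{b}_-$ and preserves $\mathfrak{b}_-$), and derive $\tau_k(t;w)=\beta^{(0)}_k\tilde\tau_k(t-t_0)$; the manipulations with $b_0v^{\omega_k}=\beta^{(0)}_kv^{\omega_k}$, $u_0^*\in N_+$ and $u_0^*v^{\omega_k}=v^{\omega_k}$ are sound and exhibit $\tau_k(\cdot;w)$ as a constant multiple of a $\tau$-function of the honest, time-translated (hence still f-KT) solution through $\tilde L_0$, so Corollary \ref{cor:diagonal} applies. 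This is more economical for the theorem as stated; what it does not produce is the local data (pole orders, resonances, free-parameter counts) that the paper's longer route generates and then reuses for Corollary \ref{cor:freeparameters} and Theorem \ref{conj:Flag}.

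However, there is one genuine gap, exactly at the point you flag: the non-vanishing of $(v^{\omega_k}, e^{d_k}\dot wv^{\omega_k})$, where $d_k$ is the height of $\omega_k-w\omega_k$. Without it you cannot conclude $\tau_k(\cdot;w)\not\equiv 0$, hence no admissible base point $t_0$ exists and the reduction never starts; this is the load-bearing step, not a peripheral verification. Moreover, the mechanism you propose is not right as stated: in a Chevalley basis the structure constants are $N_{\alpha,\beta}=\pm(q+1)$ and genuinely carry signs, so ``positivity of the Chevalley structure constants'' cannot by itself rule out cancellation in $(v^{\omega_k}, e^{d_k}\dot wv^{\omega_k})=\sum_{\mathrm{words}}\,(v^{\omega_k}, X_{\alpha_{j_1}}\cdots X_{\alpha_{j_{d_k}}}\dot wv^{\omega_k})$. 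The fact itself is true, and it is established in the paper independently of Theorem \ref{thm:generaltau}: equation \eqref{eq:tildetau} and the proof of Theorem \ref{thm:Onetoone} (the Marsh--Rietsch-style induction along a reduced word of $w$) show that $(v^{\omega_k},\exp(te)\dot wv^{\omega_k})$ is a nonzero monomial in $t$, while your own weight argument identifies this matrix coefficient with $\frac{t^{d_k}}{d_k!}(v^{\omega_k},e^{d_k}\dot wv^{\omega_k})$. So if you replace the principal-$\mathfrak{sl}_2$ sketch by a citation of Proposition \ref{conj:Nonnegative} and Theorem \ref{thm:Onetoone} (or of Kostant's classical non-vanishing results for the principal nilpotent), your proof closes and is correct; as written, it is incomplete at precisely its crucial Lie-theoretic step.
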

This seemingly natural fact is somewhat nontrivial especially when the underlying field is $\mathbb{R}$ as it is known that in type $A$ there are real regular solutions of f-KT lattice with dim$\, \mathfrak{b}_+$ many free parameters which do not hit any smaller Bruhat cell (c.f. \cite{Gekhtman-Shapiro1997, Kodama-Williams2015}). We will also see that only solutions of the tridiagonal Kostant-Toda lattice will hit the smallest Bruhat cell which is another hint of the non-triviality of Theorem \ref{thm:generaltauintro}. 

We then use Kowalevski-Painlev\'e analysis to study these singular solutions (c.f. \cite{Flaschka-Haine1991} for a discussion on the tridiagonal case).

The Kowalevski-Painlev\'e analysis for nonlinear ordinary (partial) differential equations is an analogue of the classical Frobenius method for ordinary differential equations near a regular singular point (c.f. \cite{Whittaker-Watson2020}). The idea is quite simple: we prepare some formal Laurent series as candidate solutions and substitute them into the differential equations under investigation, then we recursively find the coefficients and show that the Laurent series converge and are indeed meromorphic solutions. For example, we may assume solutions of \eqref{eq:f-KThky} have the following form
\[a_i(t) = \sum \limits_{k = 0}^{\infty}a_{ik}t^{\delta_i + k} \quad 1 \le i \le \ell, \qquad b_{\alpha}(t) = \sum \limits_{k = 0}^{\infty}b_{\alpha k}t^{\gamma_{\alpha} + k} \quad \alpha \in \Delta_+. \]
In practice, the Kowalevski-Painlev\'e analysis takes the following three steps: 
\begin{enumerate}
\item[(1)] Identify the leading singularities and leading coefficients, i.e. $\delta_i, \gamma_{\alpha}$ and $a_{i0}, b_{\alpha 0}$ in the above example. 
\item[(2)] Find the resonances. Substituting the resulting form of Laurent series in step (1) into equation \eqref{eq:f-KThky} and find the coefficients recursively, the resonances are the the places where the iteration procedure to uniquely solve the higher order coefficients fail and new free parameters to be introduced. 
\item[(3)] Check the compatibility and convergence of the Laurent series. Express the other coefficients in the Laurent series in terms of the free parameters in step (2) and check whether or not the compatibility conditions are satisfied at the resonance levels. Show that the resulting formal Laurent series have a positive convergent radius.
\end{enumerate}
This method was first used by Kowalevski in her study of spinning tops \cite{Kowalevski1889} where she proposed the famous Kowalevski-Painlev\'e criterion\footnote{Roughly this criterion says if a system on the complex domain is integrable, then it should have enough singular solutions with enough free parameters (c.f. \cite{Adler-vanMoerbeke-Vanhaecke2013} for a precise statement).} for integrability and found an integrable case known as Kowalevski top nowadays. Our work is motivated directly by the results in \cite{Flaschka1988, Flaschka-Haine1991} on non-periodic tridiagonal Toda lattice and in \cite{Adler-vanMoerbeke1991} on periodic tridiagonal Toda lattice where Kowalevski-Painlev\'e analysis was used to understand the finer geometry of the level set varieties (Abelian varieties in the latter case). We show that all the three steps have appealing answers.

For linear systems, the leading powers $\delta_i, \gamma_{\alpha}$ are determined by the so-called indicial equations. But for non-linear systems such as f-KT lattice, we need to identify them by balancing the minimal power terms, we have
\begin{proposition}[Proposition \ref{prop:Uniqueness}, \cite{Singer1991}]\label{prop:Uniquenessintro}
The Laurent series solutions of f-KT lattice have the following form
\[a_{i}(t) = \frac{1}{t}\sum \limits_{k = 0}^{\infty} a_{ik}t^k \quad 1 \le i \le \ell, \qquad b_{\alpha}(t) = \frac{1}{t^{\nu_{\alpha}}} \sum \limits_{k = 0}^{\infty}b_{\alpha k}t^k \quad \alpha \in \Delta_+,\]
where $\nu_{\alpha} = L(\alpha) + 1$ and $L(\alpha)$ is the height of the root $\alpha$. That is $a_i(t), 1 \le i \le \ell$ have at most simple pole singularities.
\end{proposition}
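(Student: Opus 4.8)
The plan is to reduce the f-KT lattice \eqref{eq:f-KThky} to an explicit system of scalar ODEs for the coordinates $a_i$ and $b_\alpha$, to recognize that this system is quasi-homogeneous, and then to read off the leading exponents by balancing, organizing the bookkeeping along the height filtration of $\Delta_+$. Concretely, I would first make $[B,L_{\mathfrak g}]$ explicit. Writing $L_{\mathfrak g}=B+N$ with $B=\Pi_{\mathfrak b_+}L_{\mathfrak g}=\sum_i X_{\alpha_i}+\sum_i a_iH_i$ and $N=\sum_{\alpha\in\Delta_+}b_\alpha Y_\alpha\in\mathfrak n_-$, the equation becomes $\dot B+\dot N=[B,N]$. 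Expanding the bracket with the Chevalley relations $[H,Y_\alpha]=-\alpha(H)Y_\alpha$, $[X_{\alpha_i},Y_{\alpha_i}]=H_i$ and $[X_{\alpha_i},Y_\beta]=N_{i,\beta}\,Y_{\beta-\alpha_i}$ (the last nonzero only when $\beta-\alpha_i\in\Delta_+$), and matching root spaces, gives
\[
\dot a_i=b_{\alpha_i},\qquad \dot b_\alpha=-\alpha(a)\,b_\alpha+\sum_{i:\,\alpha+\alpha_i\in\Delta_+}N_{i,\alpha+\alpha_i}\,b_{\alpha+\alpha_i},
\]
where $\alpha(a):=\sum_i a_i\,\alpha(H_i)$. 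The structural point is that $\dot b_\alpha$ couples $b_\alpha$ only to the $b_\beta$ with $L(\beta)=L(\alpha)+1$; the system is therefore triangular for the height grading and is capped by the highest root $\theta$, whose equation $\dot b_\theta=-\theta(a)\,b_\theta$ is closed and linear in $b_\theta$.

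Second, I would observe that this system is quasi-homogeneous: assigning weight $1$ to each $a_i$, weight $\nu_\alpha=L(\alpha)+1$ to each $b_\alpha$, and weight $-1$ to $d/dt$, every monomial in each equation carries the same weight. Indeed $\dot a_i$ and $b_{\alpha_i}$ both have weight $2$, while $\dot b_\alpha$, $\alpha(a)b_\alpha$ and $b_{\alpha+\alpha_i}$ all have weight $\nu_\alpha+1$ since $\nu_{\alpha+\alpha_i}=\nu_\alpha+1$. Hence the lattice is invariant under $t\mapsto\lambda^{-1}t$, $a_i\mapsto\lambda a_i$, $b_\alpha\mapsto\lambda^{\nu_\alpha}b_\alpha$, so the scaling ansatz $a_i=a_{i0}\,t^{-1}$, $b_\alpha=b_{\alpha0}\,t^{-\nu_\alpha}$ is the candidate leading behavior. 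Substituting it produces the indicial relations $b_{\alpha_i0}=-a_{i0}$ and $(\alpha(a_0)-\nu_\alpha)\,b_{\alpha0}=\sum_i N_{i,\alpha+\alpha_i}\,b_{(\alpha+\alpha_i)0}$, with $\alpha(a_0):=\sum_i a_{i0}\,\alpha(H_i)$; these are themselves triangular in the height and are anchored at the top by $\theta(a_0)=\nu_\theta$.

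Third, to obtain the sharp \emph{upper} bounds that constitute the ``at most simple pole'' statement, I would argue by descending induction on height using meromorphy. For any Laurent coordinate, $\dot b_\alpha/b_\alpha=\frac{d}{dt}\ln b_\alpha$ has at most a simple pole, its residue being the (integer) leading exponent of $b_\alpha$. Applied to the closed top equation this forces $\theta(a)$ to have at most a simple pole; feeding the inductive hypothesis on the pole orders of the height-$(L(\alpha)+1)$ coordinates into $\dot b_\alpha+\alpha(a)b_\alpha=\sum_i N_{i,\alpha+\alpha_i}b_{\alpha+\alpha_i}$ then bounds the order of $b_\alpha$ by $\nu_\alpha$, and finally $\dot a_i=b_{\alpha_i}$ together with the requirement that $a_i$ be a genuine Laurent series (no logarithmic term) forces $a_i$ to have at most a simple pole. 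As an independent check on this last point, Proposition \ref{prop:diagonalintro} gives $a_i=\frac{d}{dt}\ln\tau_i$ with $\tau_i$ an entire function of $t$, so $a_i$ has at most a simple pole whose residue is the non-negative integer vanishing order of $\tau_i$; this is also the source of the integrality asserted in the abstract.

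I expect the main obstacle to be the interplay, rather than either bound in isolation: the pole-order estimates and the actual residue values are coupled through $\alpha(a)$, so cancellations could a priori produce a sub-principal balance of lower order, and one must show both that the triangular indicial system admits a solution with the relevant $b_{\alpha0}$ (and hence $a_{i0}$) nonzero and that no anomalous balance raises the pole orders above $\nu_\alpha$. Resolving this is where the structure theory of $\mathfrak g$ enters decisively—invertibility of the Cartan matrix to recover $a_0$ from the simple-root data, the special role of the highest root $\theta$ in closing the recursion from the top, and nonvanishing of the relevant Chevalley constants $N_{i,\beta}$—and it is exactly the step that the quasi-homogeneity and the height filtration are designed to make tractable.
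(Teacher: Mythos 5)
Your reduction to the coordinate system \eqref{eq:f-KTincoordinate}, the height filtration, the weight-homogeneity observation, and the indicial relations all coincide with the paper's setup; but the actual content of Proposition \ref{prop:Uniqueness} is the \emph{upper bound} on pole orders, and your descending induction does not prove it. The inductive step fails concretely: from $\dot b_\alpha+\alpha(a)\,b_\alpha=\sum_i N_{\alpha_i,-\alpha-\alpha_i}\,b_{\alpha+\alpha_i}$ with right-hand side of pole order at most $\nu_\alpha+1$, you cannot conclude that $b_\alpha$ has pole order at most $\nu_\alpha$: a pole of any order $m>\nu_\alpha$ is consistent with that equation provided $\alpha(a)$ has a simple pole of residue exactly $m$ (so that $\dot b_\alpha$ and $\alpha(a)b_\alpha$ cancel at top order), and at that stage of the induction nothing bounds the residues or pole orders of the $a_i$ --- bounding them is precisely what is to be proved. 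Nor is this a removable technicality: on genuine solutions the residues of $\alpha(a)$ do exceed $\nu_\alpha$. In the paper's $\mathfrak{so}_5(\mathbb{C})$ example, case 4.1 of Table \ref{tabindicial} has $a_{10}=3$, so $\theta(a)=2a_1$ has residue $6>\nu_\theta=4$; the bound on $b_\theta=d_1$ survives only because \eqref{eq:recursiveso5} forces $d_1\equiv 0$ there, a fact invisible to the top equation $\dot d_1=-2a_1d_1$ in isolation --- it emerges only when a putative high-order pole of $d_1$ is propagated back down to the $a_i$ equations and contradicts their assumed pole orders. The same objection hits your base case, which controls only the single combination $\theta(a)$, only when $b_\theta\not\equiv 0$, and gives no bound on the pole order of $b_\theta$ itself (that order \emph{equals} the residue of $\theta(a)$). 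You flag this coupling as ``the main obstacle,'' but the deferred resolution is the proposition.

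The paper closes the gap by a global contradiction rather than a height-by-height induction: assume $\delta:=\max_i(\hbox{pole order of }a_i)\ge 2$ with some leading coefficient $a_{\kappa 0}\ne 0$. Then in every equation the time-derivative terms are subdominant to the quadratic terms $a_ib_\alpha$, and balancing the new leading order $t^{-2\delta-1}$ --- eliminating the $b_{\alpha 0}$'s height by height exactly as for the indicial system \eqref{eq:Indicialf-KT} --- leaves $\ell$ \emph{homogeneous} polynomial equations in $(a_{10},\dots,a_{\ell 0})$: the inhomogeneous terms $\nu_\alpha b_{\alpha 0}$ of the indicial equations come from differentiation and survive only in the borderline case $\delta=1$. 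The homogeneous system admits only the trivial solution, contradicting $a_{\kappa 0}\ne 0$; this simultaneous treatment of all heights and of the $a_i$ equations is what your one-directional induction cannot reproduce. Finally, your side remark invoking Corollary \ref{cor:diagonal} can in fact be upgraded to a correct proof for \emph{convergent} Laurent solutions (pick a regular point $t_0$ in the punctured disk, note that $\tau_i(t)=(v^{\omega_i},\exp((t-t_0)L(t_0))v^{\omega_i})$ is entire with $\tau_i(t_0)=1$, and extend the identity $a_i=\frac{d}{dt}\ln\tau_i$ over the punctured disk by the identity theorem), and this is not circular in the paper's architecture; but it says nothing about formal Laurent series solutions, which a coefficient-comparison proof of this proposition must also cover, and you offered it only as a check rather than as the argument.
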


We then substituting these Laurent series into the f-KT lattice \eqref{eq:f-KThky}, multiplying out and equating to zero the coefficients of successive powers of $t$. The first system, that is the nonlinear algebraic equations satisfied by $a_{i0}$'s and $b_{\alpha 0}$'s are called the indicial equations. Note that $b_{\alpha 0}$ can be obtained by solving linear equations once $a_{i0}$'s are known. Let $\Phi_w^{\pm}$\footnote{Sets like $\Phi_w^{\pm}$ were studied by Papi (c.f. \cite{Papi1994}).} be the set of positive (negative) coroots of $\mathfrak{g}$ which are mapped into the set of negative (positive) ones by $w$. Then we have
\begin{theorem}[Proposition \ref{conj:Nonnegative}, Theorem \ref{thm:Onetoone}]\label{thm:Onetooneintro}
There is a one-to-one correspondence between solutions of the indicial equations and elements of the Weyl group $\mathcal{W}$ of $\mathfrak{g}$. Let $w \in \mathcal{W}$, then the solution of indicial equations associated with $w$ is given by the following formula
\[a^w_{i0} = \sum \limits_{\check{\alpha} \in \Phi_w^-}\kappa( \omega_i, w^{-1}\check{\alpha}),\]
where $\omega_i$ is the $i$th fundamental weight, and $\kappa(\cdot, \cdot)$ is the bilinear form on $\mathfrak{h}^*$ induced by the restriction of the Killing form to $\mathfrak{h}$.
\end{theorem}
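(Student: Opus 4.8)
The plan is to first write the indicial equations explicitly and repackage them into one closed identity for the leading coefficient matrix, and then to extract the diagonal formula through the $\tau$-functions. Let $\rho^\vee\in\mathfrak h$ be the grading element (the sum of the fundamental coweights), characterized by $\alpha_i(\rho^\vee)=1$ for all simple roots, so that $\langle\mu,\rho^\vee\rangle$ is the height of $\mu$ for $\mu$ in the root lattice. Substituting the balanced Laurent ansatz of Proposition \ref{prop:Uniquenessintro} into \eqref{eq:f-KThky} and passing to the renormalized matrix $\hat L(t)=t\,\mathrm{Ad}(t^{-\rho^\vee})L_{\mathfrak g}(t)$, whose limit $\Lambda=\lim_{t\to0}\hat L(t)=\epsilon+A_0+N$ keeps the shape of $L_{\mathfrak g}$ with $\epsilon=\sum_iX_{\alpha_i}$ the principal nilpotent, $A_0=\sum_ia_{i0}H_i\in\mathfrak h$ and $N=\sum_\beta b_{\beta0}Y_\beta\in\mathfrak n_-$, I would use that $\mathrm{Ad}(t^{\rho^\vee})$ preserves the triangular decomposition to rewrite \eqref{eq:f-KThky} as $t\dot{\hat L}=[\Pi_{\mathfrak b_+}\hat L-\rho^\vee,\hat L]+\hat L$. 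In the limit the entire system of indicial equations collapses to the single relation $[\Pi_{\mathfrak b_+}\Lambda-\rho^\vee,\Lambda]=-\Lambda$, i.e. $\Lambda$ is a $(-1)$-eigenvector of $\mathrm{ad}(\epsilon+h_0)$ with $h_0=A_0-\rho^\vee$. Decomposing by height recovers the familiar content: the top components are automatic, the height-$0$ part yields $b_{\alpha_i0}=-a_{i0}$, and the negative parts give the recursion $[\epsilon,N_{-d-1}]=-(1+\mathrm{ad}\,h_0)N_{-d}$.

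For the bijection with $\mathcal W$ I would argue injectivity and completeness separately. Granting the diagonal formula (proved below) $a^w_{i0}=\langle\omega_i-w\omega_i,\rho^\vee\rangle=\langle\omega_i,\rho^\vee-w^{-1}\rho^\vee\rangle$, the tuple $(a^w_{i0})_i$ is precisely the vector of simple-coroot coordinates of $\rho^\vee-w^{-1}\rho^\vee$; since $\rho^\vee$ is regular its $\mathcal W$-stabilizer is trivial, so $w\mapsto w^{-1}\rho^\vee$ is injective and the diagonal data alone recovers $w$. Because $\rho^\vee-w^{-1}\rho^\vee=\sum_{\check\alpha\in\Phi_w^+}\check\alpha$ is a nonnegative combination of positive coroots, this simultaneously yields the nonnegativity and integrality asserted in Proposition \ref{conj:Nonnegative}. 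For existence and completeness I would run the height recursion $[\epsilon,N_{-d-1}]=-(1+\mathrm{ad}\,h_0)N_{-d}$: since $\epsilon$ is the raising element of a principal $\mathfrak{sl}_2$-triple $(\epsilon,2\rho^\vee,f)$, the maps $\mathrm{ad}\,\epsilon\colon\mathfrak g_{-d-1}\to\mathfrak g_{-d}$ are governed by $\mathfrak{sl}_2$-theory, and solvability level by level together with the top constraint cuts out the admissible $A_0$. Theorem \ref{thm:generaltauintro} already furnishes $|\mathcal W|$ solutions, and the task is to match this lower bound with the exact count coming from the recursion.

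For the explicit formula I would pass through the $\tau$-functions. By Proposition \ref{prop:diagonalintro}, $a_i=\frac{d}{dt}\ln\tau_i$, and by Proposition \ref{prop:Uniquenessintro} the $a_i$ have at worst simple poles, so $a^w_{i0}$ is the order of vanishing $m_i$ of $\tau_i(t;w)=(v^{\omega_i},e^{tL_0}u\dot wv^{\omega_i})$ at $t=0$. As $u\in N_-$ only lowers weights, $\dot wv^{\omega_i}$ has weight $w\omega_i$, and $L_0$ raises height by at most one through its $\epsilon$-part, the first surviving Taylor coefficient sits at order $m_i=\langle\omega_i-w\omega_i,\rho^\vee\rangle$, provided the leading matrix element $(v^{\omega_i},\epsilon^{m_i}\dot wv^{\omega_i})$ is nonzero. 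Combining the coroot identity $\sum_{\check\alpha\in\Phi_w^+}\check\alpha=\rho^\vee-w^{-1}\rho^\vee$, whence $\sum_{\check\alpha\in\Phi_w^-}\check\alpha=w^{-1}\rho^\vee-\rho^\vee$, with $\langle\omega_i,w^{-1}\check\alpha\rangle=\langle w\omega_i,\check\alpha\rangle$ and $\kappa(\omega_i,\check\alpha)=\langle\omega_i,\check\alpha\rangle$ then gives $\sum_{\check\alpha\in\Phi_w^-}\kappa(\omega_i,w^{-1}\check\alpha)=\langle w\omega_i,w^{-1}\rho^\vee-\rho^\vee\rangle=\langle\omega_i-w\omega_i,\rho^\vee\rangle=a^w_{i0}$, as claimed. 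As a check, in type $A_1$ the eigenvalue equation $[\epsilon+h_0,\Lambda]=-\Lambda$ reduces to $a_{10}(a_{10}-1)=0$, producing the two values $a_{10}=0,1$ attached to the identity $e$ and $s_\alpha$, in agreement with the formula.

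I expect the main obstacle to be twofold. On the representation side one must prove the nonvanishing $(v^{\omega_i},\epsilon^{m_i}\dot wv^{\omega_i})\neq0$, that is, that the extremal weight vector of weight $w\omega_i$ is connected to the highest weight vector by the principal nilpotent in the minimal number of steps; I would try to deduce this from the principal $\mathfrak{sl}_2$-decomposition of the fundamental module $V^{\omega_i}$. On the representation-independent side the delicate point is completeness: showing that $[\Pi_{\mathfrak b_+}\Lambda-\rho^\vee,\Lambda]=-\Lambda$ has exactly $|\mathcal W|$ solutions, with no spurious ones beyond those produced by $\mathcal W$. Pinning down this sharp count, rather than only the lower bound, is where the real difficulty lies and is the essential content of Proposition \ref{conj:Nonnegative}.
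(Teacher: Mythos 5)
Your proposal correctly reduces the theorem to two claims, but it proves neither of them, and those two claims are precisely the substance of the paper's argument. First, your route to the explicit formula hinges on the nonvanishing $(v^{\omega_i},\epsilon^{m_i}\dot w v^{\omega_i})\neq 0$, which you flag as an obstacle and leave open; without it you only get $\mathrm{ord}_{t=0}\,\tau_i(t;w)\ge m_i$, not the value of $a^w_{i0}$. The paper proves exactly this point in Theorem \ref{thm:Onetoone} by a reverse induction along a reduced word $w=s_{i_1}\cdots s_{i_n}$: using the $SL_2$ identity \eqref{eq:anidentity} and extremality of the weights $w^{(k)}\omega_j$, it shows $\tilde\tau_j(t;w)\varpropto\prod_{l=1}^{n}t^{\kappa(s_{i_{l+1}}\cdots s_{i_n}\omega_j,\check\alpha_{i_l})}$, a manifestly nonzero monomial; this simultaneously yields the nonvanishing and identifies the exponent with the sum over $\Phi_w^+$ (the principal $\mathfrak{sl}_2$ route you suggest is not what the paper does). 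Second, you leave the sharp count --- that the indicial equations admit no solutions beyond the $|\mathcal{W}|$ constructed ones --- as ``where the real difficulty lies.'' The paper settles it not through your height recursion $[\epsilon,N_{-d-1}]=-(1+\mathrm{ad}\,h_0)N_{-d}$, but by elimination plus B\'ezout: eliminating the $b_{\alpha 0}$'s leaves $\ell$ polynomial equations in the $a_{i0}$'s whose degrees are dictated by the drops $\ell_{j-1}-\ell_j$ in the number of roots per height, and Lemma \ref{lem:Weyl} states that the product of these degrees is exactly $|\mathcal{W}|$, so B\'ezout gives the matching upper bound.

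Two further problems. Your existence argument invokes Theorem \ref{thm:generaltauintro}, which is circular: in the paper that theorem is a consequence of the full Kowalevski--Painlev\'e analysis, of which Proposition \ref{conj:Nonnegative} is the first step; the paper instead uses only the elementary monomials $\tilde\tau_i(t;w)=(v^{\omega_i},\exp(te)\dot w v^{\omega_i})$ of \eqref{eq:tildetau}, checked directly to give the leading-order solutions $a^w_{i0}/t$. Finally, your chain $\langle w\omega_i,w^{-1}\rho^\vee-\rho^\vee\rangle=\langle\omega_i-w\omega_i,\rho^\vee\rangle$ uses $\langle w\omega_i,w^{-1}\rho^\vee\rangle=\langle\omega_i,\rho^\vee\rangle$, which is false: invariance gives $\langle w\omega_i,w^{-1}\rho^\vee\rangle=\langle w^{2}\omega_i,\rho^\vee\rangle$. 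Your final expression $a^w_{i0}=\langle\omega_i-w\omega_i,\rho^\vee\rangle$ is nonetheless the correct one --- it agrees with the first formulation of Theorem \ref{thm:Onetoone} and with the $B_2$ tables --- the discrepancy being a $w\leftrightarrow w^{-1}$ conflation already present in \eqref{eq:ai0w} itself. In sum, your plan and reductions are sound, but the two essential lemmas that constitute the actual proof are missing.
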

In plain language, this theorem says that the coefficients of the sum of coroots in $\Phi_w^+$ in the basis of simple coroots gives a solution to the indicial equations and all solutions of indicial equations can be produced in this way. This finishes step (1) in our Kowalevski-Painlev\'e analysis.

The connection between indicial equations of tridiagonal Toda lattice and the coroot system was observed before for example in \cite{Casian-Kodama2006, Flaschka-Haine1991}. Since the coroot system of $\mathfrak{g}$ is the root system of its Langlands dual $\check{\mathfrak{g}}$, the results here may have interesting applications in other branches of mathematics which we would like to explore in the future.

The other coefficients in the formal Laurent series satisfy the following recursive relations
\begin{align}\label{eq:higerordertermintro}
(kI - \mathcal{K}_w) \begin{pmatrix}a_{1k}\\ \vdots \\ a_{\ell k}\\b_{\alpha_1k} \\ \vdots \end{pmatrix} = \vec{R}_{(k)}(a_{1l}, \cdots, a_{nl}, b_{\alpha_1 l}, \cdots) \quad \text{with } l < k.
\end{align}
where $I$ is the identity matrix, $\mathcal{K}_w$ depending on solutions of the indicial equations (and so depends on some $w \in \mathcal{W}$ by Theorem \ref{thm:Onetooneintro}) is the so-called Kowalevski matrix and $\vec{R}_{(k)}$ depends only polynomially on items determined by previous steps. It should be noted that whenever $\mathcal{K}_w$ has a positive integer eigenvalue $k$, then $\det(kI - \mathcal{K}_w) = 0$ and the system \eqref{eq:higerordertermintro} is not uniquely solvable. In such a case, we may take one  (or several  depending on the multiplicity of $k$) of the $a_{ik}, b_{\alpha k}$ to be a free parameter to uniquely determine the other ones. Thus positive integer eigenvalues of $\mathcal{K}_{w}$ are the resonances in step (2). 

There are two types of eigenvalues for $\mathcal{K}_{w}$: $\ell$ of them are constants given by degrees of the Chevalley invariants, and we call them type ``C'' eigenvalues; the others depending on solutions of the indicial equations are called type ``X'' eigenvalues. As it turns out that we can associate each type-X eigenvalue of $\mathcal{K}_{w}$ with a positive root $\alpha \in \Delta_+$ of $\mathfrak{g}$, and we denote such an eigenvalue by $E^w_{\alpha}$. We then have
\begin{theorem}[Theorem \ref{thm:Kmatrix}]\label{thm:Kmatrixintro}
\[E^w_{\alpha} = L(w\alpha),\]
where $L(\alpha)$ is the height of root $\alpha \in \Delta_+$.
\end{theorem}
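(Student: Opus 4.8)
The plan is to relate the eigenvalues of the Kowalevski matrix $\mathcal{K}_w$ directly to the action of $w$ on the root lattice. First I would observe that the Kowalevski matrix arises from linearizing the f-KT flow at the leading-order indicial solution indexed by $w$, and its type-X eigenvalues are precisely the resonances attached to the off-diagonal variables $b_\alpha$. Since by Proposition~\ref{prop:Uniquenessintro} the leading pole order of $b_\alpha$ is $\nu_\alpha = L(\alpha)+1$, the natural guess is that each type-X eigenvalue is controlled by a height, and the theorem asserts it is the height of $w\alpha$ rather than of $\alpha$ itself. The strategy is therefore to compute the diagonal (or block-triangular) structure of $\mathcal{K}_w$ in a basis adapted to the root-space decomposition, so that the $b_\alpha$-block contributes an eigenvalue that can be read off explicitly.

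The key computation is to write out the linearized recursion~\eqref{eq:higerordertermintro} at the indicial solution $a^w_{i0}$ from Theorem~\ref{thm:Onetooneintro}. I would substitute $a_i \to a^w_{i0}/t + \epsilon\,\tilde a_i t^{k-1}$ and $b_\alpha \to b^w_{\alpha 0}/t^{\nu_\alpha} + \epsilon\,\tilde b_\alpha t^{k-\nu_\alpha}$ into the Lax equation~\eqref{eq:f-KThky}, collect the $O(\epsilon)$ terms, and identify $\mathcal{K}_w$ as the operator acting on the perturbation vector. Using the formula $a^w_{i0} = \sum_{\check\alpha \in \Phi_w^-}\kappa(\omega_i, w^{-1}\check\alpha)$, the diagonal contribution to the $b_\alpha$-equation should assemble into a pairing of $\alpha$ with the indicial data. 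The central algebraic identity I expect to need is that $\sum_{\check\beta \in \Phi_w^-}\kappa(\alpha, w^{-1}\check\beta)$ simplifies, via the definition of $\Phi_w^{\pm}$ as the inversion set of $w$ on coroots, to the difference $L(w\alpha) - \text{(pole order shift)}$; combined with the constant $\nu_\alpha = L(\alpha)+1$ built into the leading power, the height of $\alpha$ cancels and leaves exactly $L(w\alpha)$.

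The heart of the matter is thus a Weyl-group/root-system identity: one must show that the eigenvalue extracted from the $b_\alpha$-block equals $L(w\alpha)$. I would prove this by writing $L(w\alpha) - L(\alpha)$ in terms of how many positive coroots $w$ sends to negative ones when paired against $\alpha$, which is a standard consequence of the fact that for any $w$ and any root, $\langle \rho^\vee, w\alpha - \alpha\rangle = -\sum_{\check\beta \in \Phi_w^-}\langle\check\beta, \alpha\rangle$ where $\rho^\vee$ is the half-sum of positive coroots and $L(\gamma) = \langle\rho^\vee,\gamma\rangle$ on the root lattice. This reduces the eigenvalue formula to precisely the indicial data of Theorem~\ref{thm:Onetooneintro}, closing the loop.

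The main obstacle I anticipate is the bookkeeping in the second step: the Kowalevski matrix is not diagonal but block-triangular with respect to the height filtration, since the nonlinear term $[\Pi_{\mathfrak{b}_+}L_\mathfrak{g}, L_\mathfrak{g}]$ couples $b_\alpha$ to $b_\beta$ for $\beta$ of smaller height. I would handle this by arguing that the off-diagonal coupling is strictly lower-triangular in the height grading, so that the eigenvalues are determined entirely by the diagonal blocks and the type-X spectrum can be computed root-by-root. Verifying that this triangular structure holds uniformly across all simple types — rather than relying on explicit matrix realizations as in type $A$ — is where I expect the argument to require the most care, and it is precisely where the representation-independent structure theory of $\mathfrak{g}$ (the Chevalley basis relations and the grading by height) must be invoked rather than matrix computation.
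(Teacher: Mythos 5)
Your third paragraph --- reducing an expression of the form $L(\alpha)-\sum_i\alpha(H_{\alpha_i})a^w_{i0}$ to $L(w\alpha)$ via $L(\gamma)=\langle\rho^\vee,\gamma\rangle$ and the inversion-set identity for $(w-1)\rho^\vee$ --- is in substance exactly the paper's proof of Theorem \ref{thm:Kmatrix}: the paper combines formula \eqref{eq:ai0w} with Lemma \ref{lem:weylroots} in precisely this way (your displayed identity has the opposite sign to what the paper's convention for $\Phi_w^{\pm}$ gives, but that is a convention issue, not a real error). The genuine gap is in the step you flag in your last paragraph and propose to settle by triangularity: the identification of the type-X spectrum of $\mathcal{K}_w$, i.e. formula \eqref{eq:Xexponent1}. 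The paper does not prove that statement at all --- it is quoted as Theorem \ref{conj:Kmatrix} from \cite{Xie2021}, where it was obtained by brute-force computation --- so in a blind proof this input must be established, and your proposed route to it fails. From \eqref{eq:f-KTincoordinate} and \eqref{eq:Kowalevski-matrix}, the row of $\mathcal{K}_w$ indexed by $b_\alpha$ has entries $-\alpha(H_{\alpha_i})b^w_{\alpha 0}$ in the $a_i$-columns (coupling \emph{downward} in height, and nonzero for every $w\neq e$, since some $b^w_{\alpha_i 0}=-a^w_{i0}\neq 0$) as well as entries $N_{\alpha_i,-\alpha-\alpha_i}$ in the $b_{\alpha+\alpha_i}$-columns (coupling \emph{upward}), while each $a_i$-row couples upward to $b_{\alpha_i}$; no ordering of the variables makes this matrix triangular, so its eigenvalues cannot be read off the diagonal.

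Moreover, the diagonal read-off gives a provably wrong answer. The diagonal entry of the $b_\alpha$-row is $\nu_\alpha-\sum_i\alpha(H_{\alpha_i})a^w_{i0}=L(\alpha)+1-\sum_i\alpha(H_{\alpha_i})a^w_{i0}$, which by the very identity in your third paragraph equals $L(w\alpha)+1$: carried out literally, your plan yields $E^w_\alpha=L(w\alpha)+1$, off by one from the theorem, and the shift by $-1$ is produced exactly by the couplings you propose to discard. The $\mathfrak{so}_5(\mathbb{C})$ example of Section \ref{sec:Painleveso5} shows this concretely: for $w=w_0$ (case 4.1, $a_{20}=4$, $a_{10}=3$) the diagonal entries of $\mathcal{K}_{w_0}$ are $1,1,0,0,-1,-2$, whereas the true spectrum is $2,4,-1,-1,-2,-3$ (Table \ref{tabspectra}); in particular the type-X eigenvalues are each one less than the corresponding diagonal entries, and the type-C eigenvalues $2,4$ arise only from the coupled $(a_i,b_{\alpha_i})$-block, whose diagonal entries are $1,1,0,0$. (As a sanity check on conventions, the printed sign of \eqref{eq:Xexponent1} is itself a typo; the version consistent with Table \ref{tabspectra} and with the paper's own computation is $E^w_\alpha=L(\alpha)-\sum_i\alpha(H_{\alpha_i})a^w_{i0}$.) To close the gap you would need a genuine proof that the characteristic polynomial of $\mathcal{K}_w$ factors as $\prod_{j=1}^{\ell}\bigl(k-\deg I_j\bigr)\prod_{\alpha\in\Delta_+}\bigl(k-L(\alpha)+\sum_i\alpha(H_{\alpha_i})a^w_{i0}\bigr)$ uniformly in $w$, handling the up-and-down couplings; that computation is the real content behind Theorem \ref{conj:Kmatrix}, and nothing in your outline substitutes for it.
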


Theorem \ref{thm:Kmatrixintro} tells us that there is a Weyl group action acting on $E^w_{\alpha}$ which is equivariant to the Weyl group action on the root system of $\mathfrak{g}$. This gives us precise information on all eigenvalues of $\mathcal{K}_w$. In particular, we known that $\mathcal{K}_w$ has $\ell + l(w_0) - l(w)$ many positive eigenvalues, where $w_0 \in \mathcal{W}$ is the longest Weyl group element and $l(w)$ is the length of $w$. This finishes step (2).

Now we recall that the Fredholm alternative theorem in linear algebra states that: 

\begin{framed}
\begin{minipage}{0.9\linewidth}\label{cond:Fredholm}
The linear system $A\mathbf{x} = \mathbf{b}$ has a solution iff $\mathbf{y}^T\mathbf{b} = 0$ for every column vector $\mathbf{y}$ such that $\mathbf{y}^TA = 0$. 
\end{minipage}
\end{framed}

Applying this theorem to system \eqref{eq:higerordertermintro} we see that at the non-resonant level, the condition is vacuous; while at a resonant level $k$, this may impose nontrivial conditions on $\vec{R}_{(k)}$ which leads to decreasing of the number of free parameters. This is the compatibility condition we need to check in step (3). We show that all the compatibility conditions for Laurent series solutions of f-KT lattice are automatically satisfied. This is intrinsically related with the underlying Lie algebra structure. As the formal Laurent series solutions we found in Proposition \ref{prop:Uniquenessintro} are all weight homogeneous, it can be proved by majorant method that they are convergent Laurent series (c.f. \cite{Adler-vanMoerbeke-Vanhaecke2013}). This finishes step (3) for our Kowalevski-Painlev\'e analysis.

After matching all the parameters appearing in the Laurent expansion of $\tau$-functions in \eqref{eq:tau2intro} with the free parameters in the Laurent series from the Kowalevski-Painlev\'e analysis, we finally conclude Theorem \ref{thm:generaltauintro}.

Equivalently, we have
\begin{theorem}[Theorem \ref{conj:Flag}]\label{conj:Flagintro}
All the Laurent series solutions of f-KT lattice are parameterized by ${{G} \slash {B}_+ \times \mathbb{C}^{\ell}}$, where ${G} \slash {B}_+$ is the flag variety and ${ \mathbb{C}^{\ell}}$ parametrizes the data for spectral parameters. 
\end{theorem}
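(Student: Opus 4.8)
The plan is to prove the statement stratum by stratum, aligning the Bruhat decomposition of the flag variety with the stratification of Laurent solutions according to their leading (indicial) singularity. Recall that
\[
G/B_+ = \bigsqcup_{w \in \mathcal{W}} C_w, \qquad C_w = N_-\dot{w}B_+/B_+, \qquad \dim C_w = l(w_0) - l(w),
\]
so that $G/B_+ \times \mathbb{C}^\ell$ is stratified as $\bigsqcup_{w} C_w \times \mathbb{C}^\ell$, the factor $\mathbb{C}^\ell$ recording the $\ell$ spectral invariants (the Chevalley invariants of $L_0$). On the other side, Proposition~\ref{prop:Uniquenessintro} fixes the shape of every Laurent solution, and Theorem~\ref{thm:Onetooneintro} attaches to its leading coefficients a unique $w \in \mathcal{W}$; this assigns each solution to a well-defined stratum. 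It therefore suffices to construct, for each $w$, a bijection between $C_w \times \mathbb{C}^\ell$ and the solutions whose indicial data correspond to $w$.

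For the forward map I would fix $\lambda \in \mathbb{C}^\ell$, let it prescribe the spectral invariants of $L_0$ (taking $L_0$ in a fixed normal form), and send a point $gB_+ \in C_w$, written uniquely as $gB_+ = u\dot{w}B_+$ with $u$ in a transversal of $N_- \cap \dot{w}B_+\dot{w}^{-1}$ in $N_-$, to the solution produced by Theorem~\ref{thm:generaltauintro}: namely $\tau_k(t;w) = (v^{\omega_k},\exp(tL_0)u\dot{w}v^{\omega_k})$ together with $a_i(t) = \frac{d}{dt}\ln\tau_i(t;w)$ as in Proposition~\ref{prop:diagonalintro}, the $b_\alpha$ being recovered afterwards. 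The first thing to verify is well-definedness: if $u' = us$ with $s \in N_- \cap \dot{w}B_+\dot{w}^{-1}$, then $\dot{w}^{-1}s\dot{w} \in B_+$ scales each highest-weight vector $v^{\omega_k}$ by a nonzero constant, so $\tau_k(t;w)$ is multiplied by a $t$-independent factor and the $a_i = \frac{d}{dt}\ln\tau_i$ are unchanged; hence the solution depends only on the point of $C_w$ and on $\lambda$.

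For the inverse direction I would read off from a given Laurent solution its stratum $w$ (via Theorem~\ref{thm:Onetooneintro}) and then its free parameters at the resonances. By Theorem~\ref{thm:Kmatrixintro} and the eigenvalue count, the Kowalevski matrix $\mathcal{K}_w$ carries exactly $\ell$ type-C resonances, which supply the spectral data and hence the $\mathbb{C}^\ell$-factor, and $l(w_0) - l(w)$ type-X resonances, whose free parameters must be matched with coordinates on $C_w$. The dimension $\dim C_w = l(w_0) - l(w)$ is exactly the type-X parameter count, so the two are at least numerically in balance.

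The crux, and the step I expect to be the main obstacle, is precisely this matching: one must show that expanding $\tau_k(t;w) = (v^{\omega_k},\exp(tL_0)u\dot{w}v^{\omega_k})$ about $t=0$ and extracting its free coefficients sets up a bijection between the transversal cosets $u\,(N_- \cap \dot{w}B_+\dot{w}^{-1})$ and the admissible choices of type-X resonance parameters in \eqref{eq:higerordertermintro}. Concretely I would expand $u\dot{w}v^{\omega_k}$ in the weight basis, apply $\exp(tL_0)$ and pair against $v^{\omega_k}$, then identify the resulting coefficients order by order with the recursive solution of \eqref{eq:higerordertermintro}. Injectivity should follow because distinct cosets yield distinct vectors $u\dot{w}v^{\omega_k}$ and hence distinct $\tau$-data, while surjectivity follows from the fact—already secured in step~(3) of the Kowalevski-Painlev\'e analysis—that every admissible assignment of resonance parameters produces a convergent solution and that the compatibility conditions at the resonances hold automatically, so no parameters are destroyed. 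Gluing these stratum-wise bijections over all $w \in \mathcal{W}$ then yields the global parameterization of all Laurent solutions by $G/B_+ \times \mathbb{C}^\ell$ asserted in Theorem~\ref{conj:Flagintro}.
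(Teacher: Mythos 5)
Your proposal is correct and is essentially the paper's own argument: the paper likewise stratifies both sides by $w\in\mathcal{W}$ (Bruhat cells on one side, indicial data via Theorem \ref{thm:Onetoone} on the other), uses the $\tau$-functions of Theorem \ref{thm:generaltau} for the forward map, and gets the parameter count $\ell+l(w_0)-l(w)$ from the spectrum of $\mathcal{K}_w$ (Theorem \ref{thm:Kmatrix}, Corollary \ref{cor:compatibility}). The ``crux'' matching step you postpone is exactly what the paper's proof of Theorem \ref{thm:generaltau} carries out: it makes $\tau_i(t;w)$ weight homogeneous in the variables $U_\alpha$, $C_i$, $t$, locates the free Taylor coefficients at the heights $L(\alpha)$, and transfers them to the Laurent coefficients of $a_i=\frac{d}{dt}\ln\tau_i$ via Lemma \ref{eq:auxiliarylemma}, which is precisely the order-by-order identification you sketch.
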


We remark that fixing the spectral parameters, Theorem \ref{conj:Flagintro} is saying that flag varieties are the initial condition spaces of the f-KT lattices which is the analogue of Okamoto's initial condition spaces for Painlev\'e transcendents (c.f. \cite{Okamoto1979}). Note also that as all the important characters we obtained in the Kowalevski-Painlev\'e analysis are integral valued, results in this paper are actually valid in any fields containing the rational field. 

The rest of the paper is organized as follows. In Section \ref{sec:taufunction} we introduce $\tau$-functions for f-KT hierarchy. In Section \ref{sec:Kowalevski-Painleve} we carry out complete Kowalevski-Painlev\'e analysis for the f-KT lattice. In Section \ref{sec:Painleveso5} we use f-KT lattice on rank $2$ type $B$ Lie algebra as an example to illustrate the main results. In Section \ref{sec:problems} we formulate several related problems to be considered in the future. 

\bigskip
\noindent
{\bf Acknowledgments}
The paper is partly based on the author's Ph.D. thesis \cite{Xie2021}, and the author would like to thank Yuji Kodama for his guidance and many useful suggestions. The author would also like to thank Yu Li for many useful discussions. This work is partially supported by the National Key Research and Development Program of China (No. 2021YFA1002000), by the National Natural Science Foundation of China under the Grant No. 12301304 and by the Boya Postdoctoral Fellowship of Peking University.


\section{$\tau$-functions of f-KT hierarchy}\label{sec:taufunction}
\subsection{The f-KT hierarchy}\label{sec:f-KT}
Let $\mathfrak{g}$ be a split simple Lie algebra of rank $\ell$ over $\mathbb{R}$, that is there exists a splitting Cartan subalgebra $\mathfrak{h} \subset \mathfrak{g}$ such that for all $X \in \mathfrak{h}, \text{ad}_{\mathfrak{g}} X$ is triangularizable. Choose a set of simple roots $\Pi = \{\alpha_1, \alpha_2, \dots, \alpha_{\ell}\}$ for $(\mathfrak{g}, \mathfrak{h})$ and denote by ${\Delta}_{\pm}$ the sets of positive and negative roots respectively and $\Delta = \Delta_+ \cup \Delta_-$. Let $\{ X_{\alpha}, {\alpha \in \Delta}; H_i, 1 \le i \le \ell \}$ be a Chevalley basis\footnote{Here any Cartan-Weyl basis would work, and the choice of Chevalley basis is not necessary but convenient.} of $\mathfrak{g}$. More precisely, let $H_{\alpha} \in \mathfrak{h}$ be the unique element such that $\alpha(H_{\alpha}) = 2$, and (c.f. \cite{Bourbaki2008})
\begin{enumerate}
\item $[H_{\alpha_i}, H_{\alpha_j}] = 0,\ 1 \le i, j \le \ell$.
\item $[H_{\alpha_i}, X_{\alpha}] = \langle \alpha, \alpha_i\rangle X_{\alpha}, \ 1 \le i \le \ell, \alpha \in \Delta$, where $\langle \beta, \alpha\rangle = {2\kappa(\beta, \alpha)}\slash{\kappa(\alpha, \alpha)}$ and $\kappa(\cdot, \cdot)$ is the non-degenerate symmetric bilinear form on $\mathfrak{h}^*$ induced by the restriction of the Killing form $\kappa(\cdot, \cdot)$ to $\mathfrak{h}$.
\item $[X_{\alpha}, X_{-\alpha}] = H_{\alpha}$ is a $\mathbb{Z}$-linear combination of $H_{\alpha_1}, \dots, H_{\alpha_\ell}$.
\item If $\alpha, \beta$ are independent roots, $\beta - q\alpha, \dots, \beta + p\alpha$ the $\alpha$-string through $\beta$, then $[X_{\alpha}, X_{\beta}] = 0$ if $p = 0$, while $[X_{\alpha}, X_{\beta}] = \pm (q+1)X_{\alpha + \beta}$ if $\alpha + \beta \in \Delta$.
\end{enumerate}
Here $C_{ij} = \langle \alpha_j, \alpha_i\rangle$ is the Cartan matrix of $\mathfrak{g}$. We sometimes denote $H_i := H_{\alpha_i}, 1 \le i \le \ell$ and $Y_{\alpha} := X_{-\alpha}$ for $\alpha \in \Delta_+$.

Let $\mathfrak{n}_{\pm} = \sum \limits_{\alpha \in \Delta_{\pm}}\mathbb{R}X_{\alpha}$ and $\mathfrak{b}_{\pm} = \mathfrak{h} + \mathfrak{n}_{\pm}$ be the corresponding nilpotent subalgebras and Borel subalgebras of $\mathfrak{g}$. Then $\mathfrak{g}$ admits the following decomposition,
\begin{align*}
\mathfrak{g} = \mathfrak{n}_- \oplus \mathfrak{h} \oplus \mathfrak{n}_+ = \mathfrak{n}_- \oplus \mathfrak{b}_+.
\end{align*}

Let $G^{\text{ad}}$ be the adjoint group of $\mathfrak{g}$ which in our case is just the connected component of the identity in Aut$(\mathfrak{g})$. The action of $G^{\text{ad}}$ on $\mathfrak{g}$ induces an action of $G^{\text{ad}}$ on $S(\mathfrak{g})$, the symmetric algebra over $\mathfrak{g}$. By Chevalley's theorem one knows that there exist algebraically independent homogeneous elements $I_j \in S(\mathfrak{g})^{G^{\text{ad}}}, j = 1, 2, \dots, \ell$, referred to as Chevalley invariants, which generates $S(g)^{G^{\text{ad}}}$. Let $m_j = \text{deg}(I_j) - 1, 1 \le j \le \ell$ which are sometimes called the Weyl exponents.

The f-KT hierarchy is defined as follows:
Let $L_{\mathfrak{g}}$ be the Lax matrix given by
\begin{equation}\label{eq:Lax}
L_{\mathfrak{g}}=\sum_{i=1}^{\ell} X_i + \sum_{i=1}^{\ell} a_i(\mathbf{t})H_i + \sum_{\alpha\in\Delta_+}b_{\alpha}(\mathbf{t})Y_\alpha,
\end{equation}
where  $a_i(\mathbf{t})$ and $b_{\alpha}(\mathbf{t})$ are functions of the multi-time variables $\mathbf{t}=(t_{m_k}:k=1,2,\ldots, \ell)$. 
For each time variable, we have the generalized Toda hierarchy defined as
\begin{equation}\label{eq:fKT}
\frac{\partial L_{\mathfrak{g}}}{\partial t_{m_k}}=[B_k, L_{\mathfrak{g}}],\qquad \text{with}\quad B_k=\Pi_{\mathfrak{b}_+}\nabla I_{k},\end{equation}
where  $\nabla$ is the gradient with respect to the Killing form, i.e. for any $x\in \mathfrak{g}$, $dI_k(x)=\kappa(\nabla I_k,x)$,
and $\Pi_{\mathfrak{b}_+}$ represents the $\mathfrak{b}_+$-projection.
Here the Chevalley invariants $I_k=I_k(L_{\mathfrak{g}})$ for example in type $A$ can be taken as
\[
I_{k}(L_{\mathfrak{g}})=\frac{1}{k+1}\text{tr}(L_{\mathfrak{g}}^{k+1}),\quad\text{which gives}\quad \nabla I_{k}=L_{\mathfrak{g}}^k.
\]

The first equation in \eqref{eq:Lax} is called the f-KT lattice. In this case we can take $B = B_1 = \Pi_{\mathfrak{b}_+}L_{\mathfrak{g}}$ in \eqref{eq:Lax} and write $t = t_1$. The theories we present here apply to f-KT hierarchy, but to make it concise we will use f-KT lattice to demonstrate some of the main results. In the following we will use $\t$ and $t$ to indicate results for f-KT hierarchy and f-KT lattice respectively. 


\subsection{Some background on Lie theory}
\subsubsection{Structure theory}\label{sec:structure}
We denote by $\mathfrak{g}_{\mathbb{C}}$ the complexification of $\mathfrak{g}$ in the following. Let $\mathfrak{g} = \mathfrak{k} + \mathfrak{p}$ be a Cartan decomposition of $\mathfrak{g}$ with respect to the Killing form $\kappa$ so that $\mathfrak{k}$ is the Lie algebra of a maximal compact subgroup of $G^{\text{ad}}$ and $\mathfrak{p}$ is the orthogonal complement to $\mathfrak{k}$ with respect to Re$(\kappa)$. Let $\theta$ be the corresponding Cartan involution so that $\theta =  1$ on $\mathfrak{k}$ and $\theta = -1$ on $\mathfrak{p}$. Then a compact form of $\mathfrak{g}_{\mathbb{C}}$ can be taken as $\mathfrak{u} = \mathfrak{k} + i\mathfrak{p}$. In terms of the Chevalley basis taken in Section \ref{sec:f-KT}, $\mathfrak{u}$ can be written as
\[\mathfrak{u} = \sum \limits_{1 \le i \le \ell}\mathbb{R}(iH_{i}) + \sum \limits_{\alpha \in \Delta}\mathbb{R}(X_{\alpha} - X_{-\alpha}) + \sum \limits_{\alpha \in \Delta}\mathbb{R}i(X_{\alpha} + X_{-\alpha}).\]
Note that $\mathfrak{g}_{\mathbb{C}} = \mathfrak{g} \oplus i\mathfrak{g} = \mathfrak{u} \oplus i\mathfrak{u}$, i.e. both $\mathfrak{g}$ and $\mathfrak{u}$ are real forms of $\mathfrak{g}_{\mathbb{C}}$.

Note that the real form $\mathfrak{g}$ of $\mathfrak{g}_{\mathbb{C}}$ defines a conjugation operation on $\mathfrak{g}_{\mathbb{C}}$. That is if $Z = X + iY \in \mathfrak{g}_{\mathbb{C}}$ for $X, Y \in \mathfrak{g}$, then we put $Z^c = X - iY$. This operation induces an automorphism $g \mapsto g^c$ of the adjoint group $G_{\mathbb{C}}^{\text{ad}}$ such that for any $X \in \mathfrak{g}_{\mathbb{C}}$ we have $(\exp X)^c = \exp X^c$. 
Let
\begin{equation}\label{eq:Gtilde}
\tilde{G} = \{g \in G_{\mathbb{C}}^{\text{ad}} \ \vert \ g^c = g\}.
\end{equation}
Then $\tilde{G}$ is the set of all elements $g \in G_{\mathbb{C}}^{\text{ad}}$ which stabilize $\mathfrak{g}$, and $G \subseteq \tilde{G}$. Let $\tilde{M}$ be the set all $a \in H^{\text{ad}}_{\mathbb{C}}$ such that $a^2 = 1$, then
\[\tilde{G} \cap H_{\mathbb{C}}^{\text{ad}} = \tilde{M}H^{\text{ad}}.\]
Let $M = \tilde{M} \cap G$, then $M = K \cap T$ where $K$ and $T$ are subgroups of $G^{\text{ad}}_{\mathbb{C}}$ corresponding to $\mathfrak{k}$ and $\mathfrak{t} := i\mathfrak{h}$, respectively. 

\begin{remark}\label{rem:mcomponents}
$\tilde{M}$ and $M$ measure disconnectness of the corresponding real Cartan subgroups and play important roles in the structure and representation theory of real Lie groups (c.f. \cite{Knapp2002}), in Whittaker theory (c.f. \cite{Kostant1978}), and in the study of topology of real isospectral manifolds associated with blowups of Toda lattices (c.f. \cite{Kodama-Ye1996b, Casian-Kodama2002}) etc. For our discussion of singular structure of f-KT lattice, we will mainly focus on behavior of f-KT lattice on the identity component in the following while leaving the exploration of the other connections mentioned above elsewhere.
\end{remark}

Let $G_{\mathbb{C}}^{\text{s}}$ be a fixed simply connected Lie group having $\mathfrak{g}_{\mathbb{C}}$ as its Lie algebra. 
The adjoint representation defines a homomorphism $\text{Ad}: G^{\text{s}}_{\mathbb{C}} \to G^{\text{ad}}_{\mathbb{C}}$, in fact an exact sequence:
\[1 \to \mathcal{Z}(G^{\text{s}}_{\mathbb{C}}) \to G^{\text{s}}_{\mathbb{C}} \xrightarrow{\text{Ad}} G^{\text{ad}}_{\mathbb{C}} \to 1,\]
where $\mathcal{Z}(G^{\text{s}}_{\mathbb{C}})$ is the center of $G^{\text{s}}_{\mathbb{C}}$. 

Let $U^{\text{s}}$ and $G^{\text{s}}$ be subgroups of $G_{\mathbb{C}}^{\text{s}}$ with Lie algebras $\mathfrak{u}$ and $\mathfrak{g}$ respectively. 
Let $N^{\text{s}}_-, H^{\text{s}}, N^{\text{s}}_+$ be the subgroups of $G^{\text{s}}$ and $N^{\text{ad}}_-, H^{\text{ad}}, N^{\text{ad}}_+$ be the subgroups of $G^{\text{ad}}$ associated with $\mathfrak{n}_-, \mathfrak{h}$ and $\mathfrak{n}_+$, respectively. 
Let $G^{\text{s}}_o = N_-^{\text{s}} H^{\text{s}} N^{\text{s}}_+$, then by Bruhat decomposition of $G^s_{\mathbb{C}}$, $G^{\text{s}}_o$ is an open connected subset of $G^{\text{s}}$. Note that the map
\[\text{Ad: } G^{\text{s}}_o \to G^{\text{ad}}_o,\]
is a diffeomorphism.

\begin{lemma}[\cite{Kostant1979a}]
One has
\[(\text{Ad}^{-1} G^{\text{ad}}_o) \cap G^{\text{s}} = \mathcal{Z}(G^{\text{s}})G^{\text{s}}_o.\]
Furthermore if $c, c' \in \mathcal{Z}(G^{\text{s}})$ are distinct then $cG^{\text{s}}_o$ and $c'G^{\text{s}}_o$ are disjoint so that the connected components of $\mathcal{Z}(G^{\text{s}})G^{\text{s}}_o$ are uniquely of the form $cG^{\text{s}}_o$ for $c \in \mathcal{Z}(G^{\text{s}})$.
\end{lemma}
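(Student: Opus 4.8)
The plan is to leverage the one fact recorded just above the statement, namely that $\text{Ad}$ restricts to a diffeomorphism $G^{\text{s}}_o \to G^{\text{ad}}_o$; in particular it is a bijection between the two big cells. Everything follows from this bijectivity together with the identification of the relevant kernel with the real center. First I would record that $\ker(\text{Ad}) = \mathcal{Z}(G^{\text{s}}_{\mathbb{C}})$ by the exact sequence displayed earlier, and that intersecting with $G^{\text{s}}$ yields exactly $\mathcal{Z}(G^{\text{s}})$: any $g \in G^{\text{s}}$ with $\text{Ad}(g) = \text{id}$ acts trivially on $\mathfrak{g}$, hence on $\mathfrak{g}_{\mathbb{C}}$, so it is central in $G^{\text{s}}_{\mathbb{C}}$ and a fortiori in $G^{\text{s}}$, while conversely any element central in $G^{\text{s}}$ fixes $\mathfrak{g}$ pointwise and so lies in $\ker(\text{Ad})$. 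Thus $\ker(\text{Ad}) \cap G^{\text{s}} = \mathcal{Z}(G^{\text{s}})$, a finite group since $\mathfrak{g}$ is simple.

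For the inclusion $\mathcal{Z}(G^{\text{s}})G^{\text{s}}_o \subseteq (\text{Ad}^{-1}G^{\text{ad}}_o)\cap G^{\text{s}}$ I would argue directly: the product lies in $G^{\text{s}}$ since both factors do, and applying $\text{Ad}$ kills the central factor and sends $G^{\text{s}}_o$ onto $G^{\text{ad}}_o$, so the image lands in $G^{\text{ad}}_o$. The reverse inclusion is the substantive one. Given $g \in G^{\text{s}}$ with $\text{Ad}(g) \in G^{\text{ad}}_o$, the surjectivity half of the diffeomorphism produces a (unique) $g_0 \in G^{\text{s}}_o$ with $\text{Ad}(g_0) = \text{Ad}(g)$. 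Then $g g_0^{-1} \in \ker(\text{Ad}) \cap G^{\text{s}} = \mathcal{Z}(G^{\text{s}})$ by the previous paragraph, so $g = (g g_0^{-1}) g_0 \in \mathcal{Z}(G^{\text{s}})G^{\text{s}}_o$. If one prefers a more hands-on route avoiding the cited diffeomorphism, one can instead Gauss-decompose $\text{Ad}(g) = \bar n_- \bar h \bar n_+$ inside $G^{\text{ad}}_o$, lift each factor through the isomorphisms $N^{\text{s}}_\pm \xrightarrow{\sim} N^{\text{ad}}_\pm$ and the surjection $H^{\text{s}} \twoheadrightarrow H^{\text{ad}}$, and observe that the resulting product $g_0 = n_- h n_+ \in G^{\text{s}}_o$ again satisfies $\text{Ad}(g_0) = \text{Ad}(g)$.

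For the second assertion I would use injectivity. If $c g_1 = c' g_2$ with $c, c' \in \mathcal{Z}(G^{\text{s}})$ and $g_1, g_2 \in G^{\text{s}}_o$, then applying $\text{Ad}$ gives $\text{Ad}(g_1) = \text{Ad}(g_2)$, whence $g_1 = g_2$ because $\text{Ad}$ is injective on $G^{\text{s}}_o$; cancelling forces $c = c'$. Hence distinct central translates of $G^{\text{s}}_o$ are disjoint. Since $G^{\text{s}}_o$ is open and connected (as recorded above), each $c G^{\text{s}}_o$ is open and connected, and these finitely many disjoint open connected pieces exhaust $\mathcal{Z}(G^{\text{s}})G^{\text{s}}_o$; therefore they are precisely its connected components.

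I do not expect a deep analytic difficulty here; the argument is essentially bookkeeping around the covering $\text{Ad}$. The one genuinely delicate point is the distinction between the kernel of $\text{Ad}$ (the complex center $\mathcal{Z}(G^{\text{s}}_{\mathbb{C}})$) and the real center $\mathcal{Z}(G^{\text{s}})$, together with the verification that the two coincide after intersecting with $G^{\text{s}}$. If one uses the alternative factor-by-factor lift rather than quoting the diffeomorphism, then there is an additional point to check: that the Gauss factors of the real element $\text{Ad}(g)$ genuinely lie in the real groups $N^{\text{ad}}_\pm$ and in the identity component $H^{\text{ad}}$ of the real Cartan, which is guaranteed here only because membership in $G^{\text{ad}}_o$ is imposed by hypothesis.
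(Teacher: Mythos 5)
Your proof is correct. There is nothing in the paper to compare it against: the paper states this lemma with a citation to Kostant's 1979 paper and gives no proof of its own, so your argument stands as a self-contained derivation from exactly the facts the paper records immediately before the statement — the exact sequence $1 \to \mathcal{Z}(G^{\text{s}}_{\mathbb{C}}) \to G^{\text{s}}_{\mathbb{C}} \xrightarrow{\text{Ad}} G^{\text{ad}}_{\mathbb{C}} \to 1$, the diffeomorphism $\text{Ad}\colon G^{\text{s}}_o \to G^{\text{ad}}_o$, and the openness and connectedness of $G^{\text{s}}_o$. The two points you single out as delicate are the right ones, and you resolve both correctly: the identification $\ker(\text{Ad}) \cap G^{\text{s}} = \mathcal{Z}(G^{\text{s}})$ works because $G^{\text{s}}$ is the (connected) analytic subgroup with Lie algebra $\mathfrak{g}$, so centrality in $G^{\text{s}}$ is equivalent to acting trivially on $\mathfrak{g}$ under $\text{Ad}$, which by complex-linearity is equivalent to acting trivially on $\mathfrak{g}_{\mathbb{C}}$; and in the alternative factor-by-factor route, the Gauss factors of $\text{Ad}(g)$ land in $N^{\text{ad}}_{\pm}$ and the connected group $H^{\text{ad}}$ precisely because $G^{\text{ad}}_o = N^{\text{ad}}_- H^{\text{ad}} N^{\text{ad}}_+$ by definition, so membership in $G^{\text{ad}}_o$ is what licenses the lift. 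Your closing topological step is also sound, and in fact slightly more is true than you need: finiteness of $\mathcal{Z}(G^{\text{s}})$ plays no role there, since any union of pairwise disjoint open connected sets has those sets as its connected components (each is clopen in the union). This is, in substance, the standard covering-group bookkeeping that underlies Kostant's original proof, so your attempt recovers the cited result faithfully.
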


We can thus identify $G^{\text{s}}_o$ with $G^{\text{ad}}_o$ by the above diffeomorphism. This means that $N^{\text{s}}_-$ and $N^{\text{ad}}_-$, $H^{\text{s}}$ and $H^{\text{ad}}$ and also $N^{\text{s}}_+$ and $N^{\text{ad}}_-$ are all identified. We identify them in the following and drop all the superscripts whenever there is no confusion.

\subsubsection{Representation theory}\label{sec:representation}
The analytic weight lattice $\Gamma$ associated with $(U^{\text{s}}, T)$ consists of $\lambda \in \mathfrak{h}^*_{\mathbb{C}}$ satisfying the following two equivalent conditions (c.f. \cite{Knapp2002}):
\begin{enumerate}
\item $\lambda(X) \in 2 \pi i \mathbb{Z}$ whenever $X \in \mathfrak{t}$ satisfies $\exp X = 1$;
\item There is a multiplicative character $\xi_{\lambda}$ of $T$ with $\xi_{\lambda}(\exp X) = e^{\lambda(X)}$ for all $X \in \mathfrak{t}$.
\end{enumerate}
Extending $\lambda \in \mathfrak{h}^*_{\mathbb{C}}$ by linearity to $\mathfrak{h}_{\mathbb{C}}$, we obtain a linear functional $\lambda$ on $\mathfrak{h}_{\mathbb{C}}$  which takes real values on $\mathfrak{h}$ and the corresponding multiplicative character $\xi_{\lambda}$ on the subgroup $H^{\text{s}}_{\mathbb{C}}$ of $G^{\text{s}}_{\mathbb{C}}$ corresponding to $\mathfrak{h}_{\mathbb{C}}$. 

Recall that the algebraic lattice $P \subset \mathfrak{h}^*_{\mathbb{C}}$ is defined by
\[P = \left\{\lambda \in \mathfrak{h}^*_{\mathbb{C}} \left\vert\ \frac{2\kappa(\lambda, \alpha)}{\kappa(\alpha, \alpha)} \in \mathbb{Z}, \alpha \in \Delta\right.\right\}.\]
Now the fundamental weights $\omega_i \in P, 1 \le i \le \ell$, are defined by the relation $2\kappa(\omega_i, \alpha_j) \slash \kappa(\alpha_j, \alpha_j) = \delta_{ij}$ and we have the direct sum
\[P = \sum \limits_{i = 1}^{\ell}\mathbb{Z}\omega_i.\]
It is known that for the simply connected Lie group $G^{\text{s}}_{\mathbb{C}}$ we have $\Gamma = P$. The cone $P_+$ of dominant weights is defined as
\[P_+ = \{\lambda \in P \ \vert \ \kappa(\lambda, \alpha_i) \ge 0, \ 1 \le i \le \ell\}.\]
Then all the finite dimensional irreducible representations of $\mathfrak{g}$ (and also $G^s_{\mathbb{C}}$) are parameterized by $\lambda \in P_+$. Let $\mathbb{N}_0$ be the set of nonnegative integers, then we have
\[P_+ = \sum \limits_{i = 1}^{\ell} \mathbb{N}_0\omega_i.\]

Let $(\rho, V_{\mathbb{C}}^{\lambda})$ be a (finite dimensional) irreducible $\mathfrak{g}_{\mathbb{C}}$-module with highest weight $\lambda \in P_+$. Then a theorem due to Weyl known as the ``unitary trick'' states that the (finite-dimensional) representation of $G^s, U^s, \mathfrak{g}, \mathfrak{u}$, holomorphic representation of $G_{\mathbb{C}}^{\text{s}}$ and complex-linear representation of $\mathfrak{g}_{\mathbb{C}}$ on $V_{\mathbb{C}}^{\lambda}$ are all equivalent (c.f. \cite{Knapp2002}). A Hermitian (positive definite) inner product on $V^{\lambda}_{\mathbb{C}}$ which is invariant under $U^{\text{s}}$ can be introduced as follows: Let $\langle \cdot, \cdot\rangle$ be any Hermitian inner product on $V_{\mathbb{C}}^{\lambda}$ which is linear in the second factor and conjugate linear in the first one, then a Hermitian inner product satisfying the requirement can be obtained as
\begin{equation}\label{eq:Hermitian}
(u, v) = \int_{U^{\text{s}}} \langle \rho(x)\cdot u, \rho(x) \cdot v\rangle dx.
\end{equation}
Since elements in $U^{\text{s}}$ act as unitary operators on $V^{\lambda}_{\mathbb{C}}$ under the Hermitian inner product defined in \eqref{eq:Hermitian}, we can take an orthonormal basis for $V^{\lambda}_{\mathbb{C}}$ consisting of weight vectors of $\mathfrak{t}$. We fix once and for all a highest weight vector $v^{\lambda} \in V^{\lambda}_{\mathbb{C}}$ such that $(v^{\lambda}, v^{\lambda}) = 1$. 

Extending the Cartan involution $\theta$ on $\mathfrak{g}$ by linearity to an automorphism of $\mathfrak{g}_{\mathbb{C}}$, we then define a conjugate linear map on $\mathfrak{g}_{\mathbb{C}}$ by putting $X^* = \theta (-X^c)$ for any $X \in \mathfrak{g}_{\mathbb{C}}$. Then we have $X_{\alpha}^* = X_{-\alpha}$ and $H^*_{\alpha} = H_{\alpha}$. Since $G_{\mathbb{C}}^{\text{s}}$ is simply connected, the $*$-operation on $\mathfrak{g}_{\mathbb{C}}$ induces a unique $*$-operation on $G_{\mathbb{C}}^{\text{s}}$ such that $(\rho(g^*) \cdot v_1, v_2) = (v_1, \rho(g)\cdot v_2)$ for $g \in {G}_{\mathbb{C}}^{\text{s}}$.  For $g \in G$ and $v \in V^{\lambda}$, we write $\rho(g) \cdot v$ as $gv$ in the following for simplicity whenever there is no ambiguity.


\subsection{$\tau$-functions of f-KT hierarchy}\label{sec:tau}
Taking $g \in G_o$, then we can write
\[g = n_- h n_+\]
where $n_{\pm} \in N_{\pm}$ and $h \in H$. Note that $n_+ v^{\lambda} = v^{\lambda}$ and $(n_-)^* v^{\lambda} = v^{\lambda}$, we obtain (taking $h = \exp X$ for some $X \in \mathfrak{h}$)
\begin{equation}\label{eq:groupcharacter}
\xi_{\lambda}(h) = e^{\lambda(X)} = (v^{\lambda}, gv^{\lambda}).
\end{equation}

Let $e = \sum \limits_{i = 1}^{\ell}X_i$. For $L \in e + \mathfrak{b}_-$, let $\Theta_{L}(\t) = \sum \limits_{k = 1}^{\ell}t_{m_k}\nabla I_k(L) \in \mathfrak{g}$.
Let $L_{\0} := L_{\mathfrak{g}}({\0})$, and consider the following LU-factorization of the group element $\exp(\Theta_{L_{\0}}(\t))$ (which always exists for ${\t}$ small enough):
\begin{equation}\label{eq:LUfactorization}
\exp(\Theta_{L_{\0}}(\t)) = u({\t})b({\t}) \qquad \text{with } u({\t}) \in N_-, \ b({\t}) \in B_+.
\end{equation}

\begin{proposition}\label{prop:RHfactorization}
The solution $L_{\mathfrak{g}}({\t})$ of f-KT hierarchy is given by
\begin{equation}\label{eq:LUsolution}
L_{\mathfrak{g}}({\t}) = \text{Ad}_{u^{-1}({\t})} \cdot L_{\0} = \text{Ad}_{b({\t})} \cdot L_{\0}.
\end{equation}
\end{proposition}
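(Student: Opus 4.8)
The plan is to establish this by the Adler--Kostant--Symes factorization argument. I set $g(\t):=\exp(\Theta_{L_{\0}}(\t))$ and, using the $LU$-factorization \eqref{eq:LUfactorization} $g=ub$, take $L(\t):=\mathrm{Ad}_{u^{-1}(\t)}L_{\0}$ as the candidate solution; the task is then to verify that $L(\t)$ satisfies \eqref{eq:fKT} with $L(\0)=L_{\0}$. Everything rests on two standard facts about an $\mathrm{Ad}$-invariant function $I$. First, its gradient centralizes its argument, $[x,\nabla I(x)]=0$: differentiating $I(\mathrm{Ad}_{\exp(sy)}x)=I(x)$ at $s=0$ gives $\kappa(\nabla I(x),[y,x])=0$ for all $y$, and transferring the bracket to the first slot via the invariance of $\kappa$ turns this into $\kappa([x,\nabla I(x)],y)=0$ for all $y$, so non-degeneracy yields the claim. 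Second, the same invariance makes $\nabla I$ equivariant, $\nabla I(\mathrm{Ad}_g x)=\mathrm{Ad}_g\nabla I(x)$. The centralizing property already yields the second equality in \eqref{eq:LUsolution}, since $\Theta_{L_{\0}}(\t)$ is a linear combination of the $\nabla I_k(L_{\0})$ and hence commutes with $L_{\0}$, so $\mathrm{Ad}_g L_{\0}=L_{\0}$ and $\mathrm{Ad}_{u^{-1}}L_{\0}=\mathrm{Ad}_{bg^{-1}}L_{\0}=\mathrm{Ad}_b L_{\0}$.

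The core of the proof is to differentiate the factorization. Since the gradients $\nabla I_j(L_{\0})$ pairwise commute (they all lie in the centralizer of the regular element $L_{\0}\in e+\mathfrak{b}_-$, which is abelian), one has $\partial_{t_{m_k}}g\cdot g^{-1}=\nabla I_k(L_{\0})$; substituting $g=ub$ and regrouping gives
\[
\nabla I_k(L_{\0})=\frac{\partial u}{\partial t_{m_k}}u^{-1}+\mathrm{Ad}_u\!\Big(\frac{\partial b}{\partial t_{m_k}}b^{-1}\Big).
\]
Applying $\mathrm{Ad}_{u^{-1}}$ and using the equivariance of $\nabla I_k$ with $L=\mathrm{Ad}_{u^{-1}}L_{\0}$, I would obtain
\[
\nabla I_k(L)=\mathrm{Ad}_{u^{-1}}\nabla I_k(L_{\0})=u^{-1}\frac{\partial u}{\partial t_{m_k}}+\frac{\partial b}{\partial t_{m_k}}b^{-1}.
\]
The decisive observation is that this is precisely the splitting of $\nabla I_k(L)$ along $\mathfrak{g}=\mathfrak{n}_-\oplus\mathfrak{b}_+$: the first summand lies in $\mathfrak{n}_-$ because $u\in N_-$, and the second in $\mathfrak{b}_+$ because $b\in B_+$. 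Hence $B_k=\Pi_{\mathfrak{b}_+}\nabla I_k(L)=\partial_{t_{m_k}}b\cdot b^{-1}$ and $\Pi_{\mathfrak{n}_-}\nabla I_k(L)=u^{-1}\partial_{t_{m_k}}u$.

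It then remains to differentiate $L=\mathrm{Ad}_{u^{-1}}L_{\0}$ directly; a one-line logarithmic-derivative computation gives $\partial_{t_{m_k}}L=[L,\,u^{-1}\partial_{t_{m_k}}u]=-[\Pi_{\mathfrak{n}_-}\nabla I_k(L),\,L]$. Writing $\nabla I_k(L)=B_k+\Pi_{\mathfrak{n}_-}\nabla I_k(L)$ and invoking $[\nabla I_k(L),L]=0$ from the first structural fact converts this into $\partial_{t_{m_k}}L=[B_k,L]$, which is exactly \eqref{eq:fKT}. Uniqueness of the factorization (which holds since $N_-\cap B_+=\{\mathrm{id}\}$) forces $u(\0)=b(\0)=\mathrm{id}$, hence $L(\0)=L_{\0}$; and since the f-KT flows preserve the affine slice $e+\mathfrak{b}_-$ to which $L_{\0}$ belongs, $L(\t)$ is the genuine f-KT solution.

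The step I expect to require the most care is the identification that $u^{-1}\partial_{t_{m_k}}u$ and $\partial_{t_{m_k}}b\cdot b^{-1}$ are exactly the $\mathfrak{n}_-$- and $\mathfrak{b}_+$-components of $\nabla I_k(L)$: this is what makes the projection $\Pi_{\mathfrak{b}_+}$ in the definition of $B_k$ single out precisely $\partial_{t_{m_k}}b\cdot b^{-1}$, and it is the hinge on which the whole argument turns. The two invariant-theory facts and the logarithmic-derivative manipulation are routine by comparison; the only analytic caveat is that the construction is valid only for $\t$ small, where the factorization \eqref{eq:LUfactorization} is guaranteed to exist.
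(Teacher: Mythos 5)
Your proof is correct, and at its core it is the same Adler--Kostant--Symes factorization argument as the paper's: differentiate the LU-factorization \eqref{eq:LUfactorization}, read off the $\mathfrak{n}_-$- and $\mathfrak{b}_+$-components as $u^{-1}\partial u$ and $\partial b\, b^{-1}$, then differentiate the conjugation formula and invoke uniqueness for ODEs. The differences are ones of scope and bookkeeping. The paper writes the computation only for the first flow $t=t_1$, where $\nabla I_1(L)=L$, so the splitting directly identifies $u^{-1}\dot u=\Pi_{\mathfrak{n}_-}L$ and $\dot b\,b^{-1}=\Pi_{\mathfrak{b}_+}L$ with no invariant-theory input, and it differentiates the $b$-side conjugation $L=\mathrm{Ad}_{b}\cdot L_{\0}$, which yields $[\Pi_{\mathfrak{b}_+}L,L]$ in one step. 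You instead verify every flow $t_{m_k}$, which is what the proposition actually asserts for the hierarchy; this forces you to prove the equivariance $\nabla I_k(\mathrm{Ad}_g x)=\mathrm{Ad}_g\nabla I_k(x)$ and the centralizing property $[x,\nabla I_k(x)]=0$, and --- a point the paper passes over silently --- to justify $\partial_{t_{m_k}}g\cdot g^{-1}=\nabla I_k(L_{\0})$ by the pairwise commutativity of the gradients; your appeal to regularity of elements of $e+\mathfrak{b}_-$ (which follows from Proposition \ref{thm:Kostantsection}) and abelianness of their centralizers is the right justification there. Your choice to differentiate the $u$-side and then convert $[L,u^{-1}\partial u]$ into $[B_k,L]$ via $[\nabla I_k(L),L]=0$ is an immaterial variation of the paper's $b$-side computation. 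In short: the paper's proof is leaner but, as written, only treats the first flow; yours covers all flows of the hierarchy and makes explicit the two structural facts on which the computation secretly rests.
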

\begin{proof}
Let $L({\t}) = \text{Ad}_{b({\t})} \cdot L_{\0}$, then $L({\t})$ obviously satisfies the same initial condition as $L_{\mathfrak{g}}({\t})$ and we would like to verify that it also satisfies the same differential equation as $L_{\mathfrak{g}}({\t})$. Taking derivative with respect to $t=t_1$ on both sides of the equation
\[\exp(\Theta_{L_{\0}}(\t))) = u({\t})b({\t}),\]
we obtain
\[\frac{d}{dt}\exp(\Theta_{L_{\0}}(\t)) = L_{\0} ub = ub L_{\0} = \dot{u}b + u\dot{b},\]
where $\dot{x}$ means the derivative of $x(\t)$ with respect to $t = t_1$. When $\t$ is small, this equation is meaningful for any simple Lie algebra and the associated connected Lie group: $L_{\0}\exp(\Theta_{L_{\0}}(\t))$ (resp. $\exp(\Theta_{L_{\0}}(\t))L_{\0}$) means we use $\exp(\Theta_{L_{\0}}(\t))$ to translate the vector $L_{\0}$ at the origin from the right (resp. left), and the equation says the effect is the same as the sum of the translation of two other vectors. We can rewrite the equation as
\[\text{Ad}_{u^{-1}({\t})} \cdot L_{\0} = \text{Ad}_{b({\t})} \cdot L_{\0} = u^{-1}\dot{u} + \dot{b}b^{-1}.\]
Note that these are all tangent vectors at the unit element of $G$ and can be viewed as elements in $\mathfrak{g} = \mathfrak{n}_- \oplus \mathfrak{b}_+$. By definition we obtain the decomposition of $L(\t)$ as
\[u^{-1}\dot{u} = \Pi_{\mathfrak{n}_-}L \quad \text{and} \quad \dot{b}b^{-1} = \Pi_{\mathfrak{b}_+}L.\]
Now differentiating $L = \text{Ad}_{b({\t})} \cdot L_{\0}$, we obtain
\[\frac{dL}{dt} = \dot{b}b^{-1}L - L \dot{b}b^{-1} = [\Pi_{\mathfrak{b}_+}L, L].\]
Thus $L(\t)$ satisfies the same differential equation as $L_{\mathfrak{g}}(\t)$, the uniqueness theorem of the differential equation implies that $L(\t) = L_{\mathfrak{g}}(\t)$.
\end{proof}

Now take $g$ in equation \eqref{eq:groupcharacter} to be $\exp(\Theta_{L_{\0}}(\t))$ and consider the decomposition $\exp(\Theta_{L_{\0}}(\t)) = n_-(\t)h(\t)n_+(\t)$ with $n_-(\t) = u(\t)$ and $h(\t)n_+(\t) = b(\t)$, we obtain
\[\xi_{\lambda}(h(\t)) = (v^{\lambda}, \exp(\Theta_{L_{\0}}(\t))v^{\lambda}).\]
Assume
\[h(\t) = \exp(\sum \limits_{i = 1}^{\ell}h_i(\t)H_i),\]
then
\begin{align*}
\xi_{\omega_i}(h(\t)) & = e^{\langle \omega_i, \sum \limits_{i = 1}^{\ell}h_i(\t) H_i\rangle}
 = e^{\langle \sum \limits_j C^{-1}_{ij}\alpha_j, \sum \limits_{k = 1}^{\ell}h_k(\t)H_k\rangle}\\
& = e^{\sum \limits_{j, k}h_k(\t) C^{-1}_{ij}C_{jk}}
 = e^{h_k(\t)\delta_{ik}}
 = e^{h_i(\t)}.
\end{align*}

For $\lambda = \omega_i, 1 \le i \le \ell$, taking diagonals of $\dot{b}b^{-1} = \Pi_{\mathfrak{b}_+}L$, we obtain $\dot{h}h^{-1} = \sum a_i(\t) H_i$, so that
\[\dot{h}_i(\t) = a_i(\t) \qquad \text{for } i = 1, \dots, \ell,\]
which gives $a_i(\t) = \frac{d}{dt}\ln \xi_{\omega_i}(h(\t))$. 
Thus
\begin{definition}
We define the $i$-th $\tau$-function of the f-KT hierarchy as
\begin{equation}\label{eq:tau0}
\tau_i({\t}) = (v^{\omega_i}, \exp(\Theta_{L_{\0}}(\t))v^{\omega_i}), \qquad 1 \le i \le \ell.
\end{equation}
\end{definition}

\begin{corollary}\label{cor:diagonal}
For $\t$ small enough, we have the following formula for the diagonal elements in f-KT hierarchy:
\begin{equation}\label{eq:ai}
a_i({\t}) = \frac{d}{dt}\ln \tau_i({\t}), \qquad 1 \le i \le \ell.
\end{equation}
\end{corollary}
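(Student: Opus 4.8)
The plan is to recognize that, for $\t$ small enough, the definition \eqref{eq:tau0} of $\tau_i$ already computes the value of the character $\xi_{\omega_i}$ on the toral factor of the Gauss factorization of $\exp(\Theta_{L_{\0}}(\t))$, so that the statement reduces to two identifications: $\ln \tau_i(\t) = h_i(\t)$ and $\dot h_i(\t) = a_i(\t)$, where $h(\t) = \exp(\sum_k h_k(\t) H_k)$ is the Cartan factor. First I would invoke the factorization \eqref{eq:LUfactorization}, which is available precisely because for $\t$ small the element $\exp(\Theta_{L_{\0}}(\t))$ remains in the open big cell $G_o = N_- H N_+$; writing $\exp(\Theta_{L_{\0}}(\t)) = n_-(\t) h(\t) n_+(\t)$ with $u = n_-$ and $b = h n_+$, I would apply \eqref{eq:groupcharacter} with $\lambda = \omega_i$. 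Since $v^{\omega_i}$ is a highest weight vector it is fixed by $N_+$, and $(n_-)^* v^{\omega_i} = v^{\omega_i}$, so the matrix coefficient $(v^{\omega_i}, \exp(\Theta_{L_{\0}}(\t)) v^{\omega_i})$ collapses to the character value $\xi_{\omega_i}(h(\t))$. This yields $\tau_i(\t) = \xi_{\omega_i}(h(\t))$.

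Next I would evaluate the character explicitly. Since $\xi_{\omega_i}(\exp X) = e^{\omega_i(X)}$ for $X \in \mathfrak{h}$ and the fundamental weights satisfy $\omega_i(H_{\alpha_j}) = \delta_{ij}$, the exponent $\omega_i\big(\sum_k h_k(\t) H_k\big)$ reduces to $h_i(\t)$. Hence $\xi_{\omega_i}(h(\t)) = e^{h_i(\t)}$ and therefore $\ln \tau_i(\t) = h_i(\t)$, i.e. $\ln\tau_i$ is exactly the $i$-th coordinate of the Cartan factor in the basis $\{H_i\}$.

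Finally I would relate $\dot h_i$ to $a_i$ using the differential side of Proposition \ref{prop:RHfactorization}. Its proof gives $\dot b b^{-1} = \Pi_{\mathfrak{b}_+} L_{\mathfrak{g}}$, so it remains to extract the $\mathfrak{h}$-component of the left-hand side. Here lies the one step needing care: from $b = h n_+$ one has $\dot b b^{-1} = \dot h h^{-1} + \mathrm{Ad}_h(\dot n_+ n_+^{-1})$, and since $\mathrm{Ad}_h$ preserves $\mathfrak{n}_+$ the second summand lies entirely in $\mathfrak{n}_+$; consequently the $\mathfrak{h}$-projection of $\dot b b^{-1}$ is exactly $\dot h h^{-1} = \sum_k \dot h_k(\t) H_k$. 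Comparing with $\Pi_{\mathfrak{h}} L_{\mathfrak{g}} = \sum_k a_k(\t) H_k$ and reading off the coefficient of $H_i$ gives $\dot h_i(\t) = a_i(\t)$. Differentiating the identity $\ln \tau_i(\t) = h_i(\t)$ established above then produces $\frac{d}{dt}\ln\tau_i(\t) = \dot h_i(\t) = a_i(\t)$, which is the claim.

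I expect the genuine content to be concentrated in the last paragraph, specifically in the verification that the unipotent factor $n_+$ does not contaminate the Cartan part of $\dot b b^{-1}$, since everything else is a direct unwinding of the definitions together with \eqref{eq:groupcharacter}. The hypothesis that $\t$ be small enough is essential throughout, as it is what guarantees the Gauss factorization and hence the smoothness of $h(\t)$ needed to differentiate.
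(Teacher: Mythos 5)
Your proposal is correct and follows essentially the same route as the paper: the identification $\tau_i(\t) = \xi_{\omega_i}(h(\t)) = e^{h_i(\t)}$ via \eqref{eq:groupcharacter}, combined with the relation $\dot{b}b^{-1} = \Pi_{\mathfrak{b}_+}L_{\mathfrak{g}}$ from the proof of Proposition \ref{prop:RHfactorization} to conclude $\dot{h}_i = a_i$. Your explicit verification that $\mathrm{Ad}_h(\dot n_+ n_+^{-1}) \in \mathfrak{n}_+$ spells out what the paper compresses into ``taking diagonals,'' but it is the same argument.
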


\begin{remark}
Proposition \ref{prop:RHfactorization} and Corollary \ref{cor:diagonal} result from discussion with Y.Kodama, and we agree to use them both here and in \cite{Kodama-Okada2023} for quite different purpose.
\end{remark}

\begin{remark}
Note that in the above proof we did not use the specific form of the matrix $L(t)$ at all, which means the above formula is valid for much wider systems. Let $L(t) = \sum_{i = 1}^{\ell} a(t)H_i + \sum \limits_{\alpha \in \Delta} b_{\alpha}(t)X_{\alpha} \in \mathfrak{g}$ be any element, and define the Toda like equations
\[\frac{d L}{dt} = [B, L], \qquad \text{with} \qquad B = \Pi_{\mathfrak{b}_+}L,\]
where $\Pi_{\mathfrak{b}_+}$ is the projection to the upper Borel $\mathfrak{b}_+$ along $\mathfrak{n}_-$, then the diagonal elements are given by the following formula
\[a_i(t) = \frac{d}{dt}\ln \tau_i(t), \quad 1 \le i \le \ell, \qquad \text{for $t$ small enough.}\]
The f-KT lattice is special in that all the other matrix elements $b_{\alpha}$ of $L(t)$ are determined by the diagonal elements $a_i(t)$ (c.f. \cite{Xie2021} for an expression of $b_{\alpha}$ in terms of $\tau$-functions in type $A$). 
\end{remark}

\subsection{Singular solutions of the f-KT lattice}
Our main concern in the current paper is the structure of singular solutions of f-KT lattice which is equivalent to the divisor structure of $\tau$-functions by equation \eqref{eq:ai}.
To proceed we extend the base field to $\mathbb{C}$ and note the following theorem of Kostant:
\begin{proposition}[\cite{Kostant1978}]\label{thm:Kostantsection}
For a simple complex Lie algebra $\mathfrak{g}$ there exists an $\ell$-dimensional linear subspace $\mathfrak{s} \subset \mathfrak{b}_-$ such that elements in the affine subspace $e + \mathfrak{s}$ are regular. The map
\[\begin{array}{rcl}
 N_- \times (e + \mathfrak{s}) & \to &  e + \mathfrak{b}_-\\
 (n, X) & \mapsto & \text{Ad}_{n}X,
\end{array}\]
is an isomorphism of affine varieties.
\end{proposition}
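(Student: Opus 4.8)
The plan is to identify $\mathfrak{s}$ with the centralizer of the opposite of a principal nilpotent and to derive the isomorphism from the principal $\mathfrak{sl}_2$-structure together with a contracting $\mathbb{C}^{*}$-action. First I would fix a principal $\mathfrak{sl}_2$-triple: since $e=\sum_{i=1}^{\ell}X_i$ is a regular nilpotent, there are $h\in\mathfrak{h}$ and $f\in\mathfrak{n}_-$ with $[h,e]=2e$, $[h,f]=-2f$, $[e,f]=h$, and $\mathrm{ad}_h$ acts on the root space $\mathfrak{g}_\alpha$ by the scalar $2\,\mathrm{ht}(\alpha)$, so the $\mathrm{ad}_h$-grading refines the triangular decomposition with $\mathfrak{n}_+,\mathfrak{h},\mathfrak{n}_-$ the strictly positive, zero and strictly negative parts. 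I set $\mathfrak{s}:=\ker(\mathrm{ad}_f)$. Decomposing $\mathfrak{g}=\bigoplus_{j=1}^{\ell}V(2m_j)$ into irreducible $\mathfrak{sl}_2$-modules (the $m_j$ being the exponents), one reads off that $\ker(\mathrm{ad}_e)$ is spanned by the $\ell$ highest-weight vectors, all of strictly positive $\mathrm{ad}_h$-weight, so $\ker(\mathrm{ad}_e)\subset\mathfrak{n}_+$; dually $\mathfrak{s}$ is spanned by the $\ell$ lowest-weight vectors, of strictly negative weight, so $\dim\mathfrak{s}=\ell$ and $\mathfrak{s}\subset\mathfrak{n}_-\subset\mathfrak{b}_-$. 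The same decomposition gives the transversality $\mathfrak{g}=\mathrm{im}(\mathrm{ad}_e)\oplus\mathfrak{s}$, which intersected with $\mathfrak{b}_-$ yields the key splitting $\mathfrak{b}_-=\mathrm{ad}_e(\mathfrak{n}_-)\oplus\mathfrak{s}$: here $\mathrm{ad}_e|_{\mathfrak{n}_-}$ is injective because $\ker(\mathrm{ad}_e)\cap\mathfrak{n}_-=0$, the sum is direct by transversality, and the dimensions match $|\Delta_+|+\ell=\dim\mathfrak{b}_-$.

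For regularity of the slice I would run a leading-term argument in the $\mathrm{ad}_h$-grading. For $X=e+s$ with $s\in\mathfrak{s}$ we have $\mathrm{ad}_X=\mathrm{ad}_e+\mathrm{ad}_s$, where $\mathrm{ad}_e$ raises $\mathrm{ad}_h$-weight by $2$ and $\mathrm{ad}_s$ strictly lowers it, so $\mathrm{ad}_e$ is the top symbol of $\mathrm{ad}_X$. Passing to highest-weight components of kernel elements embeds $\ker(\mathrm{ad}_X)$ into $\ker(\mathrm{ad}_e)$, whence $\dim\mathfrak{z}(X)\le\dim\mathfrak{z}(e)=\ell$; as centralizers always have dimension at least $\ell$, every $X\in e+\mathfrak{s}$ is regular. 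The same argument shows $\mathrm{ad}_X|_{\mathfrak{n}_-}$ is injective for every $X\in e+\mathfrak{b}_-$.

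Next I would analyze $\phi(n,X)=\mathrm{Ad}_nX$. It is well defined because $\mathrm{Ad}_{N_-}$ stabilizes $e+\mathfrak{b}_-$: one has $[\mathfrak{n}_-,e]\subset\mathfrak{b}_-$ (for $\beta\in\Delta_+$ the element $\alpha_i-\beta$ is never a positive root, $\alpha_i$ being simple), and $\mathfrak{b}_-$ is $\mathrm{Ad}_{N_-}$-stable. Computing the differential at $(n,X)$ gives $d\phi(\xi,s)=\mathrm{Ad}_n(-\mathrm{ad}_X\xi+s)$ for $(\xi,s)\in\mathfrak{n}_-\oplus\mathfrak{s}$, so $\phi$ is \'etale precisely where $\Psi_X\colon(\xi,s)\mapsto-\mathrm{ad}_X\xi+s$ is bijective; at $X=e$ this is exactly the splitting $\mathfrak{b}_-=\mathrm{ad}_e(\mathfrak{n}_-)\oplus\mathfrak{s}$ above, so $\phi$ is \'etale along $N_-\times\{e\}$. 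To spread this I introduce the contracting action $\sigma_t(Y)=t^{-2}\mathrm{Ad}_{\exp((\log t)h)}Y$, which fixes $e$ and contracts both $e+\mathfrak{b}_-$ and $e+\mathfrak{s}$ to their fixed points as $t\to\infty$, and for which $\phi$ is equivariant (with $\mathbb{C}^{*}$ acting on $N_-$ by conjugation through $\exp((\log t)h)$). The degeneracy locus of $d\phi$ is closed and $\mathbb{C}^{*}$-stable, hence contracted into the fiber over $e$ where $d\phi$ is invertible; therefore it is empty and $\phi$ is \'etale everywhere.

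Finally, bijectivity. The same equivariant contraction makes both coordinate rings positively graded polynomial rings with degree-zero part $\mathbb{C}$ (every coordinate function acquires a strictly positive $\sigma_t$-weight), and $\phi$ corresponds to a graded homomorphism $\phi^{*}$ inducing an isomorphism on cotangent spaces at the common fixed point. Graded Nakayama then forces $\phi^{*}$ to be surjective, and since source and target are domains of the same Krull dimension $|\Delta_+|+\ell$, surjectivity forces injectivity, so $\phi^{*}$ and hence $\phi$ is an isomorphism. (Geometrically: \'etaleness makes the image open, $\mathbb{C}^{*}$-stable and nonempty, hence all of $e+\mathfrak{b}_-$ by the contraction, giving surjectivity, and a bijective \'etale morphism of smooth complex varieties is an isomorphism.) I expect the main obstacle to be this passage from the infinitesimal statement to the global isomorphism: the $\mathfrak{sl}_2$-reduction makes the linear heart (the splitting $\mathfrak{b}_-=\mathrm{ad}_e(\mathfrak{n}_-)\oplus\mathfrak{s}$) essentially automatic, but converting \'etaleness at the unique fixed point into a global isomorphism genuinely relies on the contracting $\mathbb{C}^{*}$-action and the positivity of the induced grading, and it is this graded rigidity that carries the real weight of the proof.
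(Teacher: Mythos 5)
The paper does not actually prove this proposition: it is imported verbatim from Kostant \cite{Kostant1978} and used as a black box (the text immediately applies it to define the embedding $c_\Lambda$ of \eqref{eq:clambda}), so there is no internal proof to compare yours against. Judged on its own, your argument is correct and is the standard modern proof of Kostant's slice theorem: the principal $\mathfrak{sl}_2$-triple through $e$ gives $\mathfrak{s}=\ker(\mathrm{ad}_f)$, the module decomposition gives $\dim\mathfrak{s}=\ell$, $\mathfrak{s}\subset\mathfrak{n}_-$, and the key splitting $\mathfrak{b}_-=\mathrm{ad}_e(\mathfrak{n}_-)\oplus\mathfrak{s}$; the leading-term argument in the $\mathrm{ad}_h$-grading gives regularity along the slice; and the contracting $\mathbb{C}^*$-action (the Kazhdan grading) correctly propagates \'etaleness from the fixed point $(\mathrm{id},e)$ and, via the positively graded coordinate rings, yields the global isomorphism. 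This is in the spirit of Kostant's original treatment and of later expositions (e.g.\ in the literature on Slodowy slices), so I would not call it a different route so much as a self-contained reconstruction of the cited result.

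Two small presentational points you should repair. First, ``passing to highest-weight components embeds $\ker(\mathrm{ad}_X)$ into $\ker(\mathrm{ad}_e)$'' is not literally a linear embedding, since the leading-term map is not linear; the correct statement is that the weight filtration on $\ker(\mathrm{ad}_X)$ has associated graded injecting into $\ker(\mathrm{ad}_e)$, which gives the dimension bound $\dim\mathfrak{z}(X)\le\ell$ --- same conclusion, but the wording should reflect the filtration argument. Second, your geometric parenthetical establishes only surjectivity (open image, $\mathbb{C}^*$-stability, contraction); injectivity is not addressed there, so ``bijective \'etale'' is not yet available on that route. Your primary algebraic argument does close this gap (a graded surjection between domains of equal Krull dimension is an isomorphism), so keep the graded-Nakayama argument as the main one, or supplement the geometric version with a separate injectivity argument.
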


\begin{remark}
When $\mathfrak{g} = \mathfrak{sl}_n(\mathbb{C})$, the choice of $\mathfrak{s}$ may be made so that $e + \mathfrak{s}$ is the space of traceless companion matrices. 
\end{remark}

Let $\mathcal{F}_{\Lambda}$ be the isospectral variety consisting Lax matrices with fixed Chevalley invariants $\Lambda = \{I_1, \dots, I_{\ell}\}$, i.e.
\[\mathcal{F}_{\Lambda} := \{L \in e + \mathfrak{b}_- \ \vert\ L \text{ has Chevalley invariants $\Lambda$}\}.\]

With any fixed choice of $\mathfrak{s}$, Proposition \ref{thm:Kostantsection} says that for any $L \in \mathcal{F}_{\Lambda}$ there exists a unique $u \in N_-$ and $C_{\Lambda} \in e + \mathfrak{s}$ such that 
\[L = u^{-1} C_{\Lambda}u,\]
and we can use this $\mathfrak{s}$ to embed the iso-spectral variety $\mathcal{F}_{\Lambda}$ into the flag variety $G \slash B_+$:
\begin{equation}\label{eq:clambda}
\begin{array}{rcl}
c_{\Lambda}: \mathcal{F}_{\Lambda} & \to & G \slash B_+\\
L & \mapsto & uB_+.
\end{array}
\end{equation}

We then have the following
\begin{proposition}
Each f-KT flow maps to the flag variety as

\[
\begin{tikzcd}
L_0 \ar[r, "{c_{\Lambda}}"] \ar[d, "{\text{Ad}_{u(\t)^{-1}}}"] & u_0 B_+ \ar[d]\\
L(\t) \ar[r, "{c_{\Lambda}}"] &\left\{ \begin{array}{l}
u_0u(\t)B_+\\
= u_0 \exp(\Theta_{L_0}(\t))B_+\\
= \exp(\Theta_{C_{\Lambda}}(\t))u_0B_+
\end{array}\right.
\end{tikzcd}
\]
where $L_0 = u_0^{-1}C_{\Lambda}u_0$.
\end{proposition}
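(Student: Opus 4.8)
The plan is to chase $L_0$ around the square: I would compute the image coset $c_\Lambda(L(\t))$ directly from the flow and then massage it successively into each of the three advertised forms, the last of which exhibits the induced map on $G/B_+$ as a left translation. The only ingredients are Proposition~\ref{prop:RHfactorization} (which produces the flow), the uniqueness clause of Proposition~\ref{thm:Kostantsection} (which lets one read off $c_\Lambda$ from any conjugation into Kostant normal form by an element of $N_-$), and the $\text{Ad}$-equivariance of the Chevalley gradients $\nabla I_k$.

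First I would run the down-then-across route. Writing $L_0 = u_0^{-1} C_\Lambda u_0$ with $u_0 \in N_-$, so that $c_\Lambda(L_0) = u_0 B_+$, Proposition~\ref{prop:RHfactorization} gives the flow as $L(\t) = \text{Ad}_{u(\t)^{-1}} L_0$, where $\exp(\Theta_{L_0}(\t)) = u(\t) b(\t)$ is the LU-factorization with $u(\t) \in N_-$ and $b(\t) \in B_+$ (valid for $\t$ near $\0$). Conjugating yields $L(\t) = (u_0 u(\t))^{-1} C_\Lambda (u_0 u(\t))$, and since $N_-$ is a subgroup, $u_0 u(\t) \in N_-$, so this is already a presentation of $L(\t)$ in Kostant normal form. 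By uniqueness in Proposition~\ref{thm:Kostantsection} we may read off $c_\Lambda(L(\t)) = u_0 u(\t) B_+$, the first expression. As $b(\t) \in B_+$ fixes the base point, $u(\t) B_+ = u(\t) b(\t) B_+ = \exp(\Theta_{L_0}(\t)) B_+$, upgrading the image to $u_0 \exp(\Theta_{L_0}(\t)) B_+$, the second expression.

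The step I expect to require the most care is passing $u_0$ through the exponential to reach the third, manifestly base-point-independent, form. Here I would use that each $I_k$ is $\text{Ad}$-invariant: differentiating $I_k(\text{Ad}_g X) = I_k(X)$ and invoking $\text{Ad}$-invariance of $\kappa$ in the defining relation $dI_k(X) = \kappa(\nabla I_k(X), \cdot)$ gives the equivariance $\nabla I_k(\text{Ad}_g X) = \text{Ad}_g \nabla I_k(X)$. Applying this to $L_0 = \text{Ad}_{u_0^{-1}} C_\Lambda$ and summing over $k$ yields $\Theta_{L_0}(\t) = \text{Ad}_{u_0^{-1}} \Theta_{C_\Lambda}(\t)$, whence $\exp(\Theta_{L_0}(\t)) = u_0^{-1} \exp(\Theta_{C_\Lambda}(\t)) u_0$ and therefore $u_0 \exp(\Theta_{L_0}(\t)) B_+ = \exp(\Theta_{C_\Lambda}(\t)) u_0 B_+$, the third expression.

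Finally I would observe that this last form is exactly the left translation of $c_\Lambda(L_0) = u_0 B_+$ by $\exp(\Theta_{C_\Lambda}(\t))$, a map depending only on the fixed spectrum $\Lambda$ and not on the particular point of $\mathcal{F}_\Lambda$; this is the right vertical arrow, and its agreement with the direct computation above is precisely the commutativity of the square. I would also remark that, although the factorization defining $u(\t)$ exists only for $\t$ near $\0$, the expression $\exp(\Theta_{C_\Lambda}(\t)) u_0 B_+$ is defined for all $\t$, so the f-KT flow extends to a global flow on $G/B_+$ given by left multiplication—the mechanism underlying the compactification of the initial value space by the flag variety.
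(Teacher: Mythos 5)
Your proof is correct, and it is precisely the argument the paper leaves implicit: the proposition is stated there without proof, as an immediate consequence of Proposition \ref{prop:RHfactorization} (the flow $L(\t)=\mathrm{Ad}_{u(\t)^{-1}}L_0$), the uniqueness clause of Proposition \ref{thm:Kostantsection} (reading off $c_{\Lambda}$ from $L(\t)=(u_0u(\t))^{-1}C_{\Lambda}(u_0u(\t))$), and the $\mathrm{Ad}$-equivariance $\nabla I_k(\mathrm{Ad}_g X)=\mathrm{Ad}_g\nabla I_k(X)$, which gives $u_0\exp(\Theta_{L_0}(\t))B_+=\exp(\Theta_{C_{\Lambda}}(\t))u_0B_+$. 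Your closing observation that the left-translation form is defined for all $\t$, unlike the LU-factorization, is exactly the mechanism the paper exploits later when compactifying by the flag variety.
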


When there exists some $\t_* \in \mathbb{C}^{\ell}$ such that the factorization \eqref{eq:LUfactorization} could not be performed, we need to use the following more general factorization
\[\exp(\Theta_{L_0}(\t_*)) = u_*\dot{w}_*b_*,\]
where $\text{id} \ne \dot{w}_* \in G$ is a representative of $w_* \in \mathcal{W}$. This means that the f-KT flow hits the boundary of a Bruhat cell which happens when there exists $1 \le k \le \ell$ such that $\tau_k(\t_*) = 0$ and the solution becomes singular at $\t = \t_*$. The set of $\t_*$ where some of the $\tau_k$ vanish is called the Painlev\'e divisor. Setting $\t \to \t + \t_*$, then
\[\tau_k(\t; w_*) = ( v^{\omega_k}, u_0^{-1} e^{\Theta_{C_{\Lambda}}(\t)} u_0 u_*\dot{w}_*b_* v^{\omega_k}).\]
Note that
\[b_* v^{\omega_k} = d_k v^{\omega_k}, \]
where $d_k \in \mathbb{C}$ is a constant, we obtain
\begin{equation}\label{eq:tau}
\tau_k(\t; w_*) = d_k(v^{\omega_k}, e^{\Theta_{C_{\Lambda}}(\t)}u\dot{w}_* v^{\omega_k}) \qquad 1 \le k \le \ell,
\end{equation}
where $u = u_0u_* \in N_-$. 

\begin{remark}
Note that a priori the proof of Proposition \ref{prop:RHfactorization} and thus the conclusion of Corollary \ref{cor:diagonal} can only be applied to regular solutions of f-KT hierarchy. To study singular solutions, we need to identify which $w \in \mathcal{W}$ and which $u \in N_-$ can appear in formula \eqref{eq:tau}. Counting the dimensions of $\mathcal{F}_{\Lambda}$ and $G \slash B_+$, it seems natural to expect that the compactification of $c_{\Lambda}(\mathcal{F}_{\Lambda})$ in $G \slash B_+$ is the whole $G \slash B_+$. However, this fact is somewhat nontrivial especially when the field is restricted to $\mathbb{R}$ as there are real regular solutions of f-KT hierarchy with dim$\, \mathfrak{b}_+$ many free parameters which do not hit any smaller Bruhat cells. We will use Kowalevski-Painlev\'e analysis to study singular solutions in Section \ref{sec:Kowalevski-Painleve} (c.f. \cite{Flaschka-Haine1991} for a discussion on the tridiagonal case).
\end{remark}


\section{Local study: Kowalevski-Painlev\'e analysis}\label{sec:Kowalevski-Painleve}
In this section we carry out a local study on the structure of Laurent series solutions of the f-KT lattice. The Kowalevski-Painlev\'e analysis for non-periodic tridiagonal Toda lattice was carried out by Flaschka and Zeng (c.f. \cite{Flaschka1988} \cite{Flaschka-Haine1991}), and the corresponding results for the periodic tridiagonal Toda lattice was obtained by Adler and van Moerbeke in the framework of algebraically integrable systems (c.f. \cite{Adler-vanMoerbeke1991}  \cite{Adler-vanMoerbeke-Vanhaecke2013}).

\subsection{Preliminaries on weight homogeneous systems}
Following \cite{Adler-vanMoerbeke-Vanhaecke2013}, we introduce some terminologies regarding weight homogeneous vector fields and their weight homogeneous Laurent solutions. 
\begin{definition}
\begin{enumerate}
\item Let $\nu = (\nu_1, \dots, \nu_n)$ be a collection of positive integers without a common factor. We say that a polynomial $f \in \mathbb{C}[x_1, \dots, x_n]$ is a weight homogeneous polynomial of weight $k$ (with respect to $\nu$) if
\[f(\alpha^{\nu_1}x_1, \dots, \alpha^{\nu_n}x_n) = \alpha^kf(x_1, \dots, x_n),\]
for all $(x_1, \dots, x_n) \in \mathbb{C}^n$ and $\alpha \in \mathbb{C}$. We denote the weight of $f$ by $\varpi(f)$ henceforth with $\nu$ and $n$ being fixed. 

\item A polynomial vector field $\mathcal{V}$ in $\mathbb{C}^n$, given by
\begin{align}\label{eq:weighthom}
\frac{d}{dt}x_i = f_i(x_1, \dots, x_n), \qquad (i = 1, \dots, n)
\end{align}
is called a weight homogeneous vector field of weight $k$ (with respect to $\nu$) if each of the polynomials $f_1, \dots, f_n$ is weight homogeneous (with respect to $\nu$) and if $\varpi(f_i) = \nu_i + k = \varpi(x_i) + k$ for $i = 1, \dots, n$. 
 
\item If \eqref{eq:weighthom} is a polynomial weight homogeneous vector field of weight $1$, then a Laurent series solution to \eqref{eq:weighthom} of the form
\begin{align}\label{eq:weightLaurent}
x_i(t) = \frac{1}{t^{\nu_i}} \sum \limits_{k = 0}^{\infty}x_{ik}t^k, \qquad i = 1, \dots, n
\end{align}
with $x_0 = (x_{10}, \dots, x_{n0}) \ne 0$, is called a weight homogeneous Laurent solution.
\end{enumerate}
\end{definition}

In the following we call a polynomial weight homogeneous vector field of weight $1$ a weight homogeneous vector field for brevity. 
We list several known properties for a weight homogeneous vector field and their weight homogeneous Laurent solutions.

\begin{proposition}[\cite{Adler-vanMoerbeke-Vanhaecke2013}, Proposition 7.6]\label{prop:Laurentsolutions}
Suppose that $\mathcal{V}$ is a weight homogeneous vector field on $\mathbb{C}^n$, given by \eqref{eq:weighthom}, and $x(t) = (x_1(t), \dots, x_n(t))$ is a weight homogeneous Laurent solution for $\mathcal{V}$. Then the leading coefficients $x_{i0}$ satisfy the non-linear algebraic equations
\begin{align}\label{eq:Indicial}
& \nu_1x_{10} + f_1(x_{10}, \dots, x_{n0}) = 0,\nonumber\\
& \qquad \qquad \vdots \\
&  \nu_nx_{n0} + f_n(x_{10}, \dots, x_{n0}) = 0, \nonumber
\end{align}
while the subsequent terms $x_{ik}$ satisfy
\begin{align}\label{eq:Higherorderterm}
\left(k \cdot I_n - \mathcal{K}(x_{(0)})\right)x_{(k)} = \vec{R}_{(k)},
\end{align}
where $x_{(k)} = \begin{pmatrix}x_{1k}\\ \vdots \\ x_{nk} \end{pmatrix}$ and $\vec{R}_{(k)} = \begin{pmatrix} R_{1k}\\ \vdots \\ R_{nk}\end{pmatrix}$ are $n \times 1$ column vectors and $I_n$ is the $n \times n$ identity matrix; each $R_{ik}$ is a polynomial, which depends on the variables $x_{1l}, \dots, x_{nl}$ with $0 \le l < k$ only. Also, the $(i, j)$-th entry of the $n \times n$-matrix $\mathcal{K}$ is the regular function on $\mathbb{C}^n$, defined by
\begin{align}\label{eq:Kowalevski-matrix}
\mathcal{K}_{ij} := \frac{\partial f_i}{\partial x_j} + \nu_i\delta_{ij},
\end{align}
where $\delta$ is the Kronecker delta.
\end{proposition}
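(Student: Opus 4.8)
The plan is to substitute the candidate series \eqref{eq:weightLaurent} directly into the vector field \eqref{eq:weighthom} and match coefficients of equal powers of $t$, using the weight homogeneity of the $f_i$ as the crucial structural input. First I would record that, since each $f_i$ is weight homogeneous of weight $\nu_i + 1$, every monomial $c\prod_j x_j^{m_j}$ occurring in $f_i$ satisfies $\sum_j m_j\nu_j = \nu_i + 1$. Substituting $x_j(t) = t^{-\nu_j}\sum_{l\ge 0}x_{jl}t^l$ into such a monomial produces $c\,t^{-\sum_j m_j\nu_j}\prod_j\big(\sum_{l\ge 0}x_{jl}t^l\big)^{m_j} = c\,t^{-(\nu_i+1)}\prod_j\big(\sum_{l\ge 0}x_{jl}t^l\big)^{m_j}$, so that $f_i(x_1(t),\dots,x_n(t)) = t^{-\nu_i-1}\sum_{k\ge 0}F_{ik}t^k$ for polynomials $F_{ik}$ in the coefficients $x_{jl}$. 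Since $\frac{d}{dt}x_i(t) = \sum_{k\ge 0}(k-\nu_i)x_{ik}t^{k-\nu_i-1}$, equating the two expansions term by term yields the single family of scalar relations $(k-\nu_i)x_{ik} = F_{ik}$ for every $k\ge 0$. This is the whole content of the identity; steps (1) and (2) will amount to reading off its cases $k=0$ and $k\ge 1$.

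At leading order $k=0$, the coefficient $F_{i0}$ is obtained by keeping only the $l=0$ term in each factor, so $F_{i0} = f_i(x_{10},\dots,x_{n0})$, and $(0-\nu_i)x_{i0} = F_{i0}$ rearranges immediately to the indicial equations \eqref{eq:Indicial}. Here the hypothesis $x_0\ne 0$ is used only to guarantee that the expansion genuinely begins at the stated pole order, so that these leading relations are nonvacuous.

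For the higher-order terms I would isolate the part of $F_{ik}$ linear in the top coefficient vector $x_{(k)}$. The essential combinatorial point is that, for $k\ge 1$, a contribution to $F_{ik}$ involving some $x_{pk}$ must take the order-$t^k$ piece from exactly one factor of $\prod_j\big(\sum_{l}x_{jl}t^l\big)^{m_j}$, forcing every remaining factor to contribute its $l=0$ term $x_{\cdot 0}$; hence $x_{(k)}$ enters $F_{ik}$ strictly linearly, and all other contributions involve only $x_{jl}$ with $l<k$. Differentiating $c\prod_j x_j^{m_j}$ in $x_p$ and evaluating at $x_0$ reproduces exactly this linear coefficient, so $F_{ik} = \sum_p \frac{\partial f_i}{\partial x_p}(x_0)\,x_{pk} + R_{ik}$ with $R_{ik}$ a polynomial in the lower-index coefficients alone. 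Inserting this into $(k-\nu_i)x_{ik} = F_{ik}$ and collecting the linear terms gives $k\,x_{ik} - \sum_p\big(\frac{\partial f_i}{\partial x_p}(x_0) + \nu_i\delta_{ip}\big)x_{pk} = R_{ik}$, which is precisely \eqref{eq:Higherorderterm} once the bracket is recognized as $\mathcal{K}_{ip}$ from \eqref{eq:Kowalevski-matrix}. I expect this linearity-plus-Jacobian identification to be the only genuinely delicate step; the rest is bookkeeping of powers of $t$ that the weight homogeneity is engineered to keep consistent.
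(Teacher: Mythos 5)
Your proof is correct, and it is essentially the standard argument behind Proposition 7.6 of \cite{Adler-vanMoerbeke-Vanhaecke2013}, which the paper quotes without reproving: weight homogeneity forces every monomial of $f_i$ to have weighted degree $\nu_i+1$, so the substituted series for $f_i(x(t))$ and for $\dot{x}_i(t)$ share the common pole order $\nu_i+1$, and matching coefficients gives $(k-\nu_i)x_{ik}=F_{ik}$ for all $k\ge 0$. Your two key reductions --- reading off $k=0$ as the indicial equations \eqref{eq:Indicial}, and observing that for $k\ge 1$ the top coefficients $x_{(k)}$ enter $F_{ik}$ strictly linearly with coefficient matrix equal to the Jacobian of $f$ at $x_{(0)}$, which combined with the $\nu_i\delta_{ip}$ term from the left-hand side yields exactly \eqref{eq:Higherorderterm} with $\mathcal{K}$ as in \eqref{eq:Kowalevski-matrix} --- are precisely how the cited proof proceeds.
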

The set of equations \eqref{eq:Indicial} is called the indicial equation of $\mathcal{V}$. Its solution set is called the indicial locus, and is denoted by $\mathcal{I}$. The $n \times n$ matrix $\mathcal{K}$ defined in \eqref{eq:Kowalevski-matrix} is called the Kowalevski matrix.

\begin{theorem}[\cite{Adler-vanMoerbeke-Vanhaecke2013}]\label{thm:K-matrixgeneral}
In the above setting, the following results regarding the spectrum of $\mathcal{K}$ and the weight homogeneous Laurent solutions are known.
\begin{enumerate}
\item

For any $m \in \mathcal{I}$, the Kowalevski matrix $\mathcal{K}(m)$ always has $-1$ as an eigenvalue. The corresponding eigenspace contains $(\nu_1m_1, \dots, \nu_nm_n)^{T}$ as an eigenvector (\cite{Adler-vanMoerbeke-Vanhaecke2013}, Proposition 7.11).

\item If we have $p$ constants of motion of weight $k$ whose differentials are independent at $m \in \mathcal{I}$, then $k$ is an eigenvalue of $\mathcal{K}(m)$ with multiplicity at least $p$ (\cite{Adler-vanMoerbeke-Vanhaecke2013}, Theorem 7.30).

\item Let $m \in \mathcal{I}$ be an arbitrary element, then
\begin{align}\label{eq:leadingterm}
m(t) := \left(\frac{m_1}{t^{\nu_1}}, \cdots, \frac{m_n}{t^{\nu_n}}\right),
\end{align}
is a solution to \eqref{eq:weighthom} for $t \ne 0$ (\cite{Adler-vanMoerbeke-Vanhaecke2013}, Proposition 7.14).

\item
For a polynomial weight homogeneous vector field \eqref{eq:weighthom}, their weight homogeneous Laurent solutions $x(t)$ given by \eqref{eq:weightLaurent} converge (\cite{Adler-vanMoerbeke-Vanhaecke2013}, Theorem 7.25).

\end{enumerate}
\end{theorem}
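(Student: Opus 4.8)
The plan is to treat the four assertions separately, since (1)--(3) are purely algebraic consequences of weight homogeneity whereas (4) is an analytic convergence statement. The common engine for the first three is the \emph{Euler relation} for a weight homogeneous function $g$ of weight $\varpi(g)$: differentiating the defining identity $g(\alpha^{\nu_1}x_1, \dots, \alpha^{\nu_n}x_n) = \alpha^{\varpi(g)}g(x)$ in $\alpha$ and setting $\alpha = 1$ gives $\sum_j \nu_j x_j\, \partial_j g = \varpi(g)\, g$. For assertion (3) I would substitute the pure power $m(t)$ of \eqref{eq:leadingterm} directly into \eqref{eq:weighthom}: the left side is $\frac{d}{dt}(m_i t^{-\nu_i}) = -\nu_i m_i\, t^{-\nu_i-1}$, while weight homogeneity of $f_i$ (of weight $\nu_i+1$, applied with $\alpha = t^{-1}$) gives $f_i(m(t)) = t^{-\nu_i-1}f_i(m)$. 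Equating coefficients of $t^{-\nu_i-1}$ shows that $m(t)$ solves the system for $t\ne 0$ exactly when $\nu_i m_i + f_i(m) = 0$ for all $i$, i.e. when $m\in\mathcal{I}$, which is \eqref{eq:Indicial}.

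For assertion (1) I would apply the Euler relation to each $f_i$ (weight $\nu_i+1$), obtaining $\sum_j \nu_j x_j\,\partial_j f_i = (\nu_i+1)f_i$. Evaluating at $m\in\mathcal{I}$ and using $f_i(m) = -\nu_i m_i$ gives $\sum_j \partial_j f_i(m)\,(\nu_j m_j) = -\nu_i(\nu_i+1)m_i$. Then from the definition \eqref{eq:Kowalevski-matrix}, $\sum_j \mathcal{K}_{ij}(m)(\nu_j m_j) = -\nu_i(\nu_i+1)m_i + \nu_i^2 m_i = -\nu_i m_i$, so $(\nu_1 m_1, \dots, \nu_n m_n)^T$ is an eigenvector of $\mathcal{K}(m)$ with eigenvalue $-1$. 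For assertion (2) I would show that if $H$ is a constant of motion of weight $k$ then $\nabla H(m)$ is a \emph{left} eigenvector of $\mathcal{K}(m)$ with eigenvalue $k$. Differentiating the conservation law $\sum_i f_i\,\partial_i H = 0$ in $x_j$ and evaluating at $m$ (again using $f_i(m) = -\nu_i m_i$) yields $\sum_i \partial_i H(m)\,\partial_j f_i(m) = \sum_i \nu_i m_i\,\partial_i\partial_j H(m)$; differentiating the Euler relation for $H$ in $x_j$ gives $\sum_i \nu_i x_i\,\partial_i\partial_j H = (k-\nu_j)\partial_j H$, and at $m$ the two right-hand sides agree, so $\sum_i \partial_i H(m)\,\partial_j f_i(m) = (k-\nu_j)\partial_j H(m)$. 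Combined with \eqref{eq:Kowalevski-matrix} this is precisely $\sum_i \partial_i H(m)\,\mathcal{K}_{ij}(m) = k\,\partial_j H(m)$. Given $p$ such $H$ with differentials independent at $m$, the corresponding $p$ left eigenvectors are independent, forcing multiplicity at least $p$.

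The hard part will be assertion (4), the convergence of the formal weight homogeneous Laurent series. The algebraic input is the recursion \eqref{eq:Higherorderterm}, $(kI_n - \mathcal{K})x_{(k)} = \vec{R}_{(k)}$, where $\vec{R}_{(k)}$ is a fixed polynomial in the previously determined coefficients. For $k$ exceeding the largest eigenvalue of $\mathcal{K}(m)$ the matrix $kI_n - \mathcal{K}$ is invertible with $\|(kI_n-\mathcal{K})^{-1}\| = O(1/k)$, so that $x_{(k)}$ is uniquely determined with a controlled bound in terms of lower-order data. The strategy is the majorant method: I would construct an auxiliary scalar (algebraic) majorant whose Taylor coefficients dominate termwise those generated by the recursion and which is manifestly convergent, then conclude by comparison that $\|x_{(k)}\| \le M\rho^k$ for suitable $M,\rho>0$, giving a positive radius of convergence. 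The two delicate points are handling the finitely many resonant levels, where $kI_n - \mathcal{K}$ is singular and free parameters enter so the compatibility conditions must already have been arranged, and producing uniform constants valid for all large $k$ so that the geometric majorant applies; these are exactly the technical estimates carried out in \cite{Adler-vanMoerbeke-Vanhaecke2013}.
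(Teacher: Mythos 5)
This theorem is imported by the paper from Adler--van Moerbeke--Vanhaecke verbatim, with each item carrying only a pointer to the book (Propositions 7.11, 7.14, Theorems 7.30, 7.25); the paper contains no proof of its own, so your proposal can only be judged on its own merits and against the cited source. On that basis, your treatment of items (1)--(3) is correct and complete: the Euler identity $\sum_j \nu_j x_j \partial_j g = \varpi(g)\,g$ for a weight homogeneous $g$, combined with the indicial relation $f_i(m) = -\nu_i m_i$ from \eqref{eq:Indicial}, yields exactly the eigenvalue $-1$ computation for (1); the left-eigenvector identity $\sum_i \partial_i H(m)\,\mathcal{K}_{ij}(m) = k\,\partial_j H(m)$ for (2), where passing from $p$ independent left eigenvectors to multiplicity at least $p$ is legitimate because $\mathcal{K}$ and $\mathcal{K}^T$ share the same characteristic polynomial; and the direct substitution of \eqref{eq:leadingterm} into \eqref{eq:weighthom} for (3). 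These are in substance the standard proofs given in the book. One small caveat for (1): the vector $(\nu_1 m_1, \dots, \nu_n m_n)^T$ is an eigenvector only when $m \ne 0$; this is harmless here because weight homogeneous Laurent solutions are defined with $x_{(0)} \ne 0$, but it deserves a word.

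Item (4) is the only place where your proposal is not actually a proof: you correctly identify the mechanism --- invertibility of $kI_n - \mathcal{K}(m)$ with $\bigl\| (kI_n - \mathcal{K}(m))^{-1} \bigr\| = O(1/k)$ beyond the largest eigenvalue, a majorant comparison to get geometric bounds on $x_{(k)}$, and the finitely many resonant levels where the compatibility conditions of \eqref{eq:Higherorderterm} must already be arranged --- but the construction of the majorant and the uniform estimates are precisely the content of Theorem 7.25 of the book, and you defer them there. Since the paper does exactly the same (it never proves convergence itself, and later invokes weight homogeneity plus this theorem when it needs its Laurent solutions to converge), this deferral leaves you no worse off than the paper; but read as a self-contained argument, your proposal establishes (1)--(3) and only outlines (4).
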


\begin{remark}
Note that in \cite{Adler-vanMoerbeke-Vanhaecke2013} the authors only consider singular solutions for what they called algebraically integrable systems, and for holomorphic solutions (which is an important part of all Laurent series solutions), $-1$ is of course not part of the spectrum of $\mathcal{K}$. Note also that f-KT lattice as we consider here is not an algebraically integrable system in the sense of Adler and van Moerbeke (c.f. \cite{Adler-vanMoerbeke-Vanhaecke2013} for definition), and in some sense it is a degenerate one (c.f. \cite{Ercolani-Flaschka-Singer1993} \cite{Reshetikhin2016} for an explanation).
\end{remark}

\subsection{f-KT lattice as a weight homogeneous system}
Now we come back to our study on the f-KT lattice.
Note that the f-KT lattice \eqref{eq:f-KThky}
can be written as
\begin{align}\label{eq:f-KTincoordinate}
& \frac{d a_i}{d t} = b_i \quad (1 \le i \le \ell), \nonumber\\
& \frac{d b_{\alpha}}{d t} = - \sum_{i = 1}^{\ell}\left(\alpha(H_{\alpha_i})a_i\right)b_{\alpha} + \sum \limits_{i = 1}^{\ell}N_{\alpha_i, -\alpha - \alpha_i}b_{\alpha + \alpha_i}, \quad \alpha \in \Delta_+,
\end{align}
where $b_i = b_{\alpha_i}, [H_{\alpha_i}, Y_{\alpha}] = -\alpha(H_{\alpha_i})Y_{\alpha}$ and $[X_{\alpha_i}, Y_{\alpha + \alpha_i}] = N_{\alpha_i, -\alpha - \alpha_i}Y_{\alpha}$ for $\alpha, \alpha + \alpha_i \in \Delta_+$. In the Chevalley basis chosen in Section \ref{sec:f-KT}, for independent roots $\alpha$ and $\beta$, if $\beta - q\alpha, \dots, \beta + p\alpha$ is the $\alpha$-string through $\beta$, then $N_{\alpha, \beta} = \pm (q+1)$ if $p \ge 1$ and $N_{\alpha, \beta} = 0$ if $\alpha + \beta \not\in \Delta$.

Recall that the height of $\alpha \in \Delta$ is defined as $L(\alpha) = \sum_{i=1}^{\ell}c_i$ for $\alpha=\sum_{i=1}^{\ell}c_i\alpha_i$. 
Equations (\ref{eq:f-KTincoordinate}) by definition define a polynomial weight homogeneous vector field of weight $1$ with $\varpi(a_{i}) = 1$ and $\varpi(b_{\alpha}) = L(\alpha) + 1 = \nu_{\alpha}$, and a weight homogeneous Laurent series solution has the following form
\begin{align}\label{eq:Laurentansatz}
a_{i}(t) = \frac{1}{t}\sum \limits_{k = 0}^{\infty} a_{ik}t^k, \qquad b_{\alpha}(t) = \frac{1}{t^{\nu_{\alpha}}} \sum \limits_{k = 0}^{\infty}b_{\alpha k}t^k.
\end{align}

The following proposition shows that the weight homogeneous Laurent solutions \eqref{eq:Laurentansatz} are the only Laurent series solutions the f-KT lattice \eqref{eq:f-KTincoordinate} may have.
\begin{proposition}\label{prop:Uniqueness}
Solutions of the form (\ref{eq:Laurentansatz}) exist, and all Laurent series solutions of f-KT are of this form (Note that we don't require ${ a_{i0} \ne 0}$ or ${ b_{\alpha 0} \ne 0}$ here).
\end{proposition}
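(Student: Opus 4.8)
The plan is to show two things: that the weight homogeneous Laurent series \eqref{eq:Laurentansatz} genuinely solve the f-KT lattice \eqref{eq:f-KTincoordinate}, and that every Laurent series solution must already be of this form, i.e. the poles can be no worse than $\varpi(a_i)=1$ and $\varpi(b_\alpha)=\nu_\alpha=L(\alpha)+1$. Existence is the easier half: since \eqref{eq:f-KTincoordinate} is a polynomial weight homogeneous vector field of weight $1$ (with the weights $\varpi(a_i)=1$, $\varpi(b_\alpha)=L(\alpha)+1$ read off from the equations, noting that $\gcd$ of these weights is $1$ because $\varpi(a_i)=1$), I can invoke the general machinery of Proposition \ref{prop:Laurentsolutions} together with part (4) of Theorem \ref{thm:K-matrixgeneral}: the indicial equations \eqref{eq:Indicial} have solutions (indeed Theorem \ref{thm:Onetooneintro}/Proposition \ref{conj:Nonnegative} exhibit one for every Weyl group element), the recursion \eqref{eq:Higherorderterm} determines the higher coefficients, and the resulting formal series converges. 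So the substance of the proposition is the word \emph{all}.

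For the uniqueness (``only of this form'') direction, the first step is to argue that any Laurent solution can have only finitely many negative-power terms, so it is a genuine Laurent (not merely formal) expansion around an isolated singularity; this is where I expect to lean on the weight homogeneity together with the fact that the vector field is polynomial, so no essential singularities can arise. The heart of the argument is then a leading-order balancing: suppose $a_i(t)=a_{i,\delta_i}t^{\delta_i}+\cdots$ and $b_\alpha(t)=b_{\alpha,\gamma_\alpha}t^{\gamma_\alpha}+\cdots$ with $\delta_i,\gamma_\alpha\in\mathbb{Z}$ the lowest appearing exponents. I would feed these into \eqref{eq:f-KTincoordinate} and compare the minimal powers on the two sides of each equation. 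The first equation $\dot a_i=b_i$ forces $\gamma_{\alpha_i}=\delta_i-1$ whenever $a_i$ is genuinely singular, tying the simple-root $b$'s to the $a$'s. The second family of equations, $\dot b_\alpha=-\sum_i\alpha(H_{\alpha_i})a_i\,b_\alpha+\sum_i N_{\alpha_i,-\alpha-\alpha_i}b_{\alpha+\alpha_i}$, then propagates constraints up the root poset: the term $a_i b_\alpha$ carries power $(\delta_i-1)+\gamma_\alpha$-ish while $\dot b_\alpha$ carries $\gamma_\alpha-1$, and the inductive comparison over the height of $\alpha$ should pin down $\gamma_\alpha=-(L(\alpha)+1)$ and $\delta_i=-1$.

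The key inductive device is an upward induction on the height $L(\alpha)$, using the partial order on $\Delta_+$: the equation for $b_\alpha$ couples it to $b_{\alpha+\alpha_i}$ (one height higher) and to the $a_i$, so I would establish the base case at the simple roots and at $a_i$, then show that if the bound $\gamma_\beta\ge -(L(\beta)+1)$ holds for all $\beta$ of height greater than $L(\alpha)$ the same bound is forced on $b_\alpha$, and symmetrically that the pole cannot be weaker than claimed once it is nonzero. Equivalently one checks that the vector of exponents $(\delta_i,\gamma_\alpha)$ must be a fixed point of the balancing map dictated by weight homogeneity, and that $(-1,-(L(\alpha)+1))$ is the only admissible one compatible with a nonzero leading coefficient satisfying the indicial equations \eqref{eq:Indicial}. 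The main obstacle I anticipate is handling the case where some leading coefficients vanish (the proposition explicitly allows $a_{i0}=0$ or $b_{\alpha0}=0$): then the naive minimal-power balance degenerates, and I must argue that even a ``missing'' leading term cannot push the actual lowest exponent below the predicted weight, so the shape $t^{-\nu}\sum_{k\ge0}x_{k}t^k$ with the stated $\nu$ still holds with possibly several initial coefficients equal to zero. Carefully separating ``the pole order predicted by weight'' from ``the actual order of vanishing of the leading coefficient'' is the delicate point, and it is exactly what weight homogeneity of the system is designed to control.
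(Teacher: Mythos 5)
Your existence argument is fine and is essentially what the paper does (the paper defers existence to the explicit construction in Proposition \ref{conj:Nonnegative} plus the general convergence machinery). The gap is in the uniqueness half. First, the induction you describe is directionally inconsistent: the equation for $b_\alpha$ involves $b_{\alpha+\alpha_i}$ of \emph{higher} height, so an induction whose hypothesis concerns heights greater than $L(\alpha)$ must start at the highest root, not ``at the simple roots and at $a_i$''; but however you orient it, the inductive step fails precisely in the degenerate situations you flag and then do not resolve. If all the $b_{\alpha+\alpha_i}$ vanish identically (allowed, and realized by the tridiagonal-type solutions), the equation for $b_\alpha$ degenerates to the homogeneous linear ODE $\dot b_\alpha = -\bigl(\sum_i \alpha(H_{\alpha_i})a_i\bigr) b_\alpha$, from which no bound on the pole order of $b_\alpha$, nor on the $a_i$, follows. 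Even in the nondegenerate case the minimal-power comparison can be vacuous because of cancellation: when the $a_i$ have simple poles with residues $a^w_{i0}$, the leading balance for $b_\alpha$ reads $\bigl(\gamma_\alpha + \sum_i \alpha(H_{\alpha_i})a^w_{i0}\bigr) b_{\alpha,\mathrm{lead}} = \cdots$, and it is satisfied identically at the resonant exponent $\gamma_\alpha = -\sum_i \alpha(H_{\alpha_i})a^w_{i0} = L(\alpha)+L(w\alpha)$ (Theorem \ref{thm:Kmatrix}), which for suitable $w$ (e.g.\ $w\alpha$ equal to the negative of a root of height at least $2L(\alpha)+2$) is strictly smaller than $-(L(\alpha)+1)$. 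So a height-by-height leading balance alone cannot exclude worse poles.

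Second, your closing appeal --- that weight homogeneity ``is exactly what is designed to control'' the vanishing-leading-coefficient case --- is circular, and false as a general principle: the polynomial system $\dot x = -2xy$, $\dot y = -y^2$ is weight homogeneous of weight $1$ with $\varpi(x)=\varpi(y)=1$, yet $y = 1/t$, $x = 1/t^2$ is a Laurent solution whose pole order exceeds its weight. Hence something specific to f-KT must enter, and this is the idea your proposal is missing. The paper's mechanism is: (i) reduce to bounding the $a_i$, since by \eqref{eq:f-KTincoordinate} the $b_\alpha$ are recursively determined by the $a_i$, so $\mathrm{ord}(b_\alpha)$ is controlled by the orders of the $a_i$ and the height; (ii) assume some $a_i$ has a pole of order $\delta \ge 2$; then every derivative term in \eqref{eq:f-KTincoordinate} is of strictly lower order than the quadratic terms, so the leading coefficients satisfy the \emph{homogeneous} truncation of the indicial equations \eqref{eq:Indicialf-KT}; (iii) eliminating the $b_{\alpha 0}$'s up the root poset produces, by the count $\sum_k(\ell_k-\ell_{k+1})=\ell$, exactly $\ell$ homogeneous polynomial equations in $(a_{10},\dots,a_{\ell 0})$, which admit only the trivial solution, contradicting the assumption that some leading coefficient is nonzero. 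This dichotomy --- $\delta=1$ yields the inhomogeneous indicial system with its $\vert\mathcal{W}\vert$ nontrivial solutions, while $\delta\ge 2$ yields a homogeneous system with only the zero solution --- is the actual content of the proposition, and no version of it appears in your sketch.
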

\begin{proof}
The existence part will be dealt with later. Note from the differential equations \eqref{eq:f-KTincoordinate} that $b_{\alpha}$'s can be recursively solved once $a_{i}$'s are known. So to prove this Proposition, we essentially only need to show that the worst singularities for $a_{i}$'s are just simple poles (singularity of the first order). We prove this by contradiction. To detect the worst singularities, we may assume at least one of the $a_i$'s has a pole of order bigger than or equal to $2$ at $t = 0$, i.e. 
\begin{align}\label{eq:aalphai}
a_{i}(t) = \frac{1}{t^{\delta}}\sum \limits_{k = 0}^{\infty} a_{ik}t^k,
\end{align}
where $\delta \ge 2$ for $1 \le i \le \ell$ and there exists $1 \le i = \kappa \le \ell$ such that $a_{\kappa 0} \ne 0$. 

The equations for $b_{\alpha_j}$ then read as
\begin{align} \label{eq:balpha}
\frac{d b_{\alpha_j}}{d t} = - \sum_{i = 1}^{\ell}\left(a_i\alpha_j(H_{\alpha_i})\right)b_{\alpha_j} + \sum \limits_{i = 1}^{\ell}N_{\alpha_i, -\alpha_j - \alpha_i}b_{\alpha_j + \alpha_i}.
\end{align}
Substituting the Laurent series \eqref{eq:aalphai} into \eqref{eq:balpha}, we observe that on the left hand side of the $\kappa$-th equation of \eqref{eq:balpha} it has a pole of order at most $\delta_{\kappa} + 2$ with respect to $t$ at $t = 0$, while on the right hand side the term $a_{\kappa}b_{\alpha_{\kappa}}$ produces a pole of order $2\delta_{\kappa} + 1 \ge \delta_{\kappa} + 3 > \delta_{\kappa} + 2$ by assumption, so we need to have another term from the right hand side to balance this higher order singular term $a_{\kappa}b_{\alpha_{\kappa}}$ by requiring that either there exists another $a_{j0} \ne 0 \ (j \ne \kappa)$ or one of the terms $b_{\alpha_j + \alpha_{\kappa}}$ could do the job.
In either case, this will cause another equation in \eqref{eq:balpha} to have higher order singularities on the right hand side than the left hand side. This argument continues, and in the end we see that all equations in \eqref{eq:balpha} have terms with higher order poles on the right hand side than the left hand side. Comparing the coefficients of $t^{-2\delta - 1}$, we get
\begin{align*}
-\delta \sum_{i = 1}^{\ell}\left(\alpha_j(H_{\alpha_i})a_{i0}\right)a_{j0} + \sum \limits_{i = 1}^{\ell}N_{\alpha_i, -\alpha_j - \alpha_i}b_{\alpha_j + \alpha_i,0} = 0, \qquad 1 \le j \le \ell.
\end{align*}
Since there are at most $\ell -1$ non-vanishing $b_{\alpha_j + \alpha_i}$, we can eliminate them from the above system to obtain a quadratic homogeneous equation satisfied by $a_{i0}$'s.
Now we continue this process to deal with equations satisfied by $b_{\alpha}$ with root length $L(\alpha) = 2, 3, \dots$ until we reach the unique highest root $\alpha_0$.  When the number of roots in height $k+1$ equals the number of roots in height $k$, we can solve $b_{\alpha}$ ($L(\alpha) = k+1$) uniquely and substitute them into equations on the next height. And we get at least $1$ homogeneous equation for $a_{i0}$'s when the number of roots in height $k+1$ is less than the number of roots in height $k$ (We get two homogeneous equations of the same degree in type $D$ at places where the number of roots jumps by $2$ at contiguous heights). Eventually, we get $\ell$ homogeneous polynomial equations in the $\ell$ variables $(a_{10}, a_{20}, \dots, a_{\ell0})$ after we eliminate all the $b_{\alpha 0}$'s. It can then be checked that $a_{i0} = 0, 1 \le i \le \ell$ is the only solution for the $\ell$ homogeneous polynomial equations when $\delta > 1$ which leads to the desired contradiction.
\end{proof}

\begin{remark}
After finishing this paper we noted that the fact that $a_i(t)$'s have at worst simple pole singularities had already been observed by Singer in \cite{Singer1991} with similar argument and we keep the proof here to be consistent with the style in the following sections.  
\end{remark}

\subsection{The indicial equations for f-KT lattice}

Now we can substitute the Laurent series \eqref{eq:Laurentansatz} into the differential equations \eqref{eq:f-KTincoordinate} and follow Proposition \ref{prop:Laurentsolutions} to get Laurent solutions of \eqref{eq:f-KTincoordinate}. As a first step we need to deal with the indicial equations \eqref{eq:Indicial} for f-KT lattice, more explicitly we have
\begin{equation}\label{eq:Indicialf-KT}
\begin{array}{ll}
 a_{i0} + b_{i0} = 0 \qquad & 1 \le i \le \ell, \\
 \nu(\alpha)b_{\alpha 0} - \sum_{i = 1}^{\ell}\left(\alpha(H_{\alpha_i})a_{i0}\right)b_{\alpha 0} + \sum \limits_{i = 1}^{\ell}N_{\alpha_i, -\alpha - \alpha_i}b_{\alpha + \alpha_i, 0} = 0,\qquad & \alpha + \alpha_i \in \Delta_+.
\end{array}
\end{equation}
Note that the values of $b_{\alpha 0}$'s are uniquely determined by values of $a_{i0}$, so the vector $\vec{a}_0 = (a_{10}, \dots, a_{\ell 0})$ will be our concern in the following. 

We first note the following interesting fact regarding the cardinality of the Weyl group $\mathcal{W}$ of a simple Lie algebra $\mathfrak{g}$.
\begin{lemma}\label{lem:Weyl}
List all the positive roots of a simple Lie algebra $\mathfrak{g}$ by their heights, and assume that there are $m$ different heights, and there are $\ell_i$ positive roots in height $i$ $1 \le i \le m$, then
\begin{align*}
\vert \mathcal{W} \vert = 1^{\ell - \ell_1}2^{\ell_1 - \ell_2}3^{\ell_2 - \ell_3} \cdots m^{\ell_{m-1} - \ell_m} (m+1)^{\ell_m}.
\end{align*}
\end{lemma}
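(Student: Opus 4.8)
The plan is to route the computation through the exponents of $\mathfrak{g}$. By Chevalley's theorem the order of the Weyl group equals the product of the degrees of the fundamental invariants, so with the Weyl exponents $m_j = \deg(I_j) - 1$ one has
\[
|\mathcal{W}| = \prod_{j=1}^{\ell} \deg(I_j) = \prod_{j=1}^{\ell}(m_j + 1).
\]
The claimed identity will then follow once the height data $(\ell_1, \dots, \ell_m)$ and the multiset of exponents $\{m_1, \dots, m_\ell\}$ are recognized as mutually conjugate partitions. So the whole argument reduces to translating a product indexed by exponents into a product indexed by heights.

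First I would record the elementary fact that $\ell_1 = \ell$, since the height-one positive roots are exactly the simple roots; hence the leading factor $1^{\ell - \ell_1}$ is just $1$ and the asserted expression is really $\prod_{i=1}^{m}(i+1)^{\ell_i - \ell_{i+1}}$ under the convention $\ell_{m+1} = 0$. The core input is the classical height/exponent duality (Kostant, Macdonald, Shapiro--Steinberg): the generating function of positive roots by height factors as
\[
\sum_{\alpha \in \Delta_+} q^{L(\alpha)} = \sum_{i \ge 1} \ell_i\, q^{i} = \sum_{j=1}^{\ell}\bigl(q + q^{2} + \cdots + q^{m_j}\bigr).
\]
Comparing coefficients of $q^{i}$ yields $\ell_i = \#\{\, j : m_j \ge i \,\}$, which says precisely that $(\ell_1, \ell_2, \dots)$ is the conjugate of the partition formed by the exponents. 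In particular the number of distinct heights $m$ equals the largest exponent (the height of the highest root), and every height $1, \dots, m$ is attained.

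With the duality in hand the remainder is bookkeeping. Subtracting consecutive relations gives $\#\{\, j : m_j = i \,\} = \ell_i - \ell_{i+1}$ for $1 \le i \le m$, and grouping the factors of $\prod_j (m_j + 1)$ by the common exponent value $i$ produces
\[
|\mathcal{W}| = \prod_{j=1}^{\ell}(m_j+1) = \prod_{i=1}^{m}(i+1)^{\#\{ j : m_j = i \}} = \prod_{i=1}^{m}(i+1)^{\ell_i - \ell_{i+1}},
\]
which is exactly the stated product $1^{\ell-\ell_1}2^{\ell_1-\ell_2}\cdots (m+1)^{\ell_m}$.

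The hard part will be the height/exponent duality itself, which is not formal: a self-contained proof requires either the spectral theory of the principal (Coxeter) element or a uniform argument through the $q$-analogue of the Weyl denominator and Kostant's partition function, and absent such machinery one must simply cite it. Everything else --- the identity $\ell_1 = \ell$, the passage from ``$\ge i$'' counts to ``$= i$'' counts, and the regrouping of the product --- is routine. As a sanity check matching the running example of the paper, type $B_2$ has height data $(\ell_1,\ell_2,\ell_3) = (2,1,1)$, giving $2^{1}\,4^{1} = 8 = |\mathcal{W}(B_2)|$, consistent with the exponents $\{1,3\}$.
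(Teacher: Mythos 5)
Your proof is correct, but it follows a genuinely different route from the paper's. The paper in fact gives no proof at all: the remark following the lemma says the fact is ``surely known to experts,'' that no proper reference could be identified, and that ``it is not hard to check it for all simple Lie algebras directly'' --- i.e., the paper's justification is case-by-case inspection through the Cartan--Killing classification (illustrated there only for type $A_{\ell}$). Your argument is uniform and classification-free: it combines the Chevalley product formula $\vert \mathcal{W}\vert = \prod_{j=1}^{\ell}(m_j+1)$ with the Kostant--Shapiro--Steinberg height/exponent duality
\[
\sum_{\alpha \in \Delta_+} q^{L(\alpha)} = \sum_{j=1}^{\ell}\bigl(q + q^{2} + \cdots + q^{m_j}\bigr),
\]
after which the bookkeeping you perform is all correct: $\ell_1 = \ell$ makes the factor $1^{\ell - \ell_1}$ trivial, comparing coefficients gives $\ell_i = \#\{j : m_j \ge i\}$, hence $\ell_i - \ell_{i+1} = \#\{j : m_j = i\}$ under the convention $\ell_{m+1} = 0$, and regrouping the product over $j$ by the common value of $m_j$ yields exactly the displayed formula (your $B_2$ check also matches the paper's Section 4 data). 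The trade-off between the two approaches: the paper's verification is elementary and self-contained but explains nothing and tacitly relies on the classification; yours identifies the lemma as precisely the statement that the height partition and the exponent partition are mutually conjugate, combined with $\vert\mathcal{W}\vert = \prod_j \deg I_j$, at the cost of invoking two nontrivial classical theorems which, as you rightly flag, must either be cited (e.g., Kostant's 1959 paper on the principal three-dimensional subgroup, or Humphreys, \emph{Reflection Groups and Coxeter Groups}, Section 3.20) or re-proved via the principal Coxeter element machinery. Since the paper's own remark laments the absence of a proper reference, your route would actually fill a genuine gap in the text.
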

\begin{remark}
The fact in Lemma \ref{lem:Weyl} is surely known to experts but we couldn't identify a proper reference, and it is not hard to check it for all simple Lie algebras directly. For example, in Lie algebra of type $A_{\ell}$, we have $m = \ell$, $\ell_i = \ell+1-i$ and $\vert \mathcal{W} \vert = (\ell+1)!$.
\end{remark}

Now we have
\begin{proposition}\label{conj:Nonnegative}
The indicial equations \eqref{eq:Indicialf-KT} have ${ \vert \mathcal{W} \vert}$ many solutions where $\mathcal{W}$ is the Weyl group of $\mathfrak{g}$, and all of them are non-negative integral-valued vectors. 
\end{proposition}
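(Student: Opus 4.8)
The plan is to set up an explicit bijection between the solution set $\mathcal{I}$ of the indicial equations \eqref{eq:Indicialf-KT} and the Weyl group $\mathcal{W}$, and to read off non-negativity and integrality from the resulting closed form. First I would simplify \eqref{eq:Indicialf-KT}. Using $b_{i0} = -a_{i0}$ to eliminate the simple-root unknowns, and noting that the equation attached to a root $\alpha$ couples $b_{\alpha0}$ only to the $b_{\beta0}$ with $\beta = \alpha + \alpha_i$ of strictly larger height, the remaining system is triangular with respect to the height filtration: processing roots from the highest downwards, the coefficient of $b_{\alpha0}$ is $\nu(\alpha) - \sum_i \alpha(H_{\alpha_i})a_{i0} = 1 + L(\alpha) - \alpha(h_0)$, where $h_0 := \sum_i a_{i0}H_{\alpha_i} \in \mathfrak{h}$. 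Writing $\rho^{\vee} \in \mathfrak{h}$ for the unique element with $\alpha_i(\rho^{\vee}) = 1$ for all $i$ (so that $L(\alpha) = \alpha(\rho^{\vee})$ for every root), this coefficient becomes $1 - \alpha(h_0 - \rho^{\vee})$. Thus each $h_0$ for which no $\alpha(h_0 - \rho^{\vee})$ equals $1$ determines all the $b_{\alpha0}$ uniquely, so that by Proposition \ref{prop:Laurentsolutions} the whole problem reduces to describing the admissible leading coefficients $(a_{10},\dots,a_{\ell0})$.

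For the correspondence with $\mathcal{W}$ I would use the $\tau$-function description. By Corollary \ref{cor:diagonal} and \eqref{eq:ai}, $a_{i0}$ equals the order of vanishing of $\tau_i$ at the singular time, and for the boundary behaviour recorded in \eqref{eq:tau} the relevant function is $\tau_i(t;w) = (v^{\omega_i}, e^{tL_0}u\dot{w}v^{\omega_i})$. Since $u\dot{w}v^{\omega_i}$ has $w\omega_i$ as its highest weight and $L_0 \in e + \mathfrak{b}_-$ raises weights only through its principal part $e = \sum_i X_i$, a standard weight computation (the $w\omega_i$-component is carried up to the one-dimensional top weight line only by repeated application of $e$) gives $\mathrm{ord}_{t=0}\,\tau_i(t;w) = L(\omega_i - w\omega_i)$. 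Because $\omega_i$ is dominant, $w\omega_i \preceq \omega_i$, so $\omega_i - w\omega_i$ is a non-negative integral combination of simple roots and
\[
a_{i0}^{\,w} = L(\omega_i - w\omega_i) = \omega_i(\rho^{\vee} - w^{-1}\rho^{\vee}) \in \mathbb{Z}_{\ge 0},
\]
which gives non-negativity and integrality at once. Distinctness of the $|\mathcal{W}|$ vectors $(a_{i0}^{\,w})_i$ follows from regularity of $\rho^{\vee}$: if $a^{w} = a^{w'}$ then $\omega_i(w^{-1}\rho^{\vee}) = \omega_i(w'^{-1}\rho^{\vee})$ for all $i$, hence $w^{-1}\rho^{\vee} = w'^{-1}\rho^{\vee}$ and $w = w'$, the stabilizer of $\rho^{\vee}$ being trivial. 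Combined with a direct verification, via the triangular reduction above, that each $h_0 = \rho^{\vee} - w^{-1}\rho^{\vee}$ genuinely solves \eqref{eq:Indicialf-KT}, this produces $|\mathcal{W}|$ distinct non-negative integral solutions.

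The main obstacle is the reverse inclusion, namely that \eqref{eq:Indicialf-KT} has no further solutions. I see three routes. (i) A counting argument along the height filtration reproducing exactly the product $1^{\ell-\ell_1}2^{\ell_1-\ell_2}\cdots(m+1)^{\ell_m}$ of Lemma \ref{lem:Weyl}: at each height one records whether a root is resonant ($\alpha(h_0-\rho^{\vee})=1$, which forces a compatibility condition on the already-determined higher data) or not, and then shows the admissible resonance patterns are counted by this product. (ii) A degree estimate: after clearing denominators the reduced system is $\ell$ polynomial equations in $(a_{10},\dots,a_{\ell0})$ whose B\'ezout number equals $\prod_k(m_k+1) = |\mathcal{W}|$, matching the lower bound; here one must rule out spurious roots at infinity, which is where the nilpotency $I_k(M)=0$ of the leading matrix $M = e + h_0 + \sum_\alpha b_{\alpha0}Y_\alpha$ (forced by weight homogeneity and Theorem \ref{thm:K-matrixgeneral}) enters. (iii) A geometric argument identifying each solution with a $T$-fixed point $wB_+$ of $G/B_+$ through the limiting behaviour of the scaling solution of Theorem \ref{thm:K-matrixgeneral}(3).

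I expect route (i), tuned to Lemma \ref{lem:Weyl}, to be the most self-contained, and I anticipate that the delicate step throughout is precisely this upper bound: showing that every resonant configuration of \eqref{eq:Indicialf-KT} is realized by one of the $h_0 = \rho^{\vee} - w^{-1}\rho^{\vee}$ and by no extraneous branch. The lower bound, together with non-negativity and integrality, is comparatively clean once the $\tau$-function order-of-vanishing formula is in hand.
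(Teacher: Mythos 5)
Your lower-bound half is essentially the paper's own argument: the paper constructs, for each $w \in \mathcal{W}$, the monomial $\tau$-functions $\tilde\tau_i(t;w) = (v^{\omega_i}, \exp(te)\dot{w} v^{\omega_i}) = t^{a^w_{i0}}$ and reads off that the exponents $a^w_{i0}$ solve \eqref{eq:Indicialf-KT} and are non-negative integers; your closed form $a^w_{i0} = L(\omega_i - w\omega_i) = \omega_i(\rho^{\vee} - w^{-1}\rho^{\vee})$ and the distinctness argument via regularity of $\rho^{\vee}$ match this (the closed form is the content of Theorem \ref{thm:Onetoone}, and your distinctness argument is cleaner than what the paper makes explicit). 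One caveat: your ``standard weight computation'' only gives $\mathrm{ord}_{t=0}\,\tau_i(t;w) \ge L(\omega_i - w\omega_i)$; equality requires the non-vanishing of the coefficient $(v^{\omega_i}, e^{k}\dot{w} v^{\omega_i})$ with $k = L(\omega_i - w\omega_i)$, which is not formal --- the paper establishes it in the proof of Theorem \ref{thm:Onetoone} by the reduced-word computation of Marsh--Rietsch type.

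The genuine gap is the upper bound: you never prove that \eqref{eq:Indicialf-KT} has no solutions besides the $\vert\mathcal{W}\vert$ you construct, and both assertions of the Proposition (the exact count, and that \emph{all} solutions are non-negative integral vectors) require it. You list three routes and complete none, yourself calling this ``the delicate step.'' What you missed is that your route (ii) closes the gap at once and is exactly what the paper does: eliminating the $b_{\alpha 0}$'s as in the proof of Proposition \ref{prop:Uniqueness} leaves $\ell$ polynomial equations in $(a_{10},\dots,a_{\ell 0})$, with precisely $\ell_{j-1}-\ell_j$ of them of degree $j$, so the B\'ezout number is $\prod_{j} j^{\ell_{j-1}-\ell_j}\cdot(m+1)^{\ell_m} = \vert\mathcal{W}\vert$ by Lemma \ref{lem:Weyl}, whence at most $\vert\mathcal{W}\vert$ solutions. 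Your stated obstacle --- ``spurious roots at infinity'' --- is not an obstacle for an \emph{upper} bound: excess intersection at infinity can only make the number of affine solutions smaller than the B\'ezout count, never larger; controlling infinity (or the nilpotency input you propose) would matter only if you wanted the count \emph{from} B\'ezout, and exactness instead comes from your explicit lower bound. (The one point B\'ezout does leave open --- ruling out positive-dimensional solution components --- is not addressed by the paper either.) In short, the two pieces you kept separate, explicit solutions for the lower bound and B\'ezout for the upper bound, assemble verbatim into the paper's proof; you had the pieces but declined to combine them.
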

\begin{proof}
Following the proof of Proposition \ref{prop:Uniqueness}, we can eliminate all the $b_{\alpha 0}$'s from the indicial equations \eqref{eq:Indicialf-KT} and obtain $\ell$ polynomial equations for $a_{i0}$'s (The difference from Proposition \ref{prop:Uniqueness} is that when $\delta = 1$ the $\ell$ equations are not homogeneous anymore, thus admitting nontrivial solutions). Note that we get $k$ degree $j$ polynomial equation for $a_{i0}$'s exactly when $\ell_{j-1} - \ell_{j} = k$. By Lemma \ref{lem:Weyl} and B\'ezout's theorem, we know that the number of solutions of the indicial equations is at most $\vert \mathcal{W} \vert$. 

To show the second part of this Proposition, we construct $\vert \mathcal{W} \vert$ many different solutions for \eqref{eq:Indicialf-KT} explicitly. 

Let
\begin{align}\label{eq:tildetau}
\tilde\tau_i(t; w) = (v^{\omega_i}, \exp(te)\dot{w} \cdot v^{\omega_i}) = t^{a^w_{i0}},
\end{align}
where $e = \sum \limits_{i = 1}^{\ell}X_{\alpha_i}$, $\dot{w} \in G$ is a representative of $w \in \mathcal{W}$, $v^{\omega_i}$ is the highest weight vector in the $i$-th fundamental representation $V^{\omega_i}$ and $(\cdot, \cdot)$ is the Hermitian product on $V^{\omega_i}$ introduced in Section \ref{sec:representation}. Then it can be checked directly that \begin{align*}
a^w_i(t) = \frac{d}{dt}\ln\tilde{\tau}_i(t; w) = \frac{a^{w}_{i0}}{t}
\end{align*}
is a solution to \eqref{eq:f-KTincoordinate} in the form of Theorem \ref{thm:K-matrixgeneral} and $a^w_{i0}$ is a solution to \eqref{eq:Indicialf-KT}.
\end{proof}

\begin{remark}
Proposition \ref{conj:Nonnegative} shows that for $\Lambda \in \mathbb{C}^{\ell}$ the isospectral variety $\mathcal{F}_{\Lambda}$ intersects with all the Painlev\'e divisors, equivalently this means that $\overline{c_{\Lambda}(\mathcal{F}_{\Lambda})}$ intersects with all the Bruhat cells.
\end{remark}

Note that Proposition \ref{conj:Nonnegative} also means that each solution $\vec{a}_0 = (a_{10}, \dots, a_{\ell 0})$ of the algebraic system \eqref{eq:Indicialf-KT} is associated with a Weyl group element and is a set of non-negative integers. This is a non-trivial fact and we would like to explore its combinatoric meaning in the following.

We denote by $\Phi^{\pm}_w$ the set of positive (negative) coroots of $(\mathfrak{g}, \mathfrak{h})$ which are mapped into the set of negative (positive) coroots of $(\mathfrak{g}, \mathfrak{h})$ by $w$, then $\Phi^{\pm}_w$ and their complements are closed under addition in the set of positive (negative) coroots by which we mean the sum of any two elements in each set either belongs to the same set or is not in the coroot system. 

The more precise one-to-one correspondence between Weyl group elements and solutions of indicial equations \eqref{eq:Indicialf-KT} comes in the following manner.
\begin{theorem}\label{thm:Onetoone}
Let 
\begin{align*}
\sum \limits_{\check{\alpha} \in \Phi^+_w}\check{\alpha} = \sum \limits_{j = 1}^{\ell} \check{q}_j\check{\alpha}_j, \quad \check{q}_j \in \mathbb{N}_0,
\end{align*}
where ${ \check{\alpha}}$ denotes the coroot associated with ${ \alpha \in \Delta_+}$.
Then the vectors ${ \vec{q}_w = (\check{q}_1, \dots, \check{q}_{\ell}), w \in \mathcal{W}}$ are solutions $\vec{a}_0 = (a_{10}, \dots, a_{\ell 0})$ to the indicial equations \eqref{eq:Indicialf-KT} and they give all solutions to the indicial equations. 

We denote the solution of indicial equations associated with $w \in \mathcal{W}$ by $\vec{a}^w_{0}$, then reformulating the above result we have
\begin{align}\label{eq:ai0w}
a_{i0}^w = \sum \limits_{\check{\alpha} \in \Phi_w^-}\kappa(\omega_i, w^{-1}\check{\alpha}),
\end{align}
where $\omega_i$ is the $i$th fundamental weight, and $\kappa(\cdot, \cdot)$ is the bilinear form on $\mathfrak{h}^*$ induced by the restriction of the Killing form to $\mathfrak{h}$ which is invariant under the Weyl group action. 
\end{theorem}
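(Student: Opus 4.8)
The plan is to reduce the whole statement to Proposition \ref{conj:Nonnegative}, which already produces, for each $w\in\mathcal{W}$, a solution $\vec a^w_0$ of the indicial equations \eqref{eq:Indicialf-KT} via the identity $\tilde\tau_i(t;w)=(v^{\omega_i},\exp(te)\dot w\,v^{\omega_i})=t^{a^w_{i0}}$, and which guarantees that these exhaust the $|\mathcal{W}|$ solutions. Thus everything reduces to three tasks: (i) computing the exponent $a^w_{i0}$ explicitly; (ii) matching it with the coroot coefficient $\check q_i$ so as to obtain the two displayed forms; and (iii) checking that $w\mapsto\vec a^w_0$ is injective, so that the correspondence is genuinely one-to-one. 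I would carry out (i)--(iii) in that order.

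For step (i) I would note that $\dot w\,v^{\omega_i}$ is a scalar multiple of the extremal weight vector of weight $w\omega_i$, while $e=\sum_j X_{\alpha_j}$ raises weights by simple roots. Since the weight vectors are orthonormal, the matrix coefficient $(v^{\omega_i},\exp(te)\dot w\,v^{\omega_i})$ only receives contributions from those terms of $\exp(te)$ that carry the weight $w\omega_i$ up to the highest weight $\omega_i$; each factor $X_{\alpha_j}$ contributes one power of $t$, so the exponent is the number of simple-root steps, namely the height of $\omega_i-w\omega_i$. Cleanly, I would use the principal grading (degree $=$ height, generated by $\mathrm{Ad}$ of the one-parameter subgroup $s^{\check\rho}$, where $\check\rho=\sum_j\check\omega_j=\tfrac12\sum_{\check\alpha>0}\check\alpha$): a computation $f(st)=s^{\langle\omega_i-w\omega_i,\check\rho\rangle}f(t)$ shows the coefficient is homogeneous in $t$, forcing the single monomial recorded in Proposition \ref{conj:Nonnegative}. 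Hence
\[ a^w_{i0}=\langle\omega_i-w\omega_i,\check\rho\rangle=\langle\omega_i,\check\rho-w^{-1}\check\rho\rangle, \]
the last equality by Weyl-invariance of the pairing.

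For step (ii) I would invoke the coroot analogue of the classical $\rho$-identity, $\check\rho-w^{-1}\check\rho=\sum_{\check\alpha\in\Phi^+_w}\check\alpha$ (the half-sum of positive coroots decomposes along the inversion set of $w$, a standard length-induction argument). Writing $\sum_{\check\alpha\in\Phi^+_w}\check\alpha=\sum_j\check q_j\check\alpha_j$ and using the duality $\langle\omega_i,\check\alpha_j\rangle=\delta_{ij}$ between fundamental weights and simple coroots gives $a^w_{i0}=\langle\omega_i,\sum_{\check\alpha\in\Phi^+_w}\check\alpha\rangle=\check q_i$, which is exactly the first form of the theorem. The second form \eqref{eq:ai0w} then follows by the change of summation variable $\check\gamma=w^{-1}\check\alpha$, which bijects $\Phi^+_w$ with the corresponding set of negative coroots carried to positive ones, together with the Weyl-invariance of $\kappa$; I would track the $w^{-1}$ and the $\pm$ labels carefully here, as this reindexing is where sign conventions must be reconciled.

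Finally, for step (iii) I would observe that $\check\rho$ is regular, so its $\mathcal{W}$-stabilizer is trivial; consequently $\langle\omega_i,w^{-1}\check\rho\rangle=\langle\omega_i,w'^{-1}\check\rho\rangle$ for all $i$ forces $w^{-1}\check\rho=w'^{-1}\check\rho$ and hence $w=w'$. Combined with the exact count of $|\mathcal{W}|$ solutions from Proposition \ref{conj:Nonnegative}, this makes $w\mapsto\vec a^w_0$ a bijection. I expect the genuinely delicate point to be the bookkeeping in step (ii): the leading exponent is naturally a \emph{root} height of $\omega_i-w\omega_i$, and recognizing that it equals a \emph{coroot} coefficient vector --- the source of the Langlands-dual flavor emphasized after the statement --- rests on the duality $\langle\omega_i,\check\alpha_j\rangle=\delta_{ij}$ and the coroot $\check\rho$-identity, with the passage to the second form requiring the reindexing above.
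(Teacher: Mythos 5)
Your proposal is correct, but it reaches the exponent formula by a genuinely different route than the paper. The paper's proof is a generalized-minor computation in the style of Marsh--Rietsch: it fixes a reduced expression $w=s_{i_1}\cdots s_{i_n}$, uses the $SL_2$ identity $\dot{s}_i\check{\alpha}_i(t^{-1})=y_i(-t^{-1})x_i(t)y_i(-t^{-1})$, and runs a reverse induction along the word to show $\sigma_j(t;w)=\prod_{l=1}^n t^{\kappa(s_{i_{l+1}}\cdots s_{i_n}\omega_j,\check{\alpha}_{i_l})}$ exactly; the terms $s_{i_n}\cdots s_{i_{l+1}}\check{\alpha}_{i_l}$ are then recognized as precisely the elements of $\Phi^+_w$. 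You instead pin down the exponent by torus equivariance under the principal grading, giving $a^w_{i0}=\langle\omega_i-w\omega_i,\check{\rho}\rangle$, and then apply the coroot $\check{\rho}$-identity $\check{\rho}-w^{-1}\check{\rho}=\sum_{\check{\alpha}\in\Phi^+_w}\check{\alpha}$ (the coroot version of the paper's Lemma \ref{lem:weylroots}, proved there only later and for roots) together with $\kappa(\omega_i,\check{\alpha}_j)=\delta_{ij}$. Your route is shorter and more conceptual, avoids reduced-word combinatorics, makes the Langlands-dual flavor transparent, and adds something the paper leaves implicit: the injectivity of $w\mapsto\vec{a}^w_0$ via regularity of $\check{\rho}$, which is what upgrades ``$|\mathcal{W}|$ candidate solutions'' to ``all solutions.'' The trade-off is that your homogeneity argument alone only yields $\tilde\tau_i(t;w)=c\,t^{\langle\omega_i-w\omega_i,\check{\rho}\rangle}$ with $c$ possibly zero: nothing in the grading argument excludes $(v^{\omega_i},e^k\dot{w}v^{\omega_i})=0$, in which case the logarithmic derivative is meaningless. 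You cover this by citing Proposition \ref{conj:Nonnegative}, which does assert that $\tilde\tau_i(t;w)$ is a nonzero monomial, so your proof is legitimate within the paper's logical structure; but be aware that this nonvanishing is exactly what the paper's reduced-word induction establishes (the minor is computed to be a monomial with nonzero coefficient), so the paper's proof is self-contained on this point while yours inherits it. Your care about the $w$ versus $w^{-1}$ reindexing in passing to the form \eqref{eq:ai0w} is warranted: the paper's prose definition of $\Phi^-_w$ (via $w$) is inconsistent with how it is used there and in the proof of Theorem \ref{thm:Kmatrix} (via $w^{-1}$), and your change of variable $\check{\alpha}\mapsto w\check{\alpha}$ is the correct reconciliation.
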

\begin{proof}
We would like to identify the degrees of the $\tilde{\tau}_i(t; w)$'s introduced in the proof of Proposition \ref{conj:Nonnegative} explicitly. 
For every $\alpha_i \in \Pi$ there is an associated homomorphism
\[\varphi_i: {SL}_2 \to {G}.\]
Consider the $1$-parameter subgroups in ${G}$ defined by
\begin{align*}
x_i(t) = \varphi_i\begin{pmatrix}1 & t \\ 0 & 1 \end{pmatrix}, \quad y_i(t) = \varphi_i\begin{pmatrix} 1 & 0 \\ t & 1\end{pmatrix}, \quad \check{\alpha}_i(t) = \varphi_i\begin{pmatrix} t & 0 \\
0 & t^{-1}\end{pmatrix}.
\end{align*}
Let $s_i \in \mathcal{W}$ be a simple reflection, then its representative in ${G}$ is given by
\[\dot{s}_i := \varphi_i\begin{pmatrix} 0 & 1 \\ -1 & 0 \end{pmatrix}.\]
Then we have
\begin{align}\label{eq:anidentity}
\dot{s}_i\check{\alpha}_i(t^{-1}) = y_i(-t^{-1})x_i(t)y_i(-t^{-1}).
\end{align}

For $w \in \mathcal{W}$, we choose a reduced expression $w = s_{i_1}s_{i_2} \cdots s_{i_n} \in \mathcal{W}$ and set $\dot{w} = \dot{s}_{i_1}\dot{s}_{i_2} \cdot \dots \cdot \dot{s}_{i_n}$ to be a representative of $w$ in ${G}$. It is well-known that this product is independent of the choice of reduced expressions for $w$ (c.f. \cite{Lusztig1994}).

We would like to show that
\begin{align*}
\tilde\tau_j(t; w) \varpropto \prod \limits_{l = 1}^n t^{\kappa(s_{i_{l+1}}s_{i_{l+2}} \cdots s_{i_n}\omega_j, \check{\alpha}_{i_l})}.
\end{align*}
where $\tilde{\tau}_j$ is defined in \eqref{eq:tildetau} and $\check{\alpha}_{i_l}$ is the $i_l$-th coroot.

First we note that
\begin{align*}
\tilde\tau_j(t; w) \varpropto \sigma_j(t; w) := ( v^{\omega_j}, x_{i_n}(t)x_{i_{n-1}}(t) \cdots x_{i_1}(t)\dot{s}_{i_1}\dot{s}_{i_2} \cdots \dot{s}_{i_n} v^{\omega_j}).
\end{align*}
So our goal in the following is to show that
\begin{align}\label{eq:sigmaJ}
\sigma_j(t; w) = \prod \limits_{l = 1}^n t^{\kappa(s_{i_{l+1}}s_{i_{l+2}} \cdots s_{i_n}\omega_j, \check{\alpha}_{i_l})}.
\end{align}

For $k = 1, 2, \dots, n+1$, we prove by reverse induction on $k$, that
\begin{align}\label{eq:sigmaK}
\sigma_j(t; w^{(k)}) = \prod \limits_{l = k}^n t^{\kappa(s_{i_{l+1}}s_{i_{l+2}} \cdots s_{i_n}\omega_j, \check{\alpha}_{i_l})}.
\end{align}
where ${w}^{(k)} = {s}_{i_k}{s}_{i_{k+1}} \cdots {s}_{i_n}$. The start of the induction is clear. Suppose that the result holds for $k+1$, i.e., for $w^{(k+1)} = \dot{s}_{i_{k+1}} \cdots \dot{s}_{i_n}$, and consider $\dot{w}^{(k)} = \dot{s}_{i_k}\dot{s}_{i_{k+1}} \cdots \dot{s}_{i_n}$. Denote $g^{(k)} = x_{i_n}(t)x_{i_{n-1}}(t) \cdots x_{i_k}(t)$. Then, since $w^{(k+1)}\omega_j$ and $w^{(k)}\omega_j$ are extremal weights, and $w^{(k)}\omega_j = s_{i_k}w^{(k+1)}\omega_j \le w^{(k+1)}\omega_j$ we have
\begin{align*}
\sigma_j(t; w^{(k)}) & = (v^{\omega_j}, g^{(k+1)}x_{i_k}(t)\dot{w}^{(k)} \cdot v^{\omega_j})\\
& = ( v^{\omega_j}, g^{(k+1)}\dot{w}^{(k+1)} \cdot v^{\omega_j}) (\dot{w}^{(k+1)} \cdot v^{\omega_j}, x_{i_k}(t)\dot{w}^{(k)} \cdot v^{\omega_j} ).
\end{align*}

By equation \eqref{eq:anidentity}, we have
\begin{align*}
\sigma_j(t; w^{(k)}) & = (v^{\omega_j}, g^{(k+1)}\dot{w}^{(k+1)} \cdot v^{\omega_j}) (\dot{w}^{(k+1)} \cdot v^{\omega_j}, y_{i_k}(t^{-1})\dot{s}_{i_k}\check{\alpha}_{i_k}(t^{-1})y_{i_k}(t^{-1})\dot{w}^{(k)} \cdot v^{\omega_j} )\\
& = t^{\kappa(s_{i_{k+1}} \dots s_{i_n}\omega_j, \check{\alpha}_{i_k})}\sigma_j(t; {w}^{(k+1)}).
\end{align*}
Equation \eqref{eq:sigmaK} for $k$ now follows from the inductive hypothesis.

From equation \eqref{eq:sigmaJ}, we get
\begin{align*}
\sigma_j(t; {w}) = \prod \limits_{l = 1}^n t^{\kappa(\omega_j, s_{i_m} \cdots s_{i_{l+2}}s_{i_{l+1}}\check{\alpha}_{i_l})}.
\end{align*}
Now we note that
\begin{align*}
\sum \limits_{l = 1}^n s_{i_{n}} \cdots s_{i_{l+2}}s_{i_{l+1}}\check{\alpha}_{i_l}
\end{align*}
records in a step by step manner which positive coroots are mapped to the negative ones by $w$, more explicitly
\begin{align*}
w \cdot s_{i_{n}} \cdots s_{i_{l+2}}s_{i_{l+1}}\check{\alpha}_{i_l} & = s_{i_1} \cdots s_{i_n} \cdot s_{i_{n}} \cdots s_{i_{l+2}}s_{i_{l+1}}\check{\alpha}_{i_l}\\
& = s_{i_1} \cdots s_{i_l}\check{\alpha}_{i_l},
\end{align*}
which is a negative coroot for any $1 \le l \le n$ and they are all of them. We are done!
\end{proof}
\begin{remark}
The essential ingredient of the proof of Theorem \ref{thm:Onetoone} is contained in the proof of Lemma 7.5 \cite{Marsh-Rietsch2004}, and with proper notion it turns out to be a very special case of a general formula for generalized minors there. 
\end{remark}

\begin{remark}
In Theorem \ref{thm:Onetoone}, we relate the solutions of the indicial equations with the actions of the Weyl group elements on the coroot system. Since at the Lie algebra level, the coroot system of $\mathfrak{g}$ is isomorphic to the root system of its Langlands dual $\check{\mathfrak{g}}$ whose Cartan matrix is the transpose of the Cartan matrix of $\mathfrak{g}$, the relation we found here might be interesting in other branches of mathematics. Similar links between tridiagonal Toda lattice and coroots were observed for example in \cite{Flaschka-Haine1991} and \cite{Casian-Kodama2006}.
\end{remark}

Y. Li pointed out to me that properties of the sets $\Psi^{\pm}_w$ (the dual of $\Phi^{\pm}_w$) were investigated in \cite{Papi1994} by P. Papi. 
Let $\Psi^{\pm}_w$ be the set of positive (negative) roots of $(\mathfrak{g}, \mathfrak{h})$ which are mapped into the set of negative (positive) ones by $w^{-1}$. Let
 \[\sum \limits_{\alpha \in \Psi^+_w}\alpha = \sum \limits_{j = 1}^{\ell}q_j\alpha_j, \]
then Papi showed that the sets $\Psi^{\pm}_w \subseteq \Delta_{\pm}$ associated with $w \in \mathcal{W}$ are uniquely determined by the property that these sets and their complements in $\Delta_{\pm}$ respectively are closed under addition. More precisely,
\begin{theorem}[\cite{Papi1994}]
$S \subseteq \Delta_+$ is associated with some $w \in \mathcal{W}$ if and only if it satisfies the following properties:
\begin{enumerate}
\item If $\alpha, \beta \in S$ and $\alpha + \beta \in \Delta$, then $\alpha + \beta \in S$.
\item If $\alpha + \beta \in S, \alpha, \beta \in \Delta_+$, then $\alpha \in S$ or $\beta \in S$.
\end{enumerate}
Moreover, any such set is associated to a unique $w \in W$.
\end{theorem}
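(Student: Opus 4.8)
The plan is to prove the two implications separately, writing $\Psi^+_w = \{\alpha \in \Delta_+ : w^{-1}\alpha \in \Delta_-\}$ and recalling that $|\Psi^+_w| = l(w)$. Conditions (1) and (2) together say that $S$ and its complement in $\Delta_+$ are both closed under addition (a \emph{biconvex} set). For the forward implication (that $S = \Psi^+_w$ is biconvex) everything is formal. If $\alpha, \beta \in \Psi^+_w$ and $\alpha + \beta \in \Delta$, then $\alpha + \beta \in \Delta_+$ and $w^{-1}(\alpha + \beta) = w^{-1}\alpha + w^{-1}\beta$ is a root that is a sum of two negative roots, hence negative, so $\alpha + \beta \in \Psi^+_w$; this is (1). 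If $\alpha + \beta \in \Psi^+_w$ with $\alpha, \beta \in \Delta_+$, then $w^{-1}\alpha + w^{-1}\beta \in \Delta_-$, and since a sum of two positive roots is never negative, at least one of $w^{-1}\alpha, w^{-1}\beta$ lies in $\Delta_-$, giving $\alpha \in \Psi^+_w$ or $\beta \in \Psi^+_w$; this is (2).

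The content is the converse, which I would prove by induction on $|S|$, with $S = \emptyset$ corresponding to $w = \mathrm{id}$. The first ingredient is an extraction lemma: a nonempty biconvex set $S$ contains a simple root. Indeed, pick $\alpha \in S$ of minimal height; if $\alpha$ were not simple I could write $\alpha = \gamma + \alpha_i$ with $\alpha_i \in \Pi$ and $\gamma \in \Delta_+$, and property (2) would force $\gamma \in S$ or $\alpha_i \in S$, either contradicting minimality of $L(\alpha)$. So I fix a simple root $\alpha_i \in S$ and set
\[S' := s_i\bigl(S \setminus \{\alpha_i\}\bigr) = s_i(S) \cap \Delta_+,\]
using that $s_i$ permutes $\Delta_+ \setminus \{\alpha_i\}$ and sends $\alpha_i$ to $-\alpha_i$; thus $|S'| = |S| - 1$.

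The heart of the argument is to check that $S'$ is again biconvex. Condition (1) for $S'$ transports directly: if $\mu, \nu \in S'$ with $\mu + \nu \in \Delta$, then writing $\mu = s_i\alpha$, $\nu = s_i\beta$ with $\alpha, \beta \in S \setminus \{\alpha_i\}$ gives $\alpha + \beta = s_i(\mu+\nu) \in \Delta_+$, so $\alpha + \beta \in S$ by (1) for $S$, and since $L(\alpha + \beta) \ge 2$ we have $\alpha + \beta \neq \alpha_i$, whence $\mu + \nu = s_i(\alpha+\beta) \in S'$. Condition (2) is where the special role of $\alpha_i$ forces a case analysis, and this is the step I expect to require the only genuine care: given $\mu + \nu \in S'$ with $\mu, \nu \in \Delta_+$, set $\rho = s_i\mu + s_i\nu \in S \setminus \{\alpha_i\}$. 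When neither $\mu$ nor $\nu$ equals $\alpha_i$, both $s_i\mu, s_i\nu$ are positive and (2) for $S$ yields $\mu \in S'$ or $\nu \in S'$; when, say, $\mu = \alpha_i$, then $s_i\mu = -\alpha_i$ is negative and one instead writes $s_i\nu = \rho + \alpha_i$ with $\rho, \alpha_i \in S$ and invokes (1) for $S$ to get $s_i\nu \in S \setminus \{\alpha_i\}$, i.e. $\nu \in S'$.

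With $S'$ biconvex, induction supplies a unique $w'$ with $\Psi^+_{w'} = S'$ and $l(w') = |S|-1$. Since $\alpha_i \notin S'$, the cocycle identity $\Psi^+_{s_i w'} = s_i(\Psi^+_{w'}) \sqcup \{\alpha_i\}$, valid precisely when $\alpha_i \notin \Psi^+_{w'}$, gives $\Psi^+_{s_i w'} = s_i(S') \sqcup \{\alpha_i\} = S$, so $w := s_i w'$ works and $l(w) = l(w')+1 = |S|$. Uniqueness then follows from the classical injectivity of $w \mapsto \Psi^+_w$: two elements with the same inversion set share the same simple descents, and peeling these off simultaneously reduces to the trivial case. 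In summary, once the biconvexity of $S'$ is established, the reconstruction $w = s_i w'$ and the uniqueness are bookkeeping through the standard length/inversion-set dictionary, so the case $\mu = \alpha_i$ in condition (2) above is the single point where I expect to have to work.
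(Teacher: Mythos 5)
Your proof is correct. Note, however, that the paper itself does not prove this statement at all: it is quoted verbatim from Papi's 1994 paper (cited as \cite{Papi1994}) as a known characterization, so there is no internal proof to compare against. Your argument is a clean, self-contained version of the standard one and is essentially the same inductive scheme as in the literature: the forward direction is the formal sign computation; the converse extracts a simple root from a minimal-height element of $S$ (correct, via $\alpha = \gamma + \alpha_i$ and property (2)), passes to $S' = s_i(S \setminus \{\alpha_i\})$, verifies biconvexity of $S'$ (your case split at $\mu = \alpha_i$, where one trades property (2) for property (1) applied to $s_i\nu = \rho + \alpha_i$, is exactly the point that needs care, and you handle it correctly), and reassembles $w = s_i w'$ via the identity $\Psi^+_{s_i w'} = s_i(\Psi^+_{w'}) \sqcup \{\alpha_i\}$, which is valid since $\alpha_i \notin S'$; uniqueness follows from injectivity of $w \mapsto \Psi^+_w$, or by running the same peeling argument on two candidates simultaneously. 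The only ingredients you leave as classical (the cocycle identity for inversion sets, $|\Psi^+_w| = l(w)$, and injectivity of the inversion-set map) are genuinely standard, so nothing is missing.
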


A consequence of Theorem \ref{thm:Onetoone} is that the vector $\vec{q}_w = (\check{q}_1, \dots, \check{q}_{\ell})$ uniquely determines $w \in \mathcal{W}$ which can also be proved directly.

Papi then showed that there exists a one-to-one correspondence between reduced expressions of elements $w \in \mathcal{W}$ and orderings of the set $\Psi^+_w$ satisfying
\begin{enumerate}
\item If $\alpha, \beta \in \Psi^+_w,\ \alpha < \beta$, and $\alpha + \beta \in \Delta$, then $\alpha + \beta \in \Psi^+_w$ and $\alpha < \alpha + \beta < \beta$.
\item If $\alpha + \beta \in \Psi^+_w,\ \alpha, \beta \in \Delta_+$, then $\alpha$ or $\beta$ (or both) belong to $\Psi^+_w$ and one of them precedes $\alpha + \beta$.
\end{enumerate}
More precisely, let $s_{i_1} \cdots s_{i_m}$ be a reduced expression of $w \in \mathcal{W}$, then the order on $\Psi^+_w$ associated with this reduced expression is
\[\beta_1 := \alpha_{i_1}; \qquad \beta_k := s_{i_1} \cdots s_{i_{k-1}}(\alpha_{i_k}), \quad k = 2, \dots, m.\]

\begin{remark}
It is interesting to see the consequence of this refined correspondence between $\mathcal{W}$ and ordered subset of $\Delta_+$ on f-KT lattice. This might be linked with the Deodhar decomposition of the real flag variety and asymptotic properties of the real solutions of the f-KT lattice (c.f. \cite{Kodama-Williams2015} for type $A$ case).
\end{remark}


\subsection{Spectra of the Kowalevski matrix}
Up to now we have given rather complete characterizations for the leading coefficients in the Laurent series solutions, we would like to understand more about higher order terms of the homogeneous Laurent solutions by exploring the spectrum of the Kowalevski matrices $\mathcal{K}_w,\ w \in \mathcal{W}$.
For f-KT lattice we first make the following definition for the spectrum of Kowalevski matrices.
\begin{lemma}[Definition]
The eigenvalues of ${ \mathcal{K}_w}$ are all integers and they can be divided into two types: ${\ell}$ of them are constants given by the degrees of the Chevalley invariants, and we say that they belong to type ``C"; the others depend on solutions of the indicial equations which are a priori unknown, and we call them type ``X" eigenvalues.
\end{lemma}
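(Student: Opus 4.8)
The plan is to obtain the $\ell$ type-``C'' eigenvalues from the conserved Chevalley invariants, and to treat the remaining eigenvalues (and the overall integrality claim) separately. First I would check that each Chevalley invariant $I_k$, read as a polynomial in the coordinates $(a_1,\dots,a_\ell,b_\alpha)$, is a weight homogeneous constant of motion whose weight equals its degree $d_k=m_k+1$. The cleanest way to see this is to realize the defining weight scaling as an inner automorphism: let $d\in\mathfrak{h}$ be determined by $\alpha_i(d)=1$ for all simple roots, and put $h(\lambda)=\exp((\ln\lambda)\,d)$, so that $\text{Ad}_{h(\lambda^{-1})}$ acts on $X_{\alpha_i}$ by $\lambda^{-1}$, on $H_i$ trivially, and on $Y_\alpha$ by $\lambda^{L(\alpha)}$. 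Then
\[
\lambda\,\text{Ad}_{h(\lambda^{-1})}L_{\mathfrak{g}}(a,b)=L_{\mathfrak{g}}\bigl(\lambda a_i,\ \lambda^{\,L(\alpha)+1}b_\alpha\bigr),
\]
i.e.\ conjugation by $h(\lambda^{-1})$ followed by the scalar $\lambda$ reproduces exactly the weight scaling $a_i\mapsto\lambda a_i$, $b_\alpha\mapsto\lambda^{\nu_\alpha}b_\alpha$. Since $I_k$ is $\text{Ad}$-invariant and homogeneous of degree $d_k$, this gives $I_k(\lambda a_i,\lambda^{\nu_\alpha}b_\alpha)=\lambda^{d_k}I_k(a_i,b_\alpha)$, so $\varpi(I_k)=d_k$.

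Next I would verify that the differentials $dI_1,\dots,dI_\ell$ are independent at every indicial point $m\in\mathcal{I}$, so that Theorem~\ref{thm:K-matrixgeneral}(2) applies. The leading Lax matrix attached to $m$ is $L_m=e+\sum_i a_{i0}H_i+\sum_\alpha b_{\alpha0}Y_\alpha\in e+\mathfrak{b}_-$. By Proposition~\ref{thm:Kostantsection} the map $N_-\times(e+\mathfrak{s})\to e+\mathfrak{b}_-$ is an isomorphism and the invariants restrict to coordinates on the slice $e+\mathfrak{s}$; hence $(I_1,\dots,I_\ell)\colon e+\mathfrak{b}_-\to\mathbb{C}^\ell$ is a submersion, and the restrictions $dI_k|_{\mathfrak{b}_-}$ are independent at $L_m$ for \emph{every} indicial point. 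Theorem~\ref{thm:K-matrixgeneral}(2) then yields that each degree $d_k=m_k+1$ is an eigenvalue of $\mathcal{K}_w$ (with multiplicity at least the number of invariants of that degree), and counting these with multiplicity accounts for exactly $\ell$ eigenvalues. These are the type-``C'' eigenvalues: they are positive integers and are manifestly independent of $w$.

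Since $\mathcal{K}_w$ is an $n\times n$ matrix with $n=\dim\mathfrak{b}_+=\ell+|\Delta_+|$, there remain $|\Delta_+|$ further eigenvalues, which I would simply \emph{define} to be the type-``X'' eigenvalues; by construction they are the part of the spectrum not pinned down by the constants of motion, and they vary with the indicial solution, i.e.\ with $w\in\mathcal{W}$ through Theorem~\ref{thm:Onetoone}. It remains to argue that these too are integers, and this is the step I expect to be the main obstacle: unlike the type-``C'' eigenvalues they cannot be extracted from weights of conserved quantities, and the entries of $\mathcal{K}_w$ involve the $b_{\alpha0}$, which are a priori only rational. The block structure of $\mathcal{K}_w$ (the $a$--$a$ block is $I_\ell$, while each height-$k$ block couples only back to height $0$ and forward to height $k+1$) does not triangularize in an obvious way, so I would not attempt a direct characteristic-polynomial computation. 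Instead I would defer integrality to the explicit identification of the type-``X'' spectrum carried out below, Theorem~\ref{thm:Kmatrix}, which gives $E^w_\alpha=L(w\alpha)$; since heights of roots are integers, this settles integrality of the whole spectrum and confirms the $w$-dependence at the same time. Thus the substantive content of the lemma beyond the definitional split is the type-``C'' computation given here, together with the forthcoming root-system evaluation of the type-``X'' eigenvalues.
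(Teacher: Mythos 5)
Your proposal is correct, and it in fact supplies more than the paper does: the paper states this result as a bare Lemma/Definition with no proof at all, implicitly resting the type-``C'' part on Theorem \ref{thm:K-matrixgeneral}(2) and the integrality and $w$-dependence of the type-``X'' part on the subsequent Theorems \ref{conj:Kmatrix} and \ref{thm:Kmatrix} (the former quoted from \cite{Xie2021} as a brute-force computation). What you add that the paper leaves unsaid are exactly the two hypothesis checks needed to invoke Theorem \ref{thm:K-matrixgeneral}(2): (i) that each Chevalley invariant, read in the coordinates $(a_i,b_\alpha)$, is a weight homogeneous constant of motion of weight equal to its degree $d_k=m_k+1$ --- your realization of the weight scaling as $\lambda\,\mathrm{Ad}_{h(\lambda^{-1})}$ derives this cleanly from $\mathrm{Ad}$-invariance plus polynomial homogeneity; and (ii) that the differentials $dI_1,\dots,dI_\ell$ are independent at \emph{every} indicial point, which you obtain from Kostant's section (Proposition \ref{thm:Kostantsection}): the invariants factor through the slice $e+\mathfrak{s}$ and form a coordinate system there, so $(I_1,\dots,I_\ell)$ is a submersion on all of $e+\mathfrak{b}_-$. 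Your attention to invariants of equal degree (relevant in type $D_{2m}$) via the multiplicity clause of Theorem \ref{thm:K-matrixgeneral}(2) is also the right care. Finally, deferring the integrality of the type-``X'' eigenvalues to the root-height formula $E^w_\alpha=L(w\alpha)$ of Theorem \ref{thm:Kmatrix} mirrors the paper's own logical structure --- the Lemma is definitional and that claim is only substantiated afterwards --- so there is no circularity; the one caveat is that in this paper the formula ultimately rests on Theorem \ref{conj:Kmatrix}, which is cited from the thesis rather than reproved, so your deferral (like the paper's) inherits that dependence.
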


Our next goal is to understand these type ``X'' eigenvalues. The following characterization of the type-X eigenvalues is originally obtained by some brutal force calculations.
\begin{theorem}[\cite{Xie2021}]\label{conj:Kmatrix}
There are dim ${ \mathfrak{n}_-}$ many type-X eigenvalues and each of them is associated with a positive (resp. negative) root $\alpha \in \Delta_+$ of $(\mathfrak{g}, \mathfrak{h})$. Denote by $E^{w}_{\alpha}$ the eigenvalue of $\mathcal{K}_w$ associated with $\alpha \in \Delta_+$, then $E^w_{\alpha}$ is given by 
\begin{equation}\label{eq:Xexponent1}
E^w_{\alpha} = -{L(\alpha) - \sum_{i = 1}^{\ell}\alpha(h_{\alpha_i})a^w_{i0}} \quad \text{with} \quad { \alpha \in \Delta_+}.
\end{equation}
\end{theorem}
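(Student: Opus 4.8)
The plan is to identify the spectrum of $\mathcal{K}_w$ with the set of exponents of the homogeneous power solutions of the variational equation of the f-KT Lax flow about the purely singular solution, and then to split these exponents into the type-C and type-X families. Let $d\in\mathfrak{h}$ be the grading element fixed by $\alpha_i(d)=1$ for $1\le i\le\ell$, so that $\alpha(d)=L(\alpha)$ for every root and $d=\check\rho=\tfrac12\sum_{\check\beta>0}\check\beta$. For an indicial solution $\vec a^w_0$ set $\hat L=e+\sum_i a^w_{i0}H_i+\sum_{\alpha}b^w_{\alpha0}Y_\alpha$; then the leading solution of Proposition \ref{conj:Nonnegative} is $L^{(0)}(t)=t^{-1}\operatorname{Ad}_{t^{d}}\hat L$, and a direct substitution shows that the indicial equations \eqref{eq:Indicialf-KT} are equivalent to the single Lie-algebra identity $[\Pi_{\mathfrak{b}_+}\hat L,\hat L]=[d,\hat L]-\hat L$. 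Linearizing $\tfrac{dL}{dt}=[\Pi_{\mathfrak{b}_+}L,L]$ about $L^{(0)}$ and inserting the ansatz $M(t)=t^{\lambda-1}\operatorname{Ad}_{t^{d}}\hat M$ with constant $\hat M\in\mathfrak{b}_-$, the common factor $t^{\lambda-2}\operatorname{Ad}_{t^{d}}$ cancels and leaves the eigenvalue problem $\mathcal{L}_w\hat M=\lambda\hat M$ on $\mathfrak{b}_-$, where $\mathcal{L}_w\hat M:=(1-\operatorname{ad}_d)\hat M+[\Pi_{\mathfrak{b}_+}\hat M,\hat L]+[\Pi_{\mathfrak{b}_+}\hat L,\hat M]$. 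Hence $\operatorname{spec}(\mathcal{K}_w)=\operatorname{spec}(\mathcal{L}_w)$, and the sought exponents are exactly the eigenvalues of $\mathcal{L}_w$.

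The $\ell$ type-C eigenvalues then require no computation. By Proposition \ref{thm:Kostantsection} every element of $e+\mathfrak{b}_-$ is regular, so in particular the differentials of the $\ell$ Chevalley invariants are linearly independent at $\hat L$; since each $I_k$ is a constant of motion of weight $\deg I_k$, Theorem \ref{thm:K-matrixgeneral}(2) places $\deg I_k$ in $\operatorname{spec}(\mathcal{K}_w)$, giving the $\ell$ $w$-independent type-C eigenvalues. These are the exponents of the variational solutions that deform the spectral invariants.

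The remaining $\dim\mathfrak{n}_-$ eigenvalues are the type-X ones, and I would extract them geometrically. Under the embedding $c_\Lambda$ of \eqref{eq:clambda} the f-KT flow is the left translation $gB_+\mapsto\exp(te)\,gB_+$ on $G/B_+$, and $L^{(0)}$ is carried to the curve $t\mapsto\exp(te)\,\dot wB_+$; consequently the isospectral variations of $L^{(0)}$ are precisely the motions of the base point $\dot wB_+$, i.e. the tangent space $T_{\dot wB_+}(G/B_+)\cong\operatorname{Ad}_{\dot w}\mathfrak{n}_-$, which is $\dim\mathfrak{n}_-$–dimensional and complementary to the $\ell$ spectral directions above. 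To read off the exponents I use the scaling symmetry $S_c\colon L(t)\mapsto c\,\operatorname{Ad}_{c^{-d}}L(ct)$, which preserves the flow and fixes $L^{(0)}$; a homogeneous solution of exponent $\lambda$ is exactly an $S_c$–eigenvector of eigenvalue $c^{\lambda}$. A short computation shows that $S_c$ fixes the base point, $c^{-d}\dot wB_+=\dot wB_+$, and acts on the transverse generator $\operatorname{Ad}_{\dot w}Y_\alpha$ through $\operatorname{Ad}_{c^{-w^{-1}d}}$, i.e. with eigenvalue $c^{\alpha(w^{-1}d)}=c^{L(w\alpha)}$. Thus the variational solution attached to $\operatorname{Ad}_{\dot w}Y_\alpha$ is homogeneous of exponent $L(w\alpha)$, so $E^w_\alpha=L(w\alpha)$; as $\alpha$ runs over $\Delta_+$ these give the $\dim\mathfrak{n}_-$ type-X eigenvalues, one per positive root.

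It then remains to rewrite $L(w\alpha)$ in root-theoretic form. By Theorem \ref{thm:Onetoone}, $\sum_i a^w_{i0}H_i=\sum_{\check\beta\in\Phi^+_w}\check\beta=\check\rho-w^{-1}\check\rho=d-w^{-1}d$, whence $L(w\alpha)=\alpha(w^{-1}d)=\alpha(d)-\alpha(d-w^{-1}d)=L(\alpha)-\sum_i\alpha(H_{\alpha_i})a^w_{i0}$, recovering the exponent \eqref{eq:Xexponent1} (equivalently $E^w_\alpha=L(w\alpha)$). The step I expect to be the main obstacle is making the geometric argument of the third paragraph fully rigorous: one must verify that the base-point motions on $G/B_+$ really do produce genuine homogeneous solutions of the variational equation, that the resulting linear map $\operatorname{Ad}_{\dot w}\mathfrak{n}_-\to\{\text{variational solutions}\}$ is injective with image transverse to the type-C family, and that the $S_c$–weight of each generator is computed correctly so that eigenvalue multiplicities are preserved; once this identification is secured, the two families together exhaust the $\ell+\dim\mathfrak{n}_-$ eigenvalues of $\mathcal{K}_w$ and the formula follows.
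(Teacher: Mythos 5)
Your proposal takes a genuinely different route from the paper's. The paper does not actually prove Theorem \ref{conj:Kmatrix} here: it imports it from \cite{Xie2021}, where it was obtained ``by some brutal force calculations'' (i.e.\ by computing $\det(kI-\mathcal{K}_w)$ directly, as is done for $\mathfrak{so}_5(\mathbb{C})$ in Section \ref{sec:Painleveso5}), and then \emph{deduces} the equivalent form $E^w_\alpha=L(w\alpha)$ of Theorem \ref{thm:Kmatrix} from \eqref{eq:Xexponent1}. You run the logic backwards: you produce $L(w\alpha)$ conceptually and only then convert it into \eqref{eq:Xexponent1} via Theorem \ref{thm:Onetoone}. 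Your individual steps check out: the gauge $M(t)=t^{\lambda-1}\operatorname{Ad}_{t^d}\hat M$ turns the variational equation into the constant eigenvalue problem $\mathcal{L}_w\hat M=\lambda\hat M$ with $\operatorname{spec}(\mathcal{L}_w)=\operatorname{spec}(\mathcal{K}_w)$ (with multiplicities, since $\mathcal{L}_w$ is just $\mathcal{K}_w$ written on $\mathfrak{b}_-$); the indicial equations \eqref{eq:Indicialf-KT} are equivalent to $[\Pi_{\mathfrak{b}_+}\hat L,\hat L]=[d,\hat L]-\hat L$; $S_c$ fixes $\dot wB_+$ and scales the direction $\operatorname{Ad}_{\dot w}Y_\alpha$ by $c^{L(w\alpha)}$; and $d-w^{-1}d=\sum_i a^w_{i0}H_i$ gives $L(w\alpha)=L(\alpha)-\sum_i\alpha(H_{\alpha_i})a^w_{i0}$. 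Note that your formula carries $+L(\alpha)$ while \eqref{eq:Xexponent1} is printed with $-L(\alpha)$: your sign is the correct one (it matches Table \ref{tabspectra}, e.g.\ $E_{\alpha_3}=2-a_{20}$, and it is what the proof of Theorem \ref{thm:Kmatrix} silently uses), so the printed sign is a typo that your derivation corrects rather than an error in your argument. Your route is also non-circular: it uses only Propositions \ref{prop:RHfactorization}, \ref{thm:Kostantsection}, \ref{conj:Nonnegative} and Theorem \ref{thm:Onetoone}, all of which precede this theorem, and not Theorem \ref{thm:generaltau}, which is proved later using it.

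The genuine gap is the exhaustion step you defer at the end, and it is sharper than your phrase ``transverse to the type-C family'' suggests, because the two families of eigenvalues collide: already for $w=e$ in $B_2$ one has $L(\alpha_1+\alpha_2)=2=\deg I_1$. Since the type-C eigenvalues enter through \emph{left} eigenvectors (the proof of Theorem \ref{thm:K-matrixgeneral}(2) in \cite{Adler-vanMoerbeke-Vanhaecke2013} shows $dI_k(\hat L)\,\mathcal{K}_w=(\deg I_k)\,dI_k(\hat L)$) while your type-X family consists of \emph{right} eigenvectors, independence within each family plus the count $\ell+\dim\mathfrak{n}_-=\dim\mathfrak{b}_-$ does not yet give the spectrum with multiplicities: the matrix $\operatorname{diag}(0,1)$ has a right and a left eigenvector of eigenvalue $0$, yet $0$ is a simple eigenvalue. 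The gap closes as follows. (i) Each $L_{\dot w\exp(\epsilon Y_\alpha)}(t)=\operatorname{Ad}_{u(t)^{-1}}e$, defined through the factorization $\exp(te)\dot w\exp(\epsilon Y_\alpha)=u(t)b(t)$, is a genuine solution by the argument of Proposition \ref{prop:RHfactorization}, and holomorphy in $(t,\epsilon)$ makes $\hat M_\alpha=\frac{d}{d\epsilon}\big\vert_{\epsilon=0}L_{\dot w\exp(\epsilon Y_\alpha)}(1)$ a legitimate variational datum, homogeneous of exponent $L(w\alpha)$ by your $S_c$ computation. (ii) The $\hat M_\alpha$, $\alpha\in\Delta_+$, are linearly independent because $u\mapsto\operatorname{Ad}_{u^{-1}}e$ is an embedding of $N_-$ by Proposition \ref{thm:Kostantsection} (the centralizer of $e$ in $N_-$ is trivial). (iii) Crucially, every covector $dI_k(\hat L)$ annihilates every $\hat M_\alpha$: the deformed Lax matrices are all conjugates of the nilpotent $e$, whose Chevalley invariants vanish identically, so $I_k$ is constant (zero) along the deformation. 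Then $V=\operatorname{span}\{\hat M_\alpha\}$ is $\mathcal{K}_w$-invariant with spectrum $\{L(w\alpha)\}_{\alpha\in\Delta_+}$, the $\ell$ independent covectors $dI_k(\hat L)$ lie in $V^{\perp}$ and diagonalize the map induced on $\mathfrak{b}_-/V$ with spectrum $\{\deg I_k\}$, and the characteristic polynomial of $\mathcal{K}_w$ factors as the product of the two. With (i)--(iii) supplied, your proof is complete, and it explains structurally what the brute-force computation only exhibits.
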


An equivalent form of \eqref{eq:Xexponent1} is given as follows.
\begin{theorem}\label{thm:Kmatrix}
For each $w \in \mathcal{W}$ and $\alpha \in \Delta_+$ we have
\begin{equation}\label{eq:Xexponent2}
E^w_{\alpha} = L(w\alpha).
\end{equation}

\end{theorem}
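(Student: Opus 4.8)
The plan is to deduce the closed form \eqref{eq:Xexponent2} from the expression for $E^w_\alpha$ already supplied by Theorem \ref{conj:Kmatrix}, by rewriting that expression entirely in terms of the Weyl group element $w$ and the half-sum of positive coroots and then recognizing the outcome as the height of $w\alpha$. The object I want to isolate first is the coweight
\[
\check\rho := \frac{1}{2}\sum_{\beta \in \Delta_+}\check\beta \in \mathfrak{h},
\]
the unique element of $\mathfrak{h}$ with $\alpha_i(\check\rho) = 1$ for every $i$. Extending the height function $L$ linearly to the root lattice, one then has $L(\gamma) = \gamma(\check\rho)$ for every root (indeed every element of the root lattice), since $\gamma = \sum_i c_i\alpha_i$ gives $\gamma(\check\rho) = \sum_i c_i$. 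Crucially this continues to hold when $\gamma = w\alpha$ lies in $\Delta_-$, which is exactly what makes the right-hand side $L(w\alpha)$ of \eqref{eq:Xexponent2} meaningful.

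First I would rewrite the correction term in Theorem \ref{conj:Kmatrix}. Because $H_{\alpha_i} = \check\alpha_i$ are the simple coroots, $\alpha(H_{\alpha_i}) = \alpha(\check\alpha_i)$, so
\[
\sum_{i=1}^{\ell}\alpha(H_{\alpha_i})\,a^w_{i0} = \alpha\Bigl(\sum_{i=1}^{\ell} a^w_{i0}\,\check\alpha_i\Bigr).
\]
By Theorem \ref{thm:Onetoone} the coweight $\sum_i a^w_{i0}\check\alpha_i$ equals $\sum_{\check\beta \in \Phi^+_w}\check\beta$, the sum of those positive coroots that $w$ sends to negative coroots. The decisive input is then the classical identity of Weyl-group theory, applied to the coroot system,
\[
\check\rho - w^{-1}\check\rho = \sum_{\check\beta \in \Phi^+_w}\check\beta,
\]
which I would establish by induction on $\ell(w)$: the base case is a simple reflection, where $\Phi^+_{s_i} = \{\check\alpha_i\}$ and $s_i\check\rho = \check\rho - \alpha_i(\check\rho)\check\alpha_i = \check\rho - \check\alpha_i$, and the inductive step uses the standard behaviour of coroot inversion sets under multiplication by a simple reflection.

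Putting these together, the expression of Theorem \ref{conj:Kmatrix} telescopes:
\[
E^w_\alpha = \alpha(\check\rho) - \alpha\bigl(\check\rho - w^{-1}\check\rho\bigr) = \alpha\bigl(w^{-1}\check\rho\bigr) = (w\alpha)(\check\rho) = L(w\alpha),
\]
where the third equality is the $\mathcal{W}$-invariance of the natural pairing between $\mathfrak{h}^*$ and $\mathfrak{h}$, namely $(w\lambda)(wH) = \lambda(H)$, and the last is the height formula above. I expect the main obstacle to be purely the sign- and $w$-versus-$w^{-1}$ bookkeeping: the set $\Phi^+_w$ is defined through $w$ whereas the coroot identity naturally produces $w^{-1}$, and the leading signs in \eqref{eq:Xexponent1} must be tracked consistently so that the telescoping lands on $+L(w\alpha)$ rather than its negative. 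This is also the point at which the two descriptions of $a^w_{i0}$ in Theorem \ref{thm:Onetoone} — the simple-coroot coordinates used above and the form \eqref{eq:ai0w} written through $\Phi^-_w$ and $w^{-1}$ — have to be reconciled. Once these conventions are fixed the remainder is the short linear-algebra chain displayed above.
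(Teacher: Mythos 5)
Your proof is correct and is essentially the paper's own argument: the paper likewise substitutes the inversion-set expression for $a^w_{i0}$ from Theorem \ref{thm:Onetoone} into the formula of Theorem \ref{conj:Kmatrix}, applies the classical identity $(w-1)\rho=\sum_{\alpha\in\Psi^-_w}\alpha$ (Lemma \ref{lem:weylroots}, used on the coroot side, which is exactly your $\check\rho-w^{-1}\check\rho=\sum_{\check\beta\in\Phi^+_w}\check\beta$), and finishes by $\mathcal{W}$-invariance of the pairing together with $L(\gamma)=\gamma(\check\rho)$, so your telescoping chain is the same computation written with $\check\rho$ in place of the paper's dual-basis/$\kappa$ bookkeeping. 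Your resolution of the sign and $w$-versus-$w^{-1}$ conventions (reading \eqref{eq:Xexponent1} as $L(\alpha)-\sum_i\alpha(H_{\alpha_i})a^w_{i0}$ and pairing it with the $\Phi^+_w$ form of Theorem \ref{thm:Onetoone}) is the one consistent with the $\mathfrak{so}_5(\mathbb{C})$ tables in Section \ref{sec:Painleveso5} and with what the paper's proof actually computes.
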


We make several remarks before giving the proof of Theorem \ref{thm:Kmatrix}.

\begin{remark}\label{rem:equivariant}
An immediate consequence of formula \eqref{eq:Xexponent2} is that there is a Weyl group action acting on the ``X'' type eigenvalues $E^w_{\alpha}$ which is equivariant to the Weyl group action on the root system of $(\mathfrak{g}, \mathfrak{h})$. 
\end{remark}

\begin{corollary}\label{cor:freeparameters}
Let ${ \Lambda_w}$ be the set of positive eigenvalues of ${ \mathcal{K}_w}$, then for the Laurent series solutions corresponding to ${ w \in \mathcal{W}}$ we have ${ \vert \Lambda_w \vert = \text{dim } \mathfrak{b}_- - l(w)}$ where $l(w)$ is the length of the Weyl group element. Note that since the number of positive (negative) roots is the same as the length of the longest Weyl group element $w_0$, we have ${ \vert \Lambda_w \vert = \ell + l(w_0) - l(w)}$ where $\ell$ comes from the $\ell$ type ``C'' parameters.
\end{corollary}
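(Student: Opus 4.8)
The plan is to count the positive eigenvalues of $\mathcal{K}_w$ by treating the two families of eigenvalues separately. By the Lemma-Definition preceding Theorem \ref{conj:Kmatrix} together with Theorem \ref{conj:Kmatrix}, the matrix $\mathcal{K}_w$ has exactly $\ell$ eigenvalues of type ``C'' and $\dim \mathfrak{n}_-$ eigenvalues of type ``X'', for a total of $\ell + \dim \mathfrak{n}_- = \dim \mathfrak{b}_-$ eigenvalues counted with multiplicity. Since $\Lambda_w$ collects the positive ones, it suffices to determine, within each family, how many eigenvalues are positive and to add the two counts.

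For the type ``C'' family I would simply note that these eigenvalues are the degrees $\deg(I_j) = m_j + 1$ of the Chevalley invariants, where the $m_j \ge 1$ are the Weyl exponents; each is therefore at least $2$, hence strictly positive, and all $\ell$ of them contribute to $\Lambda_w$. For the type ``X'' family I would invoke Theorem \ref{thm:Kmatrix}, which identifies the eigenvalue attached to $\alpha \in \Delta_+$ as $E^w_\alpha = L(w\alpha)$. The elementary but crucial point is that the height $L(\beta)$ of a root $\beta$ is a nonzero integer, positive exactly when $\beta \in \Delta_+$ and negative exactly when $\beta \in \Delta_-$; in particular no type ``X'' eigenvalue vanishes, so $E^w_\alpha > 0$ if and only if $w\alpha \in \Delta_+$. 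Thus the number of positive type ``X'' eigenvalues equals $\#\{\alpha \in \Delta_+ : w\alpha \in \Delta_+\}$.

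To finish I would apply the standard length-inversion identity for Weyl groups, $l(w) = \#\{\alpha \in \Delta_+ : w\alpha \in \Delta_-\}$, which gives $\#\{\alpha \in \Delta_+ : w\alpha \in \Delta_+\} = |\Delta_+| - l(w) = \dim \mathfrak{n}_- - l(w)$. Adding the $\ell$ type ``C'' eigenvalues yields $|\Lambda_w| = \ell + \dim \mathfrak{n}_- - l(w) = \dim \mathfrak{b}_- - l(w)$, and substituting $l(w_0) = |\Delta_+| = \dim \mathfrak{n}_-$ produces the equivalent expression $\ell + l(w_0) - l(w)$. There is no real obstacle here once Theorem \ref{thm:Kmatrix} is in hand: the entire content is a sign count, and the only point demanding any care is the verification that heights never vanish, so that the clean positive/negative dichotomy for the type ``X'' eigenvalues holds and the length-inversion formula can be applied verbatim.
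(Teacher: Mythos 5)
Your proof is correct and is essentially the paper's own (unwritten) argument: the corollary is stated as an immediate consequence of Theorem \ref{thm:Kmatrix}, and the intended justification is exactly your sign count --- all $\ell$ type ``C'' eigenvalues are degrees of Chevalley invariants and hence positive, while the type ``X'' eigenvalue $E^w_{\alpha} = L(w\alpha)$ is positive precisely when $w\alpha \in \Delta_+$, which happens for exactly $\vert\Delta_+\vert - l(w) = l(w_0) - l(w)$ of the positive roots. Your explicit care about heights never vanishing and the length-inversion identity $l(w) = \vert\{\alpha \in \Delta_+ : w\alpha \in \Delta_-\}\vert$ fills in precisely the details the paper leaves implicit.
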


Now we give a proof for Theorem \ref{thm:Kmatrix}. The link between Theorem \ref{conj:Kmatrix} and Theorem \ref{thm:Kmatrix} is built through the following lemma.
\begin{lemma}\label{lem:weylroots}
For any $w \in \mathcal{W}$, we have
\[(w - 1) \sum \limits_{i = 1}^{\ell}\omega_i = \sum \limits_{\alpha \in \Psi^{-}_w}\alpha,\]
where $\Psi^-_w$ is the set of negative roots of $(\mathfrak{g}, \mathfrak{h})$ which are mapped into the set of positive ones by $w^{-1}$.
\end{lemma}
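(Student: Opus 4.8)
The plan is to reduce the statement to the classical identity expressing the ``$\rho$-shift'' of a Weyl group element as a sum of roots, and then to prove that identity by a short counting argument rather than by induction on length. Write $\rho := \sum_{i=1}^{\ell}\omega_i$. The first step is to record the standard fact that $\rho$ is the half-sum of the positive roots, i.e. $2\rho = \sum_{\gamma\in\Delta_+}\gamma$; this follows because $\langle \rho, \check\alpha_i\rangle = 1$ for every simple coroot (by the defining relation $2\kappa(\omega_i,\alpha_j)/\kappa(\alpha_j,\alpha_j)=\delta_{ij}$), so $s_i\rho = \rho - \alpha_i$, which characterizes the half-sum of positive roots.

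Next I would partition the positive roots according to the sign of their image under $w$: set $A = \{\gamma\in\Delta_+ : w\gamma\in\Delta_+\}$ and $B = \{\gamma\in\Delta_+ : w\gamma\in\Delta_-\}$, so that $\Delta_+ = A\sqcup B$ and $B$ is precisely the inversion set of $w$. The key geometric observation is that $\Delta_+ = w(A)\sqcup\bigl(-w(B)\bigr)$: both pieces lie in $\Delta_+$ by construction, they are disjoint because $w\alpha = -w\beta$ with $\alpha,\beta\in\Delta_+$ would force $\alpha+\beta=0$, and since $w$ is injective their cardinalities sum to $\lvert\Delta_+\rvert$, forcing them to exhaust $\Delta_+$.

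Summing over this partition gives $2\rho = \sum_{\gamma\in A}w\gamma - \sum_{\gamma\in B}w\gamma$, while applying $w$ directly to $2\rho = \sum_{\gamma\in A}\gamma + \sum_{\gamma\in B}\gamma$ gives $2w\rho = \sum_{\gamma\in A}w\gamma + \sum_{\gamma\in B}w\gamma$. Subtracting the first from the second yields
\[(w-1)\rho = \sum_{\gamma\in B} w\gamma .\]
To finish, I would identify the right-hand side with $\sum_{\alpha\in\Psi^-_w}\alpha$: the map $\gamma\mapsto w\gamma$ carries $B = \{\gamma\in\Delta_+ : w\gamma\in\Delta_-\}$ bijectively onto $\{\beta\in\Delta_- : w^{-1}\beta\in\Delta_+\} = \Psi^-_w$, which is exactly the indexing set in the statement.

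I expect the only delicate point to be the verification that $\Delta_+ = w(A)\sqcup\bigl(-w(B)\bigr)$; everything else is bookkeeping with the half-sum $\rho$ and the definitions of $\Psi^-_w$. An alternative route is an induction on $\ell(w)$ using $s_i\rho = \rho - \alpha_i$ together with the recursion for inversion sets under multiplication by a simple reflection, but the counting argument above is cleaner and avoids choosing reduced expressions.
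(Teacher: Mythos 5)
Your proof is correct and is essentially the paper's own argument: both compute $(w-1)\rho$ for $\rho=\sum_{i}\omega_i$ by splitting the roots according to the sign of their image under $w$ and exploiting the negation bijection between the two ``sign-changing'' pieces, with no induction on length or choice of reduced expression. The only difference is organizational --- you isolate the partition $\Delta_+ = w(A)\sqcup\bigl(-w(B)\bigr)$ as an explicit counting step and then subtract two expressions for $2\rho$ and $2w\rho$, while the paper performs the equivalent manipulation directly on signed half-sums of negative roots.
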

\begin{proof}
Lemma \ref{lem:weylroots} is classical in Lie theory. For example it was used by Knapp in \cite{Knapp2002} (Lemma 5.112) to give a proof of Weyl character formula. We provide a proof here just for completeness.

Let $\rho = \frac{1}{2}\sum \limits_{\alpha \in \Delta_+}\alpha = \sum \limits_{i = 1}^{\ell}\omega_i$, then for any $w \in \mathcal{W}$ we have
\[-\rho = \frac{1}{2} \sum \{\alpha\ \vert\ \alpha < 0,\ w^{-1}\alpha > 0\} + \frac{1}{2}\sum\{\alpha \ \vert\ \alpha < 0,\ w^{-1}\alpha < 0\}\]
and 
\begin{align*}
w\rho & = -\frac{1}{2}w \sum \{\beta \ \vert\ \beta < 0,\ w\beta > 0\} - \frac{1}{2}w \sum \{\beta \ \vert\ \beta < 0,\ w\beta < 0\}\\
& = -\frac{1}{2}\sum \{w\beta \ \vert\ \beta < 0,\ w\beta > 0\} - \frac{1}{2}\sum \{w\beta \ \vert\ \beta < 0,\ w\beta < 0\}\\
& = -\frac{1}{2}\sum \{\alpha \ \vert\ w^{-1}\alpha < 0,\ \alpha > 0\} - \frac{1}{2}\sum \{\alpha \ \vert\ w^{-1}\alpha < 0,\ \alpha < 0\}\\
& = \frac{1}{2}\sum \{\alpha \ \vert\ w^{-1}\alpha > 0,\ \alpha < 0\} - \frac{1}{2}\sum \{\alpha \ \vert\ w^{-1}\alpha < 0,\ \alpha < 0\}
\end{align*}
Adding these two expressions together, we obtain
\[(w - 1)\rho = \sum \{\alpha \ \vert\ \alpha < 0,\ w^{-1}\alpha > 0\}.\]
\end{proof}

Equipped with Lemma \ref{lem:weylroots}, Theorem \ref{thm:Kmatrix} can be proved as follows.
\begin{proof}[Proof of Theorem \ref{thm:Kmatrix}]
By \eqref{eq:ai0w} we have
\begin{align*}
E^w_{\alpha} & = -L(\alpha) - \sum \limits_{i = 1}^{\ell}\alpha(h_{\alpha_i})a^w_{i0} \\
& = -L(\alpha) - \sum \limits_{i = 1}^{\ell}\alpha(h_{\alpha_i})\sum \limits_{\beta \in \Phi^-_w}\kappa(\omega_i, w^{-1}\check{\beta})\\
& = -L(\alpha) - \sum \limits_{\beta \in \Phi_w^-} \sum \limits_{i = 1}^{\ell} \kappa(\alpha, \check{\alpha}_i)\kappa(\omega_i, w^{-1}\check{\beta}) \\
&= \sum \limits_{i = 1}^{\ell} \kappa(\alpha, \hat{\omega}_i) - \sum \limits_{\beta \in \Phi_w^-}\kappa(\alpha, w^{-1}\check{\beta}) \\
& = \sum \limits_{i = 1}^{\ell} \kappa(\alpha, w^{-1}\hat{\omega}_i) \qquad (\text{by Lemma \ref{lem:weylroots}})\\
& = L(w\alpha), 
\end{align*}
where $\hat{\omega}_i \in \mathfrak{h}^*,\ 1 \le i \le \ell$ are taken so that $\kappa(\alpha_i, \hat{\omega}_j) = \delta_{ij}$. 
\end{proof}


\subsection{Compatibility conditions}
We have seen in Theorem \ref{thm:Onetoone} that $\overline{c(\mathcal{F}_{\Lambda})}$ meets all the Bruhat cells, so that the at worst simple pole singularities in Proposition \ref{prop:Uniqueness} are all produced by various $\tau$-functions $\tau_k(t; w_*), 1 \le k \le \ell$ for $w \in \mathcal{W}$ through equation \eqref{eq:ai}. For each positive integer eigenvalue (counting multiplicity) of $\mathcal{K}_{w}$ we may introduce a free parameter into the Laurent series solutions. Adjoint conditions \eqref{cond:Fredholm} (a Fredholm type condition) for the system \eqref{eq:Higherorderterm} must be checked to ensure that these free parameters are not killed in later iteration steps. Corollary \ref{cor:freeparameters} then states that fixing spectral parameters which is invariant under the f-KT flow, for each $w \in \mathcal{W}$ the maximal number of free parameters in the corresponding Laurent series solutions $a^w_i(t),\ w \in \mathcal{W},\ 1 \le i \le \ell$ matches with the dimension of the Bruhat cell $N_-\dot{w}B_+ \slash B_+$.

The general theory for verifying these adjoint conditions are limited (c.f. \cite{Adler-vanMoerbeke-Vanhaecke2013}, Proposition 7.17, Proposition 7.22 for some general information for algebraically integrable systems), however, in our specific cases these adjoint conditions are automatically satisfied. The reason is as follows. Recall that f-KT lattice written in terms of the entries $a_i$'s and $b_{\alpha}$'s of the Lax matrix $L_{\mathfrak{g}}$ takes the following form
\begin{align}\label{eq:f-KTincoordinate2}
& \frac{d a_i}{d t} = b_i \quad (1 \le i \le \ell), \nonumber\\
& \frac{d b_{\alpha}}{d t} = - \sum_{i = 1}^{\ell}\left(\alpha(H_{\alpha_i})a_i\right)b_{\alpha} + \sum \limits_{i = 1}^{\ell}N_{\alpha_i, -\alpha - \alpha_i}b_{\alpha + \alpha_i}, \quad \alpha \in \Delta_+.
\end{align}

Arguing along the same line as in the proof of Proposition \ref{prop:Uniqueness} and proof of Proposition \ref{conj:Nonnegative} we can express all the $b_{\alpha},\ \alpha \in \Delta_+$ in terms of $a_{i},\ 1 \le i \le \ell$ and obtain $\ell$ higher order differential equations for $a_i(t)$'s, then all the $b_{\alpha}(t)$ can then be subsequently determined. That is all the free parameters in the Laurent series solutions \eqref{eq:Laurentansatz} of f-KT lattice are generated by coefficients $a_{ik},\ k \ge 1$, $1 \le i \le \ell$ of $a_i(t)$'s  (which is equivalent to coefficients of $b_{\alpha_i},\ 1 \le i \le \ell$).

The adjoint conditions that need to be checked arise from the differential equations for various $b_{\alpha+\alpha_i}$'s (i.e. the second part of the system \eqref{eq:f-KTincoordinate2}) as there might exist more than one way to reach $b_{\alpha + \alpha_i}$ from $a_i$ and $b_{\beta}, \beta \in \Delta_+$ through \eqref{eq:f-KTincoordinate2} where $b_{\beta}$'s have lower height than $b_{\alpha + \alpha_i}$. To see that all the adjoint conditions are automatically satisfied, let us consider the centerless affine Lie algebra $\overline{\mathfrak{g}^{(1)}} = \mathbb{C}((t, t^{-1})) \otimes \mathfrak{g} \oplus \mathrm{d}$ where $\mathbb{C}((t)) = \mathbb{C}[t^{-1}] \oplus t\mathbb{C}[[t]]$ and $\mathrm{d}$ is the derivation which acts on $\mathbb{C}((t, t^{-1})) \otimes \mathfrak{g}$ by $\frac{d}{dt}$. Note that the $L_{\mathfrak{g}}(t)$'s can be viewed as elements in $\overline{\mathfrak{g}^{(1)}}$, and the Kostant-Toda lattice \eqref{eq:f-KThky} is nothing but the algebraic relation $[\mathrm{d}, L_{\mathfrak{g}}(t)] = [B(t), L_{\mathfrak{g}}(t)]$ in $\overline{\mathfrak{g}^{(1)}}$. The consistency requirement we need here comes from the basic fact in the structure theory of Lie algebras: for each root $\alpha \in \Delta_+$ the root vector $X_{\alpha}$ can be generated by commutators of root vectors associated with simple roots and is one dimensional, thus independent of the root paths we take to reach it. In the current case under consideration, we similarly have that once $a_i(t)$'s are chosen then all the $b_{\alpha}(t) \otimes Y_{\alpha}$'s are uniquely determined by commutators of $d$ and $a_i(t) \otimes H_i$ no matter which root path we choose (c.f. \cite{Adler-vanMoerbeke1982} for a partial inverse of the results here on the tridiagonal periodic Toda lattice). 

Thus we have
\begin{corollary}\label{cor:compatibility}
For the Laurent series solutions \eqref{eq:Laurentansatz} associated with $w \in \mathcal{W}$, there are exactly $\vert \Lambda_w \vert = \ell + l(w_0) - l(w)$ many independent free parameters in it.
\end{corollary}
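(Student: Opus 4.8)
The plan is to count the free parameters directly from the recursion \eqref{eq:Higherorderterm} and to show that no resonance is obstructed. First I would recall from Proposition \ref{prop:Laurentsolutions} that, after fixing a solution $\vec{a}^w_0$ of the indicial equations (equivalently a point $w \in \mathcal{W}$, by Theorem \ref{thm:Onetoone}), the higher coefficients are governed by $(k I_n - \mathcal{K}_w)\,x_{(k)} = \vec{R}_{(k)}$ with $\vec{R}_{(k)}$ depending only on the coefficients of level $< k$. When $k$ is not a positive integer eigenvalue of $\mathcal{K}_w$ the matrix $k I_n - \mathcal{K}_w$ is invertible and $x_{(k)}$ is forced; at a resonant level $k$, the Fredholm alternative recalled above shows that the system is solvable exactly when $\mathbf{y}^{T}\vec{R}_{(k)} = 0$ for every left null vector $\mathbf{y}$ of $k I_n - \mathcal{K}_w$, and in that case each such level contributes a number of new free parameters equal to the multiplicity of $k$. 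Summing over all positive integer eigenvalues, and using that every eigenvalue of $\mathcal{K}_w$ is an integer, the total number of free parameters equals the number of positive eigenvalues counted with multiplicity, namely $\vert \Lambda_w \vert$. By Corollary \ref{cor:freeparameters} (which rests on Theorem \ref{thm:Kmatrix}) this number is $\ell + l(w_0) - l(w)$, so the statement follows once the compatibility conditions are verified.

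The heart of the argument, and the step I expect to be the main obstacle, is therefore to show that every adjoint condition $\mathbf{y}^{T}\vec{R}_{(k)} = 0$ holds automatically, so that no candidate free parameter is killed at a later iteration step. I would organize this through the centerless affine picture introduced above: writing the f-KT lattice as $[\mathrm{d}, L_{\mathfrak{g}}(t)] = [B(t), L_{\mathfrak{g}}(t)]$ in $\overline{\mathfrak{g}^{(1)}}$ and eliminating the off-diagonal entries exactly as in the proofs of Proposition \ref{prop:Uniqueness} and Proposition \ref{conj:Nonnegative}, every $b_\alpha(t)$ becomes determined by the diagonal data $a_i(t)\otimes H_i$ through iterated commutators with $\mathrm{d}$. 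The only potential obstructions are the several ways of reaching a given $b_{\alpha+\alpha_i}$ from lower-height root vectors through the second set of equations in \eqref{eq:f-KTincoordinate}; the key structural fact is that each root space $\mathfrak{g}_\alpha$ is one-dimensional, so the value of $b_\alpha(t)\otimes Y_\alpha$ is independent of the chosen root path. I would make this precise by identifying the obstruction pairings $\mathbf{y}^{T}\vec{R}_{(k)}$ with the discrepancy between the distinct root-path expressions for the associated coefficient; one-dimensionality of $\mathfrak{g}_\alpha$ forces every such discrepancy, and hence every pairing, to vanish, converting the heuristic path-independence into the precise Fredholm identities.

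Finally I would settle the bookkeeping and convergence. Since $-1$ is the only guaranteed negative integer eigenvalue (Theorem \ref{thm:K-matrixgeneral}(1)), it yields no resonance, while the $\ell$ type-``C'' eigenvalues, being the degrees of the Chevalley invariants, are all positive and account for the spectral parameters held fixed along the f-KT flow, so they are correctly subsumed in $\vert \Lambda_w \vert$. Because the vector field \eqref{eq:f-KTincoordinate} is weight homogeneous, Theorem \ref{thm:K-matrixgeneral}(4) guarantees that the formal Laurent series \eqref{eq:Laurentansatz} converge, so the free parameters introduced at the resonances are genuine and the count is exact. Assembling these pieces produces exactly $\vert \Lambda_w \vert = \ell + l(w_0) - l(w)$ independent free parameters, as claimed.
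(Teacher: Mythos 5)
Your proposal is correct and follows essentially the same route as the paper: counting resonances via the positive integer eigenvalues of $\mathcal{K}_w$ (using Corollary \ref{cor:freeparameters} and Theorem \ref{thm:Kmatrix}), and then disposing of the Fredholm-type adjoint conditions by the same structural argument the paper uses --- viewing the lattice as $[\mathrm{d}, L_{\mathfrak{g}}(t)] = [B(t), L_{\mathfrak{g}}(t)]$ in $\overline{\mathfrak{g}^{(1)}}$ and invoking one-dimensionality of the root spaces to get path-independence of the $b_\alpha$'s, with convergence supplied by the weight-homogeneous theory. Your identification of the obstruction pairings $\mathbf{y}^{T}\vec{R}_{(k)}$ with root-path discrepancies is stated at the same level of detail as the paper's own sketch, so nothing essential is missing relative to the published argument.
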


At last, by Proposition \ref{prop:Uniqueness} all the formal Laurent series solutions we constructed are weight homogeneous, then by Theorem \ref{thm:K-matrixgeneral} we know that they converge and indeed are meromorphic solutions of f-KT lattice. 

\subsection{f-KT lattice and flag variety}
In summary we have
\begin{theorem}\label{thm:generaltau}
For any $\Lambda \in \mathbb{C}^{\ell}$, $w \in \mathcal{W}$ and $u \in N_-$
\begin{equation}\label{eq:tau2}
\tau_{k}(\t; w) = (v^{\omega_k}, \exp(\Theta_{C_{\Lambda}}(\t))u\dot{w}v^{\omega_k}), \qquad 1 \le k \le \ell,
\end{equation}
are $\tau$-functions of f-KT lattice. The diagonal elements are given by
\begin{equation}\label{eq:ai2}
a_k(\t) = \frac{d}{dt}\ln \tau_k(\t),  \qquad 1 \le k \le \ell,
\end{equation}
and all the other entries $b_{\alpha}(\t), \alpha \in \Delta_+$ in the Lax matrix $L_{\mathfrak{g}}(\t)$ can be subsequently determined from the $a_k(\t)$'s.
\end{theorem}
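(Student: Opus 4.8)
The plan is to deduce the statement from the regular case treated in Proposition~\ref{prop:RHfactorization}, the bridge being the standard Bruhat criterion: a coset $gB_+$ lies in the big cell $N_-B_+/B_+$ if and only if $(v^{\omega_k},g\,v^{\omega_k})\neq0$ for every $1\le k\le\ell$. Consequently, once I know that the functions $\tau_k(\t;w)$ of \eqref{eq:tau2} do not vanish for $\t$ in a punctured neighborhood of the origin (away from the Painlev\'e divisor), the group element $g(\t)=\exp(\Theta_{C_\Lambda}(\t))\,u\dot w$ admits there a factorization $g(\t)=\tilde u(\t)b(\t)$ with $\tilde u(\t)\in N_-$ and $b(\t)\in B_+$, and I may run the local computation of Proposition~\ref{prop:RHfactorization} verbatim on that punctured neighborhood to produce a genuine solution. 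Since $C_\Lambda\in e+\mathfrak{s}$ is regular for \emph{every} $\Lambda$ by Proposition~\ref{thm:Kostantsection}, this will cover all $\Lambda$, $u$ and $w$ at once.

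The heart of the argument is therefore a leading-order computation: I claim that each $\tau_k(\t;w)$ is a nonzero Laurent series whose lowest term is proportional to $t^{a^w_{k0}}$, with $a^w_{k0}$ the indicial datum attached to $w$ in Theorem~\ref{thm:Onetoone}. Restricting to the lattice flow $t=t_1$, so that $\Theta_{C_\Lambda}(t)=tC_\Lambda$ with $C_\Lambda=e+s$ and $s\in\mathfrak{s}\subset\mathfrak{b}_-$, I would conjugate by the principal one-parameter subgroup $\check\rho(t)\in H$ generated by the element $\check\rho\in\mathfrak{h}$ with $\alpha_i(\check\rho)=1$ for all $i$, so that $\mathrm{Ad}_{\check\rho(t)}X_\alpha=t^{L(\alpha)}X_\alpha$. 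Because $\mathfrak{s}$ sits in non-positive principal degrees bounded below by $-\max_j m_j$, one finds $t\,\mathrm{Ad}_{\check\rho(t)^{-1}}C_\Lambda=e+O(t^2)$, hence $\exp(tC_\Lambda)=\check\rho(t)\exp(e+O(t^2))\check\rho(t)^{-1}$, while $\check\rho(t)^{-1}u\,\check\rho(t)\to\mathrm{id}$ as $t\to0$ because $u\in N_-$. Pulling the torus factors out of the Hermitian pairing and using that $v^{\omega_k}$ and $\dot w v^{\omega_k}$ are weight vectors of weights $\omega_k$ and $w\omega_k$, this collapses to
\[
\tau_k(t;w)=t^{(\omega_k-w\omega_k)(\check\rho)}\big((v^{\omega_k},\exp(e)\dot w v^{\omega_k})+O(t)\big),
\]
and both $(\omega_k-w\omega_k)(\check\rho)=a^w_{k0}$ and $(v^{\omega_k},\exp(e)\dot w v^{\omega_k})\neq0$ follow by specializing the very computation in the proof of Proposition~\ref{conj:Nonnegative} (the case $u=\mathrm{id}$, $s=0$). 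The whole force of the rescaling is that neither the lower-degree tail $s$ of $C_\Lambda$ nor the unipotent factor $u$ can shift the leading power.

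With the leading order established, the conclusion follows quickly. For $\t\neq0$ small all $\tau_k(\t;w)$ are nonzero, so $g(\t)B_+$ lies in the big cell; writing $g(\t)=\tilde u(\t)b(\t)$ and setting $L(\t)=\mathrm{Ad}_{b(\t)}C_\Lambda$, Proposition~\ref{prop:RHfactorization} shows $L(\t)$ solves the f-KT lattice with $\dot b b^{-1}=\Pi_{\mathfrak{b}_+}L$, whence its diagonal is $\sum_i a_i(\t)H_i$. As $\tilde u(\t)^*\in N_+$ fixes the highest weight vector, $(v^{\omega_k},g(\t)v^{\omega_k})=(v^{\omega_k},b(\t)v^{\omega_k})=\xi_{\omega_k}(h(\t))$, so $a_k(\t)=\frac{d}{dt}\ln\tau_k(\t;w)$ exactly as in Corollary~\ref{cor:diagonal}; the entries $b_\alpha(\t)$ are then simply the off-diagonal entries of the honest Lax matrix $L(\t)$, recovered from the $a_k$ through the weight-homogeneous recursion of Section~\ref{sec:Kowalevski-Painleve}. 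As a consistency check, letting $\Lambda$ and $u$ range freely yields $\ell+\big(\dim N_--l(w)\big)=\ell+l(w_0)-l(w)$ effective parameters — the $\ell$ spectral parameters together with the dimension of the Bruhat cell $N_-\dot w B_+/B_+$, the excess $l(w)$ being the dimension of the stabilizer $N_-\cap\dot w B_+\dot w^{-1}$ — which matches precisely the free-parameter count of Corollary~\ref{cor:compatibility}, confirming that \eqref{eq:tau2} exhausts the Laurent solutions classified in Proposition~\ref{prop:Uniqueness}.

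I expect the leading-order claim of the second paragraph to be the main obstacle: one must show that the corrections coming from $s\in\mathfrak{s}$ and from $u\in N_-$ genuinely lie in strictly higher powers of $t$, and the principal rescaling by $\check\rho(t)$ is the device that makes this transparent. Everything else — the factorization, the differential equation, and the identification of the diagonal — is then a rerun of Proposition~\ref{prop:RHfactorization}, and convergence of the resulting Laurent series is guaranteed by Theorem~\ref{thm:K-matrixgeneral}. The passage from the lattice to the full hierarchy is identical, replacing $\check\rho(t)$ by the coweight scaling appropriate to each time $t_{m_k}$ and treating the remaining times as analytic parameters.
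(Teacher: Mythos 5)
Your argument is correct in substance, but it takes a genuinely different route from the paper's. You prove the theorem \emph{directly}: the principal rescaling by $\check\rho(t)$ shows that each $\tau_k(t;w)$ equals $t^{a^w_{k0}}$ times a series that is nonzero at $t=0$ (the identification of the exponent $(\omega_k-w\omega_k)(\check\rho)=a^w_{k0}$ and the nonvanishing of $(v^{\omega_k},\exp(e)\dot{w}v^{\omega_k})$ are indeed exactly what the computations behind Proposition~\ref{conj:Nonnegative} and Theorem~\ref{thm:Onetoone} provide), hence by the Bruhat-cell criterion $g(\mathbf{t})=\exp(\Theta_{C_\Lambda}(\mathbf{t}))u\dot{w}$ lies in the big cell for small $t\neq 0$, and rerunning the factorization argument of Proposition~\ref{prop:RHfactorization} there yields a genuine solution with $a_k=\frac{d}{dt}\ln\tau_k$. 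The paper argues quite differently: it first completes the Kowalevski--Painlev\'e analysis (Theorem~\ref{thm:Kmatrix} for the spectrum of $\mathcal{K}_w$ and Corollary~\ref{cor:compatibility} for the automatic compatibility), so that the number and positions of free parameters in the Laurent solution attached to each $w\in\mathcal{W}$ are known exactly, and then matches these with the Taylor coefficients of \eqref{eq:tau2} by assigning weights $\varpi(U_\alpha)=-L(\alpha)$, $\varpi(C_i)=-L(\beta_i)$ to the coordinates of $u$ and $C_\Lambda$ and invoking Lemma~\ref{eq:auxiliarylemma}. Your route is more elementary and self-contained --- it needs neither the resonance analysis nor the compatibility discussion --- whereas the paper's matching argument buys more: it identifies the family \eqref{eq:tau2} with \emph{all} Laurent series solutions, which is what powers Theorem~\ref{conj:Flag}; your parameter count $\ell+l(w_0)-l(w)$ in the last paragraph is a consistency check, not a proof of that exhaustion.

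One formula in your third paragraph needs fixing. After writing $g(\mathbf{t})=\tilde u(\mathbf{t})b(\mathbf{t})$, the solution is
\begin{equation*}
L(\mathbf{t})=\mathrm{Ad}_{\tilde u(\mathbf{t})^{-1}}C_\Lambda=\mathrm{Ad}_{b(\mathbf{t})}\,\mathrm{Ad}_{(u\dot{w})^{-1}}C_\Lambda ,
\end{equation*}
and \emph{not} $\mathrm{Ad}_{b(\mathbf{t})}C_\Lambda$: since $u\dot{w}$ does not commute with $C_\Lambda$, the two differ (in Proposition~\ref{prop:RHfactorization} the equality $\mathrm{Ad}_{u^{-1}}L_{\mathbf{0}}=\mathrm{Ad}_{b}L_{\mathbf{0}}$ uses that $\exp(\Theta_{L_{\mathbf 0}}(\mathbf{t}))$ commutes with $L_{\mathbf 0}$, and here that role is played by $C_\Lambda$ alone, not by $g(\mathbf{t})$). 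With $L=\mathrm{Ad}_{\tilde u^{-1}}C_\Lambda$ the computation goes through verbatim, giving $\mathrm{Ad}_{\tilde u^{-1}}C_\Lambda=\tilde u^{-1}\dot{\tilde u}+\dot b b^{-1}$ with $\dot b b^{-1}=\Pi_{\mathfrak{b}_+}L$, and your identification of the diagonal via $(v^{\omega_k},g(\mathbf{t})v^{\omega_k})=\xi_{\omega_k}(h(\mathbf{t}))$ is then exactly as you state.
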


As function of $t$, $\tau$-functions as given in \eqref{eq:tau2} are holomorphic functions in the complex domain, so their zeroes are isolated. From this fact we deduce
\begin{corollary}
If the Toda flow $\exp (tC_{\Lambda})uB_+$ hits a smaller Bruhat cell, then it intersects with that Bruhat cell transversely.
\end{corollary}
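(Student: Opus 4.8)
The plan is to prove the corollary as the tangent-space statement it requires: writing $\gamma(t)=\exp(tC_\Lambda)uB_+$ for the flow in $G/B_+$ and $t_*$ for a time at which it enters a smaller cell, I will show that the velocity $\gamma'(t_*)$ does not lie in the tangent space of that cell. When the cell is a Schubert divisor this is exactly transversality in the strict sense $\gamma'(t_*)+T_pC=T_p(G/B_+)$, and it is the only sense available for a one-dimensional flow meeting a stratum of higher codimension. From the factorization $\exp(t_*C_\Lambda)u=u_*\dot w_*b_*$ recorded before \eqref{eq:tau}, the flow sits at $p:=\gamma(t_*)=u_*\dot w_*B_+$ and lands in the opposite Bruhat cell $C_{w_*}:=N_-\dot w_*B_+/B_+$ of codimension $l(w_*)$; the hypothesis ``hits a smaller cell'' means $w_*\ne e$.

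First I would identify the two tangent spaces inside $T_p(G/B_+)\cong\mathfrak g/\text{Ad}_{u_*\dot w_*}\mathfrak b_+$. Since $\gamma$ is an orbit of the left action of the one-parameter group $\exp(tC_\Lambda)$, its velocity is the value at $p$ of the fundamental vector field of $C_\Lambda$, i.e. the class of $C_\Lambda$ in this quotient; and since $C_{w_*}$ is a single $N_-$-orbit, its tangent space at $p$ is the image of $\mathfrak n_-$. Thus $\gamma'(t_*)\in T_pC_{w_*}$ if and only if $C_\Lambda\in\mathfrak n_-+\text{Ad}_{u_*\dot w_*}\mathfrak b_+$, and the whole corollary reduces to proving that this membership fails for $w_*\ne e$.

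Next I would eliminate $u_*$. Because $C_{w_*}$ is invariant under the left action of $N_-$, translating by $u_*^{-1}\in N_-$ moves the base point to $\dot w_*B_+$ and replaces the direction $C_\Lambda$ by $\text{Ad}_{u_*^{-1}}C_\Lambda$, so it suffices to show $\text{Ad}_{u_*^{-1}}C_\Lambda\notin\mathfrak n_-+\text{Ad}_{\dot w_*}\mathfrak b_+$. Here the regularity of the flow direction enters decisively: by Kostant's section (Proposition \ref{thm:Kostantsection}) $C_\Lambda\in e+\mathfrak s$ with $e=\sum_{i=1}^{\ell}X_{\alpha_i}$ and $\mathfrak s\subseteq\mathfrak b_-$, and conjugating by $u_*^{-1}\in N_-$ only adds terms in $\mathfrak b_-$ (as $[\mathfrak n_-,\mathfrak g_{\alpha_i}]\subseteq\mathfrak b_-$ for a simple root $\alpha_i$), so $\text{Ad}_{u_*^{-1}}C_\Lambda\equiv e\pmod{\mathfrak b_-}$ with all $\ell$ simple-root coefficients still equal to $1$. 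The subspace $\mathfrak n_-+\text{Ad}_{\dot w_*}\mathfrak b_+$ is $\mathfrak h$-stable, hence a direct sum of $\mathfrak h$ with the root spaces $\mathfrak g_\beta$ for which $\beta<0$ or $w_*^{-1}\beta>0$; in particular it contains $\mathfrak b_-$. Therefore the membership is equivalent to $e\in\mathfrak n_-+\text{Ad}_{\dot w_*}\mathfrak b_+$, i.e. to each simple root space $\mathfrak g_{\alpha_i}$ lying in it, i.e. to $w_*^{-1}\alpha_i>0$ for every $i$, which holds precisely when $w_*=e$. Since by hypothesis $w_*\ne e$, some simple root is sent negative by $w_*^{-1}$, its coefficient in $e$ is detected, and non-tangency follows.

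The main obstacle is conceptual rather than computational: a smaller cell can have codimension $l(w_*)>1$, so the naive spanning form of transversality is literally unattainable for a one-dimensional flow, and the claim must be phrased as the tangent-space condition $\gamma'(t_*)\notin T_pC_{w_*}$. The delicate input that makes this hold for every $w_*\ne e$ is exactly the regularity of $C_\Lambda$, namely that its principal-nilpotent part $e$ has every simple-root coefficient nonzero; for a direction lying in a proper parabolic the flow could run tangent to a boundary cell. Finally I would reconcile with the holomorphic picture used to state the corollary: for a Schubert divisor $\overline{C_{s_i}}=\{\tau_i=0\}$ a short weight computation gives $\tfrac{d}{dt}\tau_i(t_*)=(v^{\omega_i},X_{\alpha_i}\,\xi_i)$ with $\xi_i=\exp(t_*C_\Lambda)uv^{\omega_i}$, so the above non-tangency is equivalent to $\tfrac{d}{dt}\tau_i(t_*)\ne0$, i.e. to $\tau_i$ having a simple zero; this both recovers the isolatedness of the crossing and upgrades it to a genuine first-order transverse crossing.
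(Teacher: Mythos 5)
Your proof is correct, but it takes a genuinely different route from the paper's. The paper's own justification is a one-line holomorphicity argument: the $\tau$-functions \eqref{eq:tau2} are holomorphic in $t$ and not identically zero, so their zeros are isolated, and ``transversely'' is read off from that --- the flow meets the Painlev\'e divisor only at isolated times. You instead prove the sharper infinitesimal statement $\gamma'(t_*)\notin T_pC_{w_*}$, and your chain of reductions checks out: the identification $T_p(G\slash B_+)\cong\mathfrak{g}\slash\mathrm{Ad}_{u_*\dot w_*}\mathfrak{b}_+$ with velocity the class of $C_\Lambda$; the $N_-$-equivariance step replacing $C_\Lambda$ by $\mathrm{Ad}_{u_*^{-1}}C_\Lambda\in e+\mathfrak{b}_-$ (which is exactly what Proposition \ref{thm:Kostantsection} guarantees); the observation that $\mathfrak{n}_-+\mathrm{Ad}_{\dot w_*}\mathfrak{b}_+$ is $\mathfrak{h}$-stable and contains $\mathfrak{b}_-$, so membership reduces to $e$, i.e.\ to $w_*^{-1}\alpha_i>0$ for all $i$, which forces $w_*=e$; and the weight computation in the last paragraph is also right, since $\omega_i-\alpha_j$ is a weight of $V^{\omega_i}$ only for $j=i$. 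The trade-off: the paper's argument is immediate and suffices for isolatedness, but it leaves the meaning of ``transverse'' for cells of codimension $\ell(w_*)\ge 2$ unexamined, whereas your argument supplies a genuine first-order non-tangency for every $w_*\ne e$ (which in turn implies isolatedness, since composing with a local submersion cutting out $C_{w_*}$ yields a simple zero), and it pinpoints the decisive role of the regularity of $C_\Lambda$ --- that every simple-root coefficient of $e$ is nonzero --- which the paper's proof never invokes. One caution on your closing reconciliation: the equivalence between non-tangency and a simple zero of $\tau_i$ holds only at smooth points of the divisor, i.e.\ for $w_*=s_i$; at deeper strata $\tau_i$ vanishes to order $a_{i0}^{w_*}$, which by Theorem \ref{thm:Onetoone} (see Table \ref{tabindicial} for $\mathfrak{so}_5$) is typically greater than $1$, so a higher-order zero of $\tau_i$ coexists with non-tangency to the stratum --- consistent with your argument, but not a simple-zero statement in general.
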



To further illustrate the relation between Laurent series solutions of the f-KT lattice and the corresponding $\tau$-functions, we would like to demonstrate how the free parameters in $a_i(t)$ and the type ``X'' eigenvalues of the Kowalevski matrices arise from the $\tau$-functions defined in \eqref{eq:tau2}.

We first note the following easy lemma.
\begin{lemma}\label{eq:auxiliarylemma}
Suppose $x(t) = \frac{d}{dt}\log y(t)$, and $y(t) = \sum \limits_{i = 0}^{\infty}b_it^{i+k}$, where $b_0 = 1,\ k \in \mathbb{Z}_{+}$. Assume that the coefficients of $y(t)$ are generated by $n$ independent free parameters, i.e. $b_i \in \mathbb{C}[b_{\alpha_1}, b_{\alpha_2}, \dots, b_{\alpha_n}]$ for all integers $i \ge 0$. Then a basis of free parameters appearing in the coefficient of the Laurent series of $x(t) = \sum \limits_{j = 0}^{\infty}a_jt^{j - 1}$ ($a_{0} = k$) can be chosen as $\{a_{\alpha_1}, a_{\alpha_2}, \dots, a_{\alpha_n}\}$.
\end{lemma}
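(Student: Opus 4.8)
The plan is to turn the relation $x(t)=\frac{d}{dt}\log y(t)$ into the single identity $y'(t)=x(t)\,y(t)$, read off an explicit recursion linking the coefficients $a_j$ and $b_i$, and then show that passing from the free parameters $\{b_{\alpha_1},\dots,b_{\alpha_n}\}$ of $y$ to the coefficients $\{a_{\alpha_1},\dots,a_{\alpha_n}\}$ of $x$ is an invertible triangular polynomial change of variables. After relabeling we may assume $\alpha_1<\alpha_2<\cdots<\alpha_n$, since the free parameters sit at distinct coefficient indices of the single series $y$.

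First I would compare coefficients in $y'=xy$. With $y(t)=\sum_{i\ge 0}b_it^{i+k}$ and $x(t)=\sum_{j\ge 0}a_jt^{j-1}$, where $b_0=1$ and $a_0=k$, matching the coefficient of $t^{n+k-1}$ on both sides gives $(n+k)b_n=\sum_{j=0}^{n}a_jb_{n-j}$; using $a_0=k$ and $b_0=1$ this collapses to
\[
a_n=n\,b_n-\sum_{j=1}^{n-1}a_j\,b_{n-j},\qquad n\ge 1.
\]
By induction on $n$ this shows that every $a_n$ is a polynomial in $b_1,\dots,b_n$ (and conversely $b_n=\tfrac1n a_n+\cdots$ is a polynomial in $a_1,\dots,a_n$). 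In particular each $a_j$ lies in the ring $\mathbb{C}[b_{\alpha_1},\dots,b_{\alpha_n}]$ generated by the free parameters of $y$.

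The decisive step is to invoke the weight-homogeneity of the Laurent solution. Assigning weight $\alpha_m$ to the free parameter $b_{\alpha_m}$, the coefficient $b_i$ is weight homogeneous of weight $i$ — this is exactly the weight-homogeneous structure underlying Proposition \ref{prop:Uniqueness} — and the recursion above then forces each $a_i$ to be weight homogeneous of weight $i$ as well. Consequently $b_i$ and $a_i$ can involve only those free parameters $b_{\alpha_m}$ with $\alpha_m\le i$, since a weight-homogeneous polynomial of weight $i$ in positively-weighted variables cannot contain a variable of weight exceeding $i$. Applying this to the recursion at $n=\alpha_j$: the term $\alpha_j b_{\alpha_j}$ is linear in $b_{\alpha_j}$, while in each product $a_l\,b_{\alpha_j-l}$ with $1\le l\le \alpha_j-1$ both factors have weight strictly below $\alpha_j$ and therefore involve only $b_{\alpha_1},\dots,b_{\alpha_{j-1}}$. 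Hence
\[
a_{\alpha_j}=\alpha_j\,b_{\alpha_j}+P_j\!\left(b_{\alpha_1},\dots,b_{\alpha_{j-1}}\right)
\]
for some polynomial $P_j$, so the substitution $(b_{\alpha_1},\dots,b_{\alpha_n})\mapsto(a_{\alpha_1},\dots,a_{\alpha_n})$ is lower triangular with nonzero diagonal entries $\alpha_1,\dots,\alpha_n$.

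I would then conclude by inverting this triangular map recursively: the relation $b_{\alpha_j}=\tfrac1{\alpha_j}\bigl(a_{\alpha_j}-P_j(b_{\alpha_1},\dots,b_{\alpha_{j-1}})\bigr)$ expresses each $b_{\alpha_j}$ as a polynomial in $a_{\alpha_1},\dots,a_{\alpha_j}$, so the map is a polynomial automorphism and $\mathbb{C}[a_{\alpha_1},\dots,a_{\alpha_n}]=\mathbb{C}[b_{\alpha_1},\dots,b_{\alpha_n}]$ with $a_{\alpha_1},\dots,a_{\alpha_n}$ algebraically independent. Since every $a_j$ already lies in this ring, $\{a_{\alpha_1},\dots,a_{\alpha_n}\}$ is a valid basis of free parameters. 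I expect the only point requiring real care — as opposed to bookkeeping — to be the claim that no higher-index free parameter $b_{\alpha_m}$ with $\alpha_m>\alpha_j$ intrudes into $a_{\alpha_j}$; this is precisely what weight-homogeneity supplies, and it is what upgrades the change of variables from merely invertible to explicitly triangular, pinning the $n$ new free parameters to the indices $\alpha_1,\dots,\alpha_n$.
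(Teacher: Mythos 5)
Your proposal is correct, and it starts from exactly the same identity the paper uses --- the recursion $(j+k)b_j=\sum_{i=0}^{j}a_{j-i}b_i$ obtained by comparing coefficients in $y'(t)=x(t)y(t)$ --- but it does not stop there, and that difference matters. The paper's entire proof is the assertion that the lemma ``follows from'' this recursion; what you supply in addition is the triangularity statement $a_{\alpha_j}=\alpha_j b_{\alpha_j}+P_j(b_{\alpha_1},\dots,b_{\alpha_{j-1}})$ together with its inversion, and to get it you invoke a homogeneity hypothesis (each $b_i$ weight homogeneous of weight $i$ when $\varpi(b_{\alpha_m})=\alpha_m$) that is not part of the lemma's literal statement. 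This extra ingredient is genuinely needed, not mere bookkeeping: the recursion alone only yields $\mathbb{C}[a_1,a_2,\dots]=\mathbb{C}[b_1,b_2,\dots]$, and the lemma read literally is false without some such hypothesis. For instance, take $k=1$, $n=1$, $\alpha_1=2$ and $b_1=b_2=s$, $b_i=0$ for $i\ge 3$: every $b_i$ lies in $\mathbb{C}[b_2]$, yet $a_1=s$ and $a_2=2s-s^2$, so $s\notin\mathbb{C}[a_2]$ and $\{a_2\}$ fails to generate the coefficients of $x$. What rescues the lemma in the paper's application is precisely that the Taylor coefficients of the $\tau$-functions \eqref{eq:tau2} are weight homogeneous in the parameters $U_\alpha$, $C_i$ (this is spelled out in the proof of Theorem \ref{thm:generaltau}; note that it is this grading on the parameter space, rather than the grading on the Lax variables in Proposition \ref{prop:Uniqueness}, that your argument actually uses), so a parameter of weight $\alpha_m$ cannot appear in any coefficient of index smaller than $\alpha_m$ --- which is exactly your triangularity. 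In short: same starting point as the paper, but your write-up is a rigorous completion of what the paper leaves as a one-line assertion, and the point you flag as ``the only point requiring real care'' is indeed the point the paper glosses over.
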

\begin{proof}
Coefficients of $x(t)$ can be recursively obtained from the coefficients of $y(t)$ through the relation $y'(t) = x(t)y(t)$. More precisely, the lemma follows from the following recursive relations
\[ \sum \limits_{i = 0}^{j}a_{j-i}b_i = (j+k)b_{j}, \qquad j \ge 0,\]
where $b_0 = 1$ and $a_{0} = k$.
\end{proof}

\begin{proof}[Proof of Theorem \ref{thm:generaltau}]
By Theorem \ref{thm:Kmatrix} and Corollary \ref{cor:compatibility}, we know exactly where the free parameters arise in the Laurent series solutions. To prove Theorem \ref{thm:generaltau} we match these parameters with free parameters appearing in $\tau$-functions \eqref{eq:tau2}. 

Note that the general $\tau$-functions \eqref{eq:tau2} can be written as
\begin{equation}
\tau_i(t; w) = (v^{\omega_i}, e^{tC_{\Lambda}}u\dot{w}v^{\omega_i}) = (v^{\omega_i}, e^{tC_{\Lambda}}e^{U}\dot{w} v^{\omega_i}) \qquad 1 \le i \le \ell, \ w \in \mathcal{W}, \ u = e^{U} \in N_-.
\end{equation}
As $U \in \mathfrak{n}_-$ and $C_{\Lambda} \in e + \mathfrak{s} \subseteq e + \mathfrak{n}_-$ ($\ell$-dimensional), we can write $U = \sum_{\alpha \in \Delta_+}U_{\alpha}Y_{\alpha}$ and $C_{\Lambda} = e + \sum_{i = 1}^{\ell}C_{i}Y_{\beta_i}, \beta_i \in \Delta_+$.
Note that as functions of $U_{\alpha}, C_{i}$ and $t$, $\tau_i(t; w)$'s are weight homogeneous functions if we assign weights $\varpi(U_{\alpha}) = -L(\alpha)$, $\varpi(C_{i}) = - L(\beta_i)$ and $\varpi(t) = 1$, that is weights of $U_{\alpha}, C_{i}$ are given by the heights of the corresponding roots. 

Writing the $\tau$-functions in its Taylor expansion form with respect to $t$:
\[\tau_i(t; w) = \sum \limits_{j = 0}^{\infty}\tau_{ij}t^{j + k_i} \qquad 1 \le i \le \ell,\]
where $\tau_{i0} = 1$ and $k_i$ depending on $w \in \mathcal{W}$ is the leading exponent of $\tau_i(t; w)$, then we have
\begin{enumerate}
\item with the assignned weights for $U_{\alpha}, C_{i}$ and $t$, $\tau_i(t; w)$ is weighted homogeneous with weight $k_i(w)$,
\item the free parameters $\tau_{ij}$ in the Taylor expansion of $\tau_i(t; w)$'s come from $U_{\alpha}$ and $C_{i}$, $\alpha \in \Delta_+$,
\item let $I \subseteq \Delta_+$ be such that $\{U_{\alpha}, \alpha \in I; C_i, 1 \le i \le \ell\}$ is the set of minimum generators of free parameters in $\tau(t; w)$, equivalently this means $\{U_{\alpha}, \alpha \in I\}$ is a coordinate system for the Bruhat cell $N_-\dot{w}B_+$, then a basis of free parameters in the Taylor expansion of $\tau$-functions can be taken as subset of $\{\tau_{ik}, 1 \le i \le \ell, k = L(\alpha), \alpha \in I\text{ or } k = L(\beta_j), 1 \le j \le \ell\}$ (the cardinality of the generators is $\ell + \vert I \vert$).
\end{enumerate}
By Lemma \ref{eq:auxiliarylemma} we thus have matched all the free parameters in Laurent series solutions with free parameters in $\tau$-functions \eqref{eq:tau2} and finish the proof of Theorem \ref{thm:generaltau}.
\end{proof}

Finally we have the following global structure for all of the Laurent series solutions of f-KT lattice.
\begin{theorem}\label{conj:Flag}
For any $\Lambda \in \mathbb{C}^{\ell}$, the compactification of $c_{\Lambda}(F_{\Lambda})$ is $G \slash B_+$. All the Laurent series solutions of f-KT hierarchy are parameterized by ${{G} \slash {B}_+ \times \mathbb{C}^{\ell}}$, where ${G} \slash {B}_+$ is the flag variety and ${ \mathbb{C}^{\ell}}$ parametrizes the data for spectral parameters. 
\end{theorem}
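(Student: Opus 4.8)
The plan is to assemble the statement from the Bruhat decomposition of $G/B_+$ together with the two-sided dictionary between points of the flag variety and Laurent solutions built up in the preceding sections. First I would dispose of the compactification claim. By Kostant's theorem (Proposition~\ref{thm:Kostantsection}), every $L \in \mathcal{F}_\Lambda$ factors uniquely as $L = u^{-1}C_\Lambda u$ with $u \in N_-$, where $C_\Lambda \in e+\mathfrak{s}$ is the unique representative with invariants $\Lambda$; hence $c_\Lambda(\mathcal{F}_\Lambda) = \{uB_+ : u \in N_-\}$ is precisely the big Bruhat cell $N_- B_+/B_+$. This cell is open and dense in the projective variety $G/B_+$, so its closure is all of $G/B_+$. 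The genuinely dynamical content, that the flows actually reach every lower-dimensional Bruhat cell rather than merely accumulating on the boundary abstractly, is supplied by Proposition~\ref{conj:Nonnegative}, which produces a solution of the indicial equations, hence a Painlev\'e-divisor hit, for each of the $|\mathcal{W}|$ cells.

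For the parameterization I would construct two mutually inverse maps between $G/B_+ \times \mathbb{C}^\ell$ and the set of Laurent series solutions. In one direction, given a point and spectral data, I would write the point via the Bruhat decomposition as $u\dot{w}B_+$ and pick $C_\Lambda \in e+\mathfrak{s}$ realizing $\Lambda$; Theorem~\ref{thm:generaltau} then shows that $\tau_k(\t;w) = (v^{\omega_k}, \exp(\Theta_{C_\Lambda}(\t))\,u\dot{w}\,v^{\omega_k})$ are $\tau$-functions and that $a_k = \frac{d}{dt}\ln\tau_k$, together with the recursively determined $b_\alpha$, yield a Laurent solution. A point to check is well-definedness: replacing the representative $u\dot{w}$ by $u\dot{w}b$ for $b \in B_+$ only rescales each $\tau_k$ by the constant scalar arising from $b\,v^{\omega_k} = d_k v^{\omega_k}$, which is annihilated by $\frac{d}{dt}\log$, so the solution depends only on the coset $u\dot{w}B_+$, i.e.\ on the point of the flag variety.

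In the other direction, from a Laurent solution I would read off its conserved Chevalley invariants to recover $\Lambda \in \mathbb{C}^\ell$, and read off its leading coefficients $\vec{a}_0$; by Proposition~\ref{conj:Nonnegative} and Theorem~\ref{thm:Onetoone} these determine a unique $w \in \mathcal{W}$. The remaining free parameters, whose count is $\ell + l(w_0) - l(w)$ by Corollary~\ref{cor:compatibility}, split as $\ell$ spectral parameters plus $l(w_0) - l(w)$ further ones; since $\dim N_-\dot{w}B_+/B_+ = l(w_0) - l(w)$, the latter should coordinatize the Bruhat cell. The heart of the matching is exactly the parameter bookkeeping carried out in the proof of Theorem~\ref{thm:generaltau}: the coordinates $\{U_\alpha : \alpha \in I\}$ of the cell together with the spectral coordinates $\{C_i\}$ generate all free parameters of the $\tau$-functions, and Lemma~\ref{eq:auxiliarylemma} transports this to a basis of free parameters of the $a_k(t)$. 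Verifying that the two maps are inverse to one another, and in particular that distinct points of $G/B_+$ produce distinct solutions, is where the real work lies: the dimension counts from Corollary~\ref{cor:freeparameters} make the bijection plausible, but it is the explicit correspondence of free parameters that pins it down, and I expect this identification to be the main obstacle.

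Finally I would assemble the pieces. Fixing $\Lambda$, the solutions attached to a given $w$ are parameterized by the cell $N_-\dot{w}B_+/B_+$, and the Bruhat decomposition $G/B_+ = \bigsqcup_{w \in \mathcal{W}} N_-\dot{w}B_+/B_+$ glues these into the whole flag variety; restoring the spectral factor yields $G/B_+ \times \mathbb{C}^\ell$. The passage from the f-KT lattice to the full hierarchy is harmless, since all the flows commute, share the same isospectral sets $\mathcal{F}_\Lambda$, and are governed by the same factorization $\exp(\Theta_{C_\Lambda}(\t)) = u(\t)b(\t)$, so that the same $w$ and the same Bruhat-cell coordinates control every flow simultaneously.
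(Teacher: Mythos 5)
Your proposal is correct and takes essentially the same route as the paper: the paper's (very terse) proof likewise reduces Theorem \ref{conj:Flag} to the matching of free parameters between the Kowalevski-Painlev\'e Laurent solutions and the $\tau$-functions of Theorem \ref{thm:generaltau}, with Proposition \ref{conj:Nonnegative}, Theorem \ref{thm:Onetoone} and Corollary \ref{cor:compatibility} supplying the cell-by-cell bookkeeping and the Bruhat decomposition gluing the cells into $G \slash B_+ \times \mathbb{C}^{\ell}$. Your added care about well-definedness on cosets and injectivity only makes explicit what the paper's one-paragraph argument leaves implicit; the identification of free parameters is the same step doing the real work in both.
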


\begin{proof}
The Laurent series solutions associated with $w \in \mathcal{W}$ and the $\tau$-functions associated with the same $w \in \mathcal{W}$ have the same singularity structure and contain the same number of free parameters at the same positions, thus they give the same solution for the f-KT lattice. 
\end{proof}

\begin{remark}
\begin{enumerate}
\item There should exist an explicit correspondence between Bruhat decomposition of the flag variety here and the Bruhat order in the study of the full symmetric Toda lattice (c.f. \cite{Chernyakov-Sharygin-Sorin2012}). 
\item The Laurent series solutions studied in this paper actually contain solutions for various intermediate Kostant-Toda lattices (c.f. \cite{Flaschka-Haine1991, Damianou-Sabourin-Vanhaecke2015}). For example, Laurent series solutions corresponding to $w_0$ will force $b_{\alpha} = 0$ for $\alpha \not\in \Pi$, so they give solutions to the classical tridiagonal Kostant-Toda lattice. The other solutions of tridiagonal Kostant-Toda lattice can be obtained by manipulating the free parameters so that $b_{\alpha} = 0$ for $\alpha \not\in \Pi$ (c.f. Section \ref{sec:Painleveso5} for an explicit example). So in some sense the Laurent series solutions can see all of the coadjoint orbits simultaneously. 
\end{enumerate}
\end{remark}

\begin{corollary}
Only f-KT flows belong to solutions of the tridiagonal Kostant-Toda lattice intersect with the smallest dimensional Bruhat cell.
\end{corollary}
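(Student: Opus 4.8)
The plan is to show that an f-KT flow meets the smallest Bruhat cell precisely when its Lax matrix is confined to the tridiagonal locus $\{b_\alpha = 0 : \alpha \notin \Pi\}$. First I would identify the smallest cell. By the discussion following Corollary \ref{cor:freeparameters}, the cell $N_-\dot{w}B_+/B_+$ has dimension $l(w_0)-l(w)$, which is minimized (equal to $0$) exactly at $w = w_0$. Since $\dot{w}_0 B_+\dot{w}_0^{-1} = B_-$ and $N_-\subset B_-$, we have $N_-\dot{w}_0 B_+ = \dot{w}_0 B_+$, so this smallest cell is the single point $\dot{w}_0 B_+/B_+$. Thus the assertion that the flow $\exp(tC_\Lambda)u_0 B_+$ meets the smallest cell means there is a time $t_*$ with $\exp(t_* C_\Lambda)u_0 \in \dot{w}_0 B_+$, which is exactly the case $w_* = w_0$ in the factorization $\exp(\Theta_{L_0}(t_*)) = u_*\dot{w}_* b_*$ preceding \eqref{eq:tau}.

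Next I would read off the local behaviour near such a $t_*$ from the Kowalevski-Painlev\'e analysis. By \eqref{eq:tau} the Laurent expansion of the solution about $t_*$ is the one associated with $w = w_0$, and by Theorem \ref{thm:Kmatrix} every type-X eigenvalue of $\mathcal{K}_{w_0}$ equals $L(w_0\alpha)$ with $\alpha\in\Delta_+$. Since $w_0$ sends every positive root to a negative one, $L(w_0\alpha) < 0$ for all $\alpha\in\Delta_+$, so $\mathcal{K}_{w_0}$ has no positive type-X eigenvalue. By Corollary \ref{cor:compatibility} the $w_0$-solution therefore carries exactly $\ell + l(w_0) - l(w_0) = \ell$ free parameters, all of type C. In the notation of the proof of Theorem \ref{thm:generaltau} this says the index set $I$ of off-diagonal coordinates $U_\alpha$ is empty (consistent with the cell being a point), so there is no freedom in the entries $b_\alpha$ with $\alpha\notin\Pi$; indeed, as recorded in the remark following Theorem \ref{conj:Flag}, the $w_0$-solution forces $b_\alpha \equiv 0$ for every $\alpha\notin\Pi$.

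Finally I would conclude as follows: tridiagonality holds for $t$ in a punctured neighbourhood of $t_*$, and since the tridiagonal locus $\{b_\alpha = 0 : \alpha\notin\Pi\}$ is invariant under the f-KT flow (it is the classical tridiagonal Kostant-Toda lattice, as one checks directly from \eqref{eq:f-KTincoordinate}), uniqueness of solutions propagates $b_\alpha\equiv 0$ to all $t$. Hence any f-KT flow meeting the smallest cell is a solution of the tridiagonal Kostant-Toda lattice, which is the asserted statement. The step I expect to be the main obstacle is the passage from ``no positive type-X resonance'' to the genuine vanishing $b_\alpha\equiv 0$, as opposed to merely ``$b_\alpha$ carries no free parameter'': one must verify that the frozen leading coefficients $b^{w_0}_{\alpha 0}$ of the indicial system \eqref{eq:Indicialf-KT} actually vanish for non-simple $\alpha$ and that the recursion \eqref{eq:Higherorderterm} then keeps all higher coefficients $b_{\alpha k}$ zero, so the off-diagonal entries are not just rigid but identically zero. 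For the (unneeded) converse inclusion one would note that on a fixed isospectral set the tridiagonal Kostant-Toda flow is itself an $\ell$-parameter family coinciding with the $w_0$-family, whose Painlev\'e divisor is the point $\dot{w}_0 B_+/B_+$.
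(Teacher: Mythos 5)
Your overall route is the same as the paper's: identify the smallest cell $N_-\dot{w}B_+/B_+$ as the single point $\dot{w}_0B_+/B_+$ attained at $w=w_0$, pass via Theorem \ref{thm:generaltau}/Theorem \ref{conj:Flag} to the Laurent solutions with leading data $\vec{a}^{w_0}_0$, and conclude tridiagonality from the structure of those solutions. The reduction steps you carry out (the cell is a point; all type-X eigenvalues of $\mathcal{K}_{w_0}$ equal $L(w_0\alpha)<0$; exactly $\ell$ type-C parameters survive) are correct. But the proof has a genuine gap at its central step, which you yourself flag: the passage from ``the $w_0$-solution carries no free parameter in any $b_\alpha$'' to ``$b_\alpha\equiv 0$ for $\alpha\notin\Pi$''. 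Rigidity does not imply vanishing, and for this implication you cite the remark following Theorem \ref{conj:Flag} --- but that remark \emph{is} the assertion the corollary encodes, so as written the argument is circular. Your final step (propagating tridiagonality by flow-invariance and uniqueness) also takes $b_\alpha\equiv 0$ near $t_*$ as input, so it inherits the same gap rather than repairing it.

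The gap is closable with tools already in the paper, and this is in effect what the paper does (explicitly only in the rank-2 example, case 4.1 of \eqref{eq:recursiveso5}). By Theorem \ref{thm:Onetoone}, for $w=w_0$ the leading vector satisfies $\sum_i a^{w_0}_{i0}\check{\alpha}_i=\sum_{\check{\alpha}>0}\check{\alpha}=2\check{\rho}$, whence for every $\alpha\in\Delta_+$
\begin{equation*}
\sum_{i=1}^{\ell}\alpha(H_{\alpha_i})\,a^{w_0}_{i0}=\alpha(2\check{\rho})=2L(\alpha),
\end{equation*}
equivalently, via Theorem \ref{thm:Kmatrix}, this sum equals $L(\alpha)-L(w_0\alpha)$. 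Consequently, in the indicial system \eqref{eq:Indicialf-KT} the coefficient multiplying $b_{\alpha 0}$ in the equation indexed by $\alpha$ is $\nu_\alpha-2L(\alpha)=1-L(\alpha)$, which is nonzero for every non-simple $\alpha$; downward induction on height starting from the highest root then yields $b^{w_0}_{\alpha 0}=0$ for all $\alpha\notin\Pi$. At level $k\ge 1$ of the recursion \eqref{eq:Higherorderterm}, the diagonal coefficient in the $b_\alpha$-row is $k-\nu_\alpha+2L(\alpha)=k-1+L(\alpha)>0$, the coupling of that row to the unknowns $a_{ik}$ carries the factor $b_{\alpha 0}=0$, and the inhomogeneous terms consist of products each containing a factor $b_{\alpha,j}$ with $j<k$; induction on $k$, with downward induction on height inside each level, forces $b_{\alpha k}=0$ for all $k$ and all non-simple $\alpha$, regardless of how the $\ell$ type-C parameters are chosen. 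Inserting this computation after your eigenvalue count completes your proof and makes it coincide with the paper's.
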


\section{An example:f-KT lattice on $\mathfrak{so}_5(\mathbb{C})$}\label{sec:Painleveso5}
In this section we consider f-KT lattice on the rank $2$ type $B$ Lie algebra as an example to illustrate the main results in the previous sections. 

Let $V$ be a $5$ dimensional vector space, and $\Psi$ a non-degenerate symmetric bilinear form on $V$ of maximal index $2$. Then the quadratic form $Q$ associated to $\Psi$ is $Q(x) = \frac{1}{2}\Psi(x, x)$ for $x \in V$. We take a basis $\Xi = (e_{-2}, e_{-1}, e_0, e_{1}, e_{2})$ of $V$ such that $\Psi(e_{i}, e_{j}) = (-1)^i\delta_{i, -j}$, then in particular we have $\Psi(e_0, e_0) = 1$ and
\[Q(\sum x_ie_i) = \frac{1}{2}x_0^2 + \sum \limits_{i = 1}^2 (-1)^ix_ix_{-i}.\]
The (Gram) matrix of $\Psi$ with respect to basis $\Xi$ is
\[M_2 = \begin{pmatrix}
0 & 0 & 0 & 0 & 1\\
0 & 0 & 0 & -1 & 0\\
0 & 0 & 1 & 0 & 0\\
0 & -1 & 0 & 0 & 0\\
1 & 0 & 0 & 0 & 0
\end{pmatrix}.\]
The Lie algebra $\mathfrak{so}_5(\mathbb{C})$ we consider here is the orthogonal Lie algebra associated with $(V, \Psi)$.

The f-KT lattice on $\mathfrak{so}_5(\mathbb{C})$ is defined as
\[\frac{dL}{dt} = [B, L],\]
where
\[L = \begin{pmatrix}
a_2 & 1 & 0 & 0 & 0\\
b_2 & 2a_1 - a_2 & 1 & 0 & 0\\
2c_1 & 2b_1 & 0 & 1 & 0\\
4d_1 & 0 & 2b_1 & a_2 - 2a_1 & 1\\
0 & 4d_1 & -2c_1 & b_2 & -a_2
\end{pmatrix} \quad \text{and} \quad B = \begin{pmatrix}
a_2 & 1 & 0 & 0 & 0\\
0 & 2a_1 - a_2 & 1 & 0 & 0\\
0 & 0 & 0 & 1 & 0\\
0 & 0 & 0 & a_2 - 2a_1 & 1\\
0 & 0 & 0 & 0 & -a_2
\end{pmatrix}.\]
More explicitly, we have
\begin{align*}
& \frac{d}{dt}a_2 = b_2 \qquad \frac{d}{dt}a_1 = b_1\\
& \frac{d}{dt}b_2 = (2a_1 - 2a_2)b_2 + 2c_1 \qquad  \frac{d}{dt}b_1 = (a_2 - 2a_1)b_1 - c_1\\
& \frac{d}{dt}c_1 = -a_2c_1 + 2d_1 \qquad \frac{d}{dt}d_1 = -2a_1d_1.
\end{align*}

It can be checked directly that the Laurent series solutions have the following form
\begin{align*}
& a_i(t) = \sum \limits_{k = 0}^{\infty}a_{ik}t^{k-1}, \qquad b_i(t) = \sum \limits_{k = 0}^{\infty}b_{ik}t^{k-2},\\
& c_1(t) = \sum \limits_{k = 0}^{\infty}c_{1k}t^{k-3}, \qquad d_1(t) = \sum \limits_{k = 0}^{\infty}d_{1k}t^{k-4}.
\end{align*}
Substituting them into the above differential equations, we get
\begin{align*}
& \sum \limits_{k = 0}^{\infty}(k-1)a_{2k}t^{k-2} = \sum \limits_{k = 0}^{\infty}b_{2k}t^{k-2}, \qquad  \sum \limits_{k = 0}^{\infty}(k-1)a_{1k}t^{k-2} = \sum \limits_{k = 0}^{\infty}b_{1k}t^{k-2},\\
& \sum \limits_{k = 0}^{\infty}(k - 2)b_{2k}t^{k-3} = 2[\sum \limits_{i = 0}^{\infty}(a_{1i} - a_{2i})t^{i-1}][\sum \limits_{j = 0}^{\infty}b_{2j}t^{j-2}] + 2\sum \limits_{k = 0}^{\infty}c_{1k}t^{k-3},\\
& \sum \limits_{k = 0}^{\infty}(k - 2)b_{1k}t^{k-3} = [\sum \limits_{i = 0}^{\infty}(a_{2i} - 2a_{1i})t^{i-1}][\sum \limits_{j = 0}^{\infty}b_{1j}t^{j-2}] - \sum \limits_{k = 0}^{\infty}c_{1k}t^{k-3},\\
& \sum \limits_{k = 0}^{\infty}(k - 3)c_{1k}t^{k - 4} = -(\sum \limits_{i = 0}^{\infty}a_{2i}t^{i-1})(\sum \limits_{j = 0}^{\infty}c_{1j}t^{j-3}) + 2 \sum \limits_{k = 0}^{\infty}d_{1k}t^{k-4},\\
& \sum \limits_{k = 0}^{\infty}(k - 4)d_{1k}t^{k-5} = -2(\sum \limits_{i = 0}^{\infty}a_{1i}t^{i-1})(\sum \limits_{j = 0}^{\infty}d_{1j}t^{j-4}).
\end{align*}

Comparing the power of $t$ for $k = 0$, we obtain the following indicial equations for the leading coefficients
\begin{equation}\label{eq:IndicialB2}
\left\{
\begin{array}{l}
 -a_{20} = b_{20}, \qquad -a_{10} = b_{10},\\
 -2b_{20} = 2(a_{10} - a_{20})b_{20} + 2c_{10},\\
 -2b_{10} = (a_{20} - 2a_{10})b_{10} - c_{10},\\
 -3c_{10} = -a_{20}c_{10} + 2d_{10},\\
 -4d_{10} = -2a_{10}d_{10}.
\end{array}
\right.
\end{equation}
Note that $c_{10}, d_{10}$ are determined once we know $a_{20}$ and $a_{10}$. Eliminating $c_{10}$ from the third and fourth equations in \eqref{eq:IndicialB2}, we get a degree $2$ equation for $a_{20}$ and $a_{10}$. Substituting either of the expressions of $c_{10}$ from the third or fourth equation to the fifth one we get an expression for $d_{10}$ and substituting this expression to the last equation we get a degree $4$ equation for $a_{10}$ and $a_{20}$. So the maximal number of solutions is $8 = \vert \mathcal{W} \vert$.

Let $\mathfrak{h}$ be the set of diagonal elements of $\mathfrak{so}_5(\mathbb{C})$. This is a commutative subalgebra of $\mathfrak{so}_5(\mathbb{C})$. Let 
\[H_i = E_{-i, -i} - E_{i, i}, \qquad (1 \le i \le 2).\]
Let $(\varepsilon_i)_{1 \le i \le 2}$ be the basis of $\mathfrak{h}^*$ dual to $(H_i)$ such that $\varepsilon_i(E_{-j, -j} - E_{j, j}) = \delta_{ij}$.
Then a basis of $\mathfrak{h}$ can be taken as
\[H_{\varepsilon_2 - \varepsilon_1} = H_2 - H_1, \qquad H_{\varepsilon_1} = 2H_{1}.\]
The positive roots of $\mathfrak{so}_5(\mathbb{C})$ are 
\[\alpha_2 = \varepsilon_2 - \varepsilon_1, \quad \alpha_1 = \varepsilon_1, \quad \alpha_3 = \alpha_1 + \alpha_2 = \varepsilon_2, \quad \alpha_4 = 2\alpha_1 + \alpha_2 = \varepsilon_1 + \varepsilon_2,\]
and the coroots are
\[\check{\alpha}_2 = \alpha_2, \quad \check{\alpha}_1 = 2\alpha_1, \quad \check{\alpha}_3 = 2(\alpha_1 + \alpha_2) = \check{\alpha}_1 + 2\check{\alpha}_2, \quad \check{\alpha}_4 = 2\alpha_1 + \alpha_2 = \check{\alpha}_1 + \check{\alpha}_2. \]
The Weyl group elements are given by
\[\mathcal{W}_{B_2} = \left\{\begin{array}{c}
e, s_2^B, s_1^B, w_{21} = s_2^Bs_1^B, w_{12} = s_1^Bs_2^B, w_{212} = s_2^Bs_1^Bs_2^B, \\
w_{121} = s_1^Bs_2^Bs_1^B, s_2^Bs_1^Bs_2^Bs_1^B = w_0
\end{array}\right\},\]
where $s_2^B = s_{-2}s_1, s_1^B = s_{-1}s_0s_{-1}$ with $s_i$ the simple reflection $(i, i+1)$.
We have

\begin{table}[h]
\begin{center}
\begin{minipage}{0.56\linewidth}
\caption{Weyl group action on coroots}\label{tabcroots}%
\begin{tabular}{|c|c|c|c|c|}
\hline
& $\check{\alpha}_2$ & $\check{\alpha}_1$  & $2\check{\alpha}_2 + \check{\alpha}_1$ & $\check{\alpha}_2 + \check{\alpha}_1$\\
\hline
$e$  &  $\check{\alpha}_2$ & $\check{\alpha}_1$  & $2\check{\alpha}_2 + \check{\alpha}_1$ & $\check{\alpha}_2 + \check{\alpha}_1$\\
\hline
$s_1^B$   & $\check{\alpha}_2 + \check{\alpha}_1$   & $-\check{\alpha}_1$ & $2\check{\alpha}_2 + \check{\alpha}_1$ & $\check{\alpha}_2$\\
\hline
$s_2^B$    & $-\check{\alpha}_2$  & $2\check{\alpha}_2 + \check{\alpha}_1$  & $\check{\alpha}_1$  & $\check{\alpha}_2 + \check{\alpha}_1$\\
\hline
$w_{21}$    & $\check{\alpha}_2 + \check{\alpha}_1$   & $-(2\check{\alpha}_2 + \check{\alpha}_1)$  & $\check{\alpha}_1$  & $-\check{\alpha}_2$\\
\hline
$w_{12}$    & $-(\check{\alpha}_2 + \check{\alpha}_1)$   & $2\check{\alpha}_2 + \check{\alpha}_1$  & $-\check{\alpha}_1$  & $\check{\alpha}_2$\\
\hline
$w_{121}$    & $\check{\alpha}_2$   & $-(2\check{\alpha}_2 + \check{\alpha}_1)$  & $-\check{\alpha}_1$  & $-(\check{\alpha}_2 + \check{\alpha}_1)$\\
\hline
$w_{212}$    & $-(\check{\alpha}_2 + \check{\alpha}_1)$   & $\check{\alpha}_1$  & $-(2\check{\alpha}_2 + \check{\alpha}_1)$  & $-\check{\alpha}_2$\\
\hline
$w_0$    & $-\check{\alpha}_2$   & $-\check{\alpha}_1$  & $-(2\check{\alpha}_2 + \check{\alpha}_1)$  & $-(\check{\alpha}_2 + \check{\alpha}_1)$\\
\hline
\end{tabular}
\end{minipage}
\end{center}
\end{table}

Thus we have
\begin{align*}
& \sum \limits_{\check{\alpha} \in \Phi_{e}}\check{\alpha} = 0; \quad 
 \sum \limits_{\check{\alpha} \in \Phi_{s^B_{1}}}\check{\alpha} = \check{\alpha}_1; \quad
 \sum \limits_{\check{\alpha} \in \Phi_{s^B_{2}}}\check{\alpha} = \check{\alpha}_2; \\
 & \sum \limits_{\check{\alpha} \in \Phi_{w_{21}}}\check{\alpha} = \check{\alpha}_2 + 2\check{\alpha}_1; \quad
 \sum \limits_{\check{\alpha} \in \Phi_{w_{12}}}\check{\alpha} = 3\check{\alpha}_2 + \check{\alpha}_1; \quad
\sum \limits_{\check{\alpha} \in \Phi_{w_{121}}}\check{\alpha} = 3\check{\alpha}_2 + 3\check{\alpha}_1; \\
& \sum \limits_{\check{\alpha} \in \Phi_{w_{212}}}\check{\alpha} = 4\check{\alpha}_2 + 2\check{\alpha}_1; \quad
 \sum \limits_{\check{\alpha} \in \Phi_{w_{0}}}\check{\alpha} = 4\check{\alpha}_2 + 3\check{\alpha}_1.
\end{align*}
Therefore according to Theorem \ref{thm:Onetoone} we have the following $8$ solutions for the indicial equations \eqref{eq:IndicialB2} which can also be checked directly

\begin{table}[h]
\begin{center}
\begin{minipage}{0.44\linewidth}
\caption{Solutions of the indicial equations}\label{tabindicial}%
\begin{tabular}{|c|c|c|c|c|c|c|}
\hline
Level.case & $a_{20}$ & $a_{10}$  & $b_{20}$ & $b_{10}$ & $c_{10}$ & $d_{10}$\\
\hline
0.1  &  $0$ & $0$  & $0$ & $0$ & $0$ & $0$\\
\hline
1.1   & $0$   & $1$ & $0$ & $-1$ & $0$ & $0$\\
\hline
1.2    & $1$  & $0$  & $-1$  & $0$ & $0$ & $0$\\
\hline
2.1    & $1$   & $2$  & $-1$  & $-2$ & $2$ & $-2$\\
\hline
2.2    & $3$   & $1$  & $-3$  & $-1$ & $-3$ & $0$\\
\hline
3.1    & $3$   & $3$  & $-3$  & $-3$ & $3$ & $0$\\
\hline
3.2    & $4$   & $2$  & $-4$  & $-2$ & $-4$ & $-2$\\
\hline
4.1    & $4$   & $3$  & $-4$  & $-3$ & $0$ & $0$\\
\hline
\end{tabular}
\end{minipage}
\end{center}
\end{table}


For $k \ge 1$, the coefficients of the Laurent series solutions can be found by the following iterative procedure:
\begin{align}\label{eq:recursiveso5}
& (k-1)a_{2k} - b_{2k} = 0, \qquad (k-1)a_{1k} - b_{1k} = 0, \nonumber\\
& (k - 2 - 2a_{10} + 2a_{20}) b_{2k} + 2b_{20}a_{2k} - 2b_{20}a_{1k} - 2c_{1k} = 2\sum \limits_{i = 1}^{k-1}b_{2i}(a_{1, k-i} - a_{2, k-i}),\nonumber\\
& (k - 2 - a_{20} + 2a_{10}) b_{1k} - b_{10}a_{2k} + 2b_{10}a_{1k} + c_{1k} = \sum \limits_{i = 1}^{k -1}b_{1i}(a_{2, k-i} - 2a_{1, k-i}),\\
& (k - 3 + a_{20})c_{1k} + c_{10}a_{2k} - 2d_{1k} = -\sum \limits_{i = 1}^{k - 1}a_{2i}c_{1, k-i},\nonumber\\
& (k - 4 + 2a_{10})d_{1k} + 2d_{10}a_{1k} = -2 \sum \limits_{i = 1}^{k -1}a_{1i}d_{1, k-i}. \nonumber
\end{align}
The coefficient matrix $kI - \mathcal{K}$ in the order $(a_{2k}, a_{1k}, b_{2k}, b_{1k}, c_{1k}, d_{1k})$ has the following form
\begin{align*}
\begin{pmatrix}
k-1 & 0 & -1 & 0 & 0 & 0\\
0 & k-1 & 0 & -1 & 0 & 0\\
2b_{20} & -2b_{20} & k-2-2a_{10} + 2a_{20} & 0 & -2 & 0\\
-b_{10} & 2b_{10} & 0 & k-2-a_{20} + 2a_{10} & 1 & 0\\
c_{10} & 0 & 0 & 0 & k-3+a_{20} & -2\\
0 & 2d_{10} & 0 & 0 & 0 & k-4+2a_{10}
\end{pmatrix}.
\end{align*}

\begin{lemma}
The determinant of $kI - \mathcal{K}$ can be calculated directly as (by taking the indicial equations \eqref{eq:IndicialB2} into account)
\[\det(kI - \mathcal{K}) = (k-2)(k-4)(k-1+2a_{20}-2a_{10})(k-1-a_{20}+2a_{10})(k-2+a_{20})(k-3+2a_{10}),\]
so again the $k$'s where $\mathcal{K}$ degenerates are all integers. Note that $2, 4$ are degrees of the Chevalley invariants, and the other $4$ are eigenvalues of X-type.
Let $E_{\alpha_2}:=1-2a_{20}+2a_{10}$, $E_{\alpha_1}:= 1+a_{20} - 2a_{10}$, $E_{\alpha_3}:= 2-a_{20}$, $E_{\alpha_4}:=3-2a_{10}$, then we can make the following table (see Table \ref{tabspectra}).

\begin{table}[h]
\begin{minipage}{0.7\linewidth}
\caption{Spectra of the Kowalevski matrix}\label{tabspectra}%
\begin{tabular}{|c|c|c|c|c|c|c|c|}
\hline
Level.case & $E_{\alpha_2}$ & $E_{\alpha_1}$  & $E_{\alpha_3}$ & $E_{\alpha_4}$ & $\mathcal{W}$ & $l(w)$ & Number of free parameters\\
\hline
0.1  &  $1$ & $1$  & $2$ & $3$ & $e$ & $0$ & 6\\
\hline
1.1   & $3$   & $-1$ & $2$ & $1$ & $s_1^B$ & $1$ & 5\\
\hline
1.2    & $-1$  & $2$  & $1$  & $3$ & $s_2^B$ & $1$ & 5\\
\hline
2.1    & $3$   & $-2$  & $1$  & $-1$ & $w_{21}$ & $2$ & 4\\
\hline
2.2    & $-3$   & $2$  & $-1$  & $1$ & $w_{12}$ & $2$ & 4\\
\hline
3.1    & $1$   & $-2$  & $-1$  & $-3$ & $w_{121}$ & $3$ & 3\\
\hline
3.2    & $-3$   & $1$  & $-2$  & $-1$ & $w_{212}$ & $3$ & 3\\
\hline
4.1    & $-1$   & $-1$  & $-2$  & $-3$ & $w_0$ & $4$ & 2\\
\hline
\end{tabular}
\end{minipage}
\end{table}

\end{lemma}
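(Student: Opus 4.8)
The plan is to verify the Lemma by a direct but carefully organized computation of the $6\times 6$ determinant, and then to build Table \ref{tabspectra} row by row, using the general results of the previous sections as independent cross-checks. First I would compute $\det(kI-\mathcal{K})$ from the explicit matrix above, exploiting its sparsity: the last column has only the two nonzero entries $-2$ and $k-4+2a_{10}$, and the last row only $2d_{10}$ and $k-4+2a_{10}$, so I would expand along these, then continue peeling off the $c_{10}$- and $b$-columns while keeping $b_{20},b_{10},c_{10},d_{10}$ symbolic. This yields a degree-$6$ polynomial in $k$ whose coefficients are polynomials in $a_{20},a_{10},b_{20},b_{10},c_{10},d_{10}$. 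I would then substitute the indicial relations \eqref{eq:IndicialB2}, namely $b_{20}=-a_{20}$, $b_{10}=-a_{10}$, the two equivalent expressions for $c_{10}$, and $2d_{10}=(a_{20}-3)c_{10}$ together with $d_{10}(2a_{10}-4)=0$, to collapse the cross terms. The assertion is that after this substitution the polynomial factors as
\[(k-2)(k-4)(k-1+2a_{20}-2a_{10})(k-1-a_{20}+2a_{10})(k-2+a_{20})(k-3+2a_{10}).\]
As a sanity check, for $w=e$ (so $a_{20}=a_{10}=0$ and $b_{20},b_{10},c_{10},d_{10}$ all vanish) the matrix is upper triangular and the determinant is manifestly $(k-1)^2(k-2)^2(k-3)(k-4)$, matching the product; I would also spot-check a noncentral case such as level $2.1$.

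Next I would read off the spectrum. The factors $(k-2)$ and $(k-4)$ are independent of the indicial data, and since the Chevalley invariants of $B_2$ have degrees $2$ and $4$ these are the type ``C'' eigenvalues. The remaining four linear factors I would rewrite as $k-E_{\alpha_i}$ with $E_{\alpha_2}=1-2a_{20}+2a_{10}$, $E_{\alpha_1}=1+a_{20}-2a_{10}$, $E_{\alpha_3}=2-a_{20}$, $E_{\alpha_4}=3-2a_{10}$, exactly as in the statement; these are the type ``X'' eigenvalues. Integrality of all six roots is then immediate, since $a_{20},a_{10}\in\mathbb{N}_0$ by Proposition \ref{conj:Nonnegative}. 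For consistency with the general theory I would confirm $E_{\alpha_i}=L(w\alpha_i)$ (Theorem \ref{thm:Kmatrix}): at $w=e$ the values $1,1,2,3$ are the heights of $\alpha_2,\alpha_1,\alpha_3,\alpha_4$, while at $w=w_0=-\mathrm{id}$ they become $-1,-1,-2,-3$.

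Finally I would fill in Table \ref{tabspectra}. For each of the eight indicial solutions in Table \ref{tabindicial} I substitute $(a_{20},a_{10})$ into the four formulas for $E_{\alpha_i}$. The matching Weyl element $w$ is fixed by Theorem \ref{thm:Onetoone}: the vector $(\check q_1,\check q_2)=(a_{10},a_{20})$ is read off from the coroot sums $\sum_{\check\alpha\in\Phi_w}\check\alpha$ obtained from Table \ref{tabcroots}, and $l(w)$ is its length. The number of free parameters equals the number of positive eigenvalues; since $E_{\alpha}=L(w\alpha)>0$ exactly when $w\alpha\in\Delta_+$, there are $|\Delta_+|-l(w)=l(w_0)-l(w)$ positive type-``X'' eigenvalues, so together with the two positive type-``C'' eigenvalues the count is $\ell+l(w_0)-l(w)=6-l(w)$, in agreement with Corollary \ref{cor:freeparameters}.

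The main obstacle is entirely in the determinant step: the off-diagonal entries $b_{20},b_{10},c_{10},d_{10}$ generate several cross terms, and one must feed in the indicial relations \eqref{eq:IndicialB2} in precisely the right combinations for all of them to cancel and leave the clean product of linear factors. Once that factorization is in hand, every remaining claim is a substitution into explicit formulas, with Theorems \ref{thm:Onetoone} and \ref{thm:Kmatrix} and Corollary \ref{cor:freeparameters} serving as independent confirmations of the tabulated values.
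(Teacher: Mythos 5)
Your proposal is correct and takes essentially the same route as the paper: the lemma is established there by exactly this direct computation of $\det(kI-\mathcal{K})$ with the indicial relations \eqref{eq:IndicialB2} substituted, after which the factors $(k-2)(k-4)$ are identified with the degrees of the Chevalley invariants and the four remaining linear factors with the X-type eigenvalues. Your table-filling procedure — matching each indicial solution to its Weyl element via Theorem \ref{thm:Onetoone}, cross-checking $E_{\alpha}=L(w\alpha)$ against Theorem \ref{thm:Kmatrix}, and counting free parameters as $\ell+l(w_0)-l(w)$ per Corollary \ref{cor:freeparameters} — is precisely how the paper's Tables \ref{tabindicial} and \ref{tabspectra} are assembled, so no gap remains.
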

Here we can see explicitly that the Weyl group action on $E_{\alpha}$ is equivariant to the Weyl group action on the root system of $\mathfrak{so}_5(\mathbb{C})$.

\begin{remark}
From Table \ref{tabindicial} and Equations \eqref{eq:recursiveso5} we can also easily read all the sub-hierarchies of the f-KT hierarchy. 
\begin{enumerate}
\item $a_2(t) = b_2(t) = c_1(t) = d_1(t) \equiv 0$, and only $a_1(t)$ and $b_1(t)$ are nontrivial. In this case, $a_1(t)$ and $b_1(t)$ can be explicitly solved and the flow can only enter into the smaller Bruhat cell associated with $s_1$.
\item $a_1(t) = b_1(t) = c_1(t) = d_1(t) \equiv 0$, and only $a_2(t)$ and $b_2(t)$ are nontrivial. Similarly, $a_2(t)$ and $b_2(t)$ can be explicitly solved and the flow can only enter into the smaller Bruhat cell associated with $s_2$.
\item $c_1(t) = d_1(t) \equiv 0$, and $a_2(t), a_1(t), b_2(t), b_1(t)$ are nontrivial. This is the classical tridiagonal Kostant-Toda lattice, and the flow can enter into the smaller Bruhat cells associated with $s_1, s_2$ and $w_{0}$ which are cases 1.1, 1.2 and 4.1 respectively. This is consistent with the results in \cite{Flaschka-Haine1991}. Note also that in case 4.1, by \eqref{eq:recursiveso5} $a_{20} = 4, a_{10} = 3$ forces $c_1(t) = d_1(t) \equiv 0$, which means that only flows of the classical tridiagonal Kostant-Toda lattice can enter into the smallest dimensional Bruhat cell. This counter-intuitive result is not easily deduced from the $\tau$-function formalism.  
\item $d_1(t) \equiv 0$, and all the other matrix elements are nontrivial. This is the so-called $2$-banded Kostant-Toda lattice (c.f. \cite{Kodama-Xie2021f-KT, Xie2021}), and the flows can enter into all the smaller Bruhat cells except the ones associated with $w_{21}$ and $w_{212}$ which are cases 2.1 and 3.2 respectively. Note again that in case 3.1, $d_1(t)$ is forced to be $0$ which is different from case 2.2 where $d_1(t)$ is set to be $0$ by letting $d_{12} = 0$. That is only the flows of the $2$-banded Kostant-Toda lattice can enter into the Bruhat cell associated with $w_{121}$.
\end{enumerate}
\end{remark}

At last we comment that for the tridiagonal Kostant-Toda lattice, that is in cases 0.1, 1.1, 1.2 and 4.1, the spectra of the Kowalevski matrices are $(1, 1, 2, 2)$, $(-1, 1, 2, 4)$, $(-1, 1, 2, 3)$ and $(-3, -1, 2, 4)$ respectively, which do not have simple expressions as in Theorem \ref{thm:Kmatrix}. Therefore, in some sense it is simpler and more fruitful to study f-KT lattice and all the coadjoint orbits of $B_+$ on $\mathfrak{b}_+^*$ together.


\section{Further problems}\label{sec:problems}

In this last section, we formulate several problems based on results in this paper to be considered in the future.

\subsection{Free parameters as canonical coordinates on the flag varieties}

When $a_i(t)$'s, $b_{\alpha}(t)$'s are viewed as entries of the Lax matrix, the free parameters in the Laurent series solutions \eqref{eq:Laurentansatz} can be viewed as functionals on the space of all Lax matrices. According to Theorem \ref{conj:Flag}, these free parameters also provide a natural set of coordinates for the corresponding flag varieties.  As these free parameters arise from a dynamical system in a rather natural manner, we have reasons to believe these coordinates enjoy some good dynamical properties. For example, the Poisson brackets among them as functionals on the space of Lax matrices should have a particularly nice form which could be useful as a new method in constructing constants of motion.
It had been noticed since 1980s (c.f. \cite{Flaschka1988}) that for the tridiagonal Toda lattice, the free parameters in the most degenerate Laurent series solution are determined by the constants of motion. We expect similar things happening in the f-KT lattice case. 

The advantage of these free parameters as constants of motion compared with the chopping method (c.f. \cite{Ercolani-Flaschka-Singer1993, Gekhtman-Shapiro1999}) is as follows: To carry out the chopping method, we need to choose a particular realization of the f-KT lattice in a representation of the Lie algebra in the Lax form \eqref{eq:f-KThky}. For example, as we have $\mathfrak{sl}_4(\mathbb{C}) \cong \mathfrak{so}_6(\mathbb{C})$, the chopping method provides us two sets of mutually non-Poisson commutative constants of motion (c.f. \cite{Ercolani-Flaschka-Singer1993, Shipman2000}). The system \eqref{eq:f-KTincoordinate} however only involves the information from the Cartan matrix which is intrinsic to the corresponding Lie algebra. That is, both sets of constants of motion for $\mathfrak{sl}_4(\mathbb{C}) \cong \mathfrak{so}_6(\mathbb{C})$ f-KT lattice should be expressed in terms of the free parameters in the Laurent series solutions. 

When we try to solve the system \eqref{eq:higerordertermintro} at the resonant steps $k = E_{\alpha}^w > 0$, we have the freedom to choose which $a_{ik}$'s or $b_{\alpha k}$'s we would like to choose as free parameters, and this provides us differential coordinate systems for certain charts of the flag varieties. It is then interesting to study the transitions from one charts to another. According to Lemma \ref{eq:auxiliarylemma} and the generalized Chamber Ansatz (c.f. Theorem 7.1, \cite{Marsh-Rietsch2004}), we further see that these free parameters and therefore all constants of motion can be expressed as rational functions of the generalized minors of $\exp (\Theta_{L_0}(\mathbf{t}))$. It would be nice to relate the above transition of charts to the cluster mutation on the flag varieties (c.f. \cite{Berenstein-Fomin-Zelevinsky2005}).

\subsection{Real solutions of the f-KT lattice}
Now we come back to consider f-KT lattice defined  on a split simple Lie algebra $(\mathfrak{g}, \mathfrak{h})$ over $\mathbb{R}$.
Note that the proof of Proposition \ref{prop:RHfactorization} goes through for all $g \in G_0$, and $\tau_i(\t)$'s in \eqref{eq:tau0} are well defined and are real analytic if $L_{\0}$ is real. Since solutions of the indicial equations \eqref{eq:Indicialf-KT} and the spectra of the Kowalevski matrices \eqref{eq:Xexponent2} are all integers, the Laurent series solutions \eqref{eq:Laurentansatz} are all real valued if all the free parameters introduced into the Laurent series are real. Thus real solutions of f-KT hierarchy may still have all kinds of singular behavior as in the complex cases. A more meaningful question is the following.
\begin{problem}
For a given $L_{\0}$, which Bruhat cells the f-KT flows may hit? And what are the large time asymptotical behaviors for the corresponding real solutions of f-KT hierarchy? 
\end{problem}

Since we now understand all possible types of singular solutions of f-KT hierarchy, the first question is an analogue of the classical connection problem (c.f. \cite{Whittaker-Watson2020} for solution of the connection problem of hypergeometric series). At one of the extremes, the real regular solutions of Toda flows and their asymptotic behaviors especially in type $A$ have been intensively studied (c.f. \cite{Kostant1979a, Gekhtman-Shapiro1997, Kodama-Williams2015}). It is known to Kostant that in the tridiagonal case the Kostant-Toda flows are complete when all the $b_{\alpha_i}, 1 \le i \le \ell$ are positive which is associated with a special element in $\tilde{M}$ in Section \ref{sec:structure}. In this case the Kostant-Toda lattice is conjugate to the classical tridiagonal symmetric Toda lattice whose phase space is compact. The cases when $b_{\alpha_i}$'s have indefinite signs and the solutions blow up in finite time were studied by Casian, Kodama and Ye (c.f. \cite{Kodama-Ye1996b, Casian-Kodama2006}). Gekhtman and Shapiro later pointed out that the regularity condition Kostant give is only sufficient but not necessary (c.f. \cite{Gekhtman-Shapiro1997}). 

For tridiagonal Kostant-Toda lattice defined on $\mathfrak{sl}_{\ell + 1}(\mathbb{R})$, let $\lambda_1, \dots, \lambda_s$ be distinct eigenvalues of $L_0 := L(0)$. Assume that to each eigenvalue $\lambda_i$ there corresponds exactly one Jordan box and let $v_1, \dots, v_{\ell + 1}$ the corresponding Jordan basis. Denote by $V(L_0)$ the matrix with columns $v_1, \dots, v_{\ell + 1}$, and assume $V(L(0))$ has the following LU-factorization
\[V(L_0) = n_{L_0}b_{L_0}, \qquad n_{L_0} \in N_-, b_{L_0} \in B_+.\]
Then we have
\begin{theorem}[\cite{Gekhtman-Shapiro1997}]\label{thm:regular}
The solution of the Kostant-Toda lattice is nonsingular if and only if
\begin{enumerate}
\item all $\lambda_i\, (i = 0, \dots, s)$ are real such that $\lambda_1 > \dots > \lambda_s$,
\item the matrix $n_{L_0}$ is totally positive, that is all its nontrivial minors are positive.
\end{enumerate}
\end{theorem}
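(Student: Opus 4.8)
The plan is to translate nonsingularity into positivity of the $\tau$-functions and then to study these as real exponential sums. By Corollary~\ref{cor:diagonal} and the discussion around \eqref{eq:tau0}, the solution is regular on a time interval exactly when the factorization $\exp(tL_0)=u(t)b(t)$ persists, which for $\mathfrak{sl}_{\ell+1}$ means the leading principal minors $\tau_i(t)$, $1\le i\le\ell$, of $\exp(tL_0)$ do not vanish. Since $\tau_i(0)=1$ and each $\tau_i$ is real-analytic, global nonsingularity is equivalent to $\tau_i(t)>0$ for all $t\in\mathbb{R}$ and all $i$. I would then spectrally decompose: writing $L_0=V(L_0)\,J\,V(L_0)^{-1}$ with $J$ the Jordan form (one block per eigenvalue, which is automatic for the Hessenberg $L$), we have $\exp(tL_0)=V e^{tJ}V^{-1}$, and Cauchy--Binet together with Jacobi's complementary-minor identity expresses each $\tau_i$ as a quasi-polynomial
\[
\tau_i(t)=\sum_{K}c^{(i)}_{K}\,t^{\,d_{K}}\,e^{t\,\mu_{K}},
\]
where the exponents $\mu_{K}$ are sums of $i$ eigenvalues and each coefficient $c^{(i)}_{K}$ is a product of an $i\times i$ minor of $V$ from the top rows and the complementary minor from the bottom rows. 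I would treat the diagonalizable case first (no polynomial factors) and recover the general case by a confluent limit.

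For sufficiency I would invoke the Descartes--P\'olya sign rule for exponential sums: a real exponential sum with distinct real exponents and coefficients all of one sign has no real zeros. Condition~(1) makes all $\mu_K$ real and distinct, and the strict ordering $\lambda_1>\cdots>\lambda_s$ makes $\lambda_1+\cdots+\lambda_i$ the dominant exponent of $\tau_i$ as $t\to+\infty$. The crucial step is to deduce from condition~(2) that every coefficient $c^{(i)}_K$ is positive: expanding the minors of $V$ through the factorization $V=n_{L_0}b_{L_0}$, the upper-triangular factor contributes only to the principal part, and the remaining contributions are controlled by the minors of the totally positive matrix $n_{L_0}$ and by Vandermonde-type determinants in the $\lambda_j$ that are positive precisely because of the ordering. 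With all coefficients positive and $\tau_i(0)=1>0$, the sign rule gives $\tau_i(t)>0$ for every $t$ and every $i$.

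Necessity I would prove contrapositively. If some eigenvalue is non-real, it comes in a conjugate pair and contributes a genuinely oscillatory term $e^{at}\cos(bt+\varphi)$ with $b\ne 0$ to the lowest $\tau_i$ that detects it; this forces a sign change and hence a real zero, so the flow blows up. If all eigenvalues are real but $n_{L_0}$ is not totally positive, then some $c^{(i)}_K$ is nonpositive, so the coefficient sequence of $\tau_i$, ordered by exponent, has a sign change; I would then use the sharp form of P\'olya's rule, matched against the $t\to\pm\infty$ asymptotics, to show the sign change is realized by an actual real zero. A wrong ordering of the $\lambda_j$ likewise flips a dominant coefficient and produces a zero, which forces $\lambda_1>\cdots>\lambda_s$.

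The main obstacle I anticipate is the bookkeeping in the sufficiency step: relating total positivity of $n_{L_0}$ to positivity of \emph{all} the complementary minor-product coefficients $c^{(i)}_K$, using $V=n_{L_0}b_{L_0}$ and the interplay of Cauchy--Binet, Jacobi's identity, and the positive Vandermonde factors. The second delicate point is the converse implication---that a single sign violation genuinely produces a real zero of some $\tau_i$ rather than merely permitting one---where one needs the full strength of the exponential Descartes rule rather than only its bound on the number of zeros.
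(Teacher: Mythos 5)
First, a point of order: the paper does not prove this statement at all --- it is quoted from \cite{Gekhtman-Shapiro1997} as background for Section~\ref{sec:problems} --- so your attempt can only be judged on its own merits, not against an internal proof.

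Your reduction and expansion steps are sound and are indeed the standard route: by Proposition~\ref{prop:RHfactorization} and Corollary~\ref{cor:diagonal}, nonsingularity on $\mathbb{R}$ is equivalent to $\tau_i(t)>0$ for all real $t$ and all $i$, and Cauchy--Binet plus Jacobi's identity writes $\tau_i(t)=\sum_K c^{(i)}_K e^{t\mu_K}$ with $c^{(i)}_K$ a signed product of complementary minors of $V(L_0)$. The fatal problem is the step you defer as ``bookkeeping'': total positivity of $n_{L_0}$ does \emph{not} imply positivity of all the $c^{(i)}_K$, and in fact the ``if'' direction of the statement, read literally with this paper's conventions, fails. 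Take $\ell=1$ and $L_0=\left(\begin{smallmatrix}-2&1\\-3&2\end{smallmatrix}\right)$: the eigenvalues are $\lambda_1=1>\lambda_2=-1$ (real and distinct), the eigenvectors are $v_1=(1,3)^T$, $v_2=(1,1)^T$, so $V(L_0)=\left(\begin{smallmatrix}1&1\\3&1\end{smallmatrix}\right)$ and $n_{L_0}=\left(\begin{smallmatrix}1&0\\3&1\end{smallmatrix}\right)$, which is totally positive (this is unaffected by rescaling the $v_i$, and even reversing their order gives $\left(\begin{smallmatrix}1&0\\1&1\end{smallmatrix}\right)$, again totally positive); yet $L_0^2=I$ gives $\tau_1(t)=\cosh t-2\sinh t=\tfrac32e^{-t}-\tfrac12e^{t}$, which vanishes at $t=\tfrac12\ln 3$, so the flow blows up. The structural reason is visible in your own formula: with the canonical normalization of eigenvectors, the two coefficients of $\tau_1$ are $c_{\{2\}}=n_{21}/(\lambda_1-\lambda_2)$ and $c_{\{1\}}=1-n_{21}/(\lambda_1-\lambda_2)$, so total positivity of $n_{L_0}$ controls the first one only, while positivity of the second is the extra inequality $n_{21}<\lambda_1-\lambda_2$, which involves the upper factor $b_{L_0}$ (the Vandermonde data) that your plan discards. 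The coefficients depend essentially on \emph{both} LU factors, so no amount of Cauchy--Binet/Jacobi manipulation will derive their positivity from minors of $n_{L_0}$ alone. Gekhtman--Shapiro's actual criterion imposes sign conditions on all the complementary-minor products (in modern language, total nonnegativity of the eigenvector \emph{flag}), and the transcription in this paper has evidently lost part of those conventions; since you worked from the statement as given, your sufficiency argument cannot be completed.

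The necessity half has an independent gap, which you only half-acknowledge. A negative interior coefficient does not force a real zero of an exponential sum: for instance $2e^{2t}-2e^{t}+e^{-t}>0$ for every real $t$, although its coefficient sequence has two sign changes. Descartes--P\'olya type rules bound the number of real zeros from above; they never produce one. A sign violation forces a zero of $\tau_i$ only when it occurs in the coefficient that dominates as $t\to+\infty$ or $t\to-\infty$; for an interior subset $K$ the single-time flow simply cannot detect the sign of $c^{(i)}_K$. This is precisely why the clean equivalence in the modern literature \cite{Kodama-Williams2015} is formulated for the full hierarchy, where the multi-time exponentials separate every coefficient; for the single Toda flow the converse needs a genuinely finer argument than a sign rule.
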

Theorem \ref{thm:regular} is also valid for the f-KT lattice. Note that since all the conditions in Theorem \ref{thm:regular} are open conditions, the maximal number of parameters in the generic real regular solutions for f-KT lattice on $\mathfrak{sl}_{\ell + 1}$ is dim $\mathfrak{n}_- + \ell = \ell(\ell + 1) \slash 2 + \ell$. These solutions do not hit any lower dimensional Bruhat cells, which in particular shows that the seemingly obvious conclusion of Theorem \ref{thm:generaltau} is not that trivial.

We also comment that the asymptotic behavior of real regular solutions for tridiagonal Kostant-Toda lattice had been studied by Kostant in \cite{Kostant1979a} which generalized Moser's result \cite{Moser1975a} in type $A$. For f-KT hierarchy in type $A$, the asymptotic behaviors of real regular solutions are described by the so-called Bruhat interval polytopes via the moment map techniques (c.f. \cite{Kodama-Williams2015}). Not much is known for f-KT hierarchy on other type Lie algebras at the moment (c.f. \cite{Xie2021} for some lower rank examples in type $B$).



%


\bibliographystyle{plain}

\begin{thebibliography}{10}

\bibitem{Adler-vanMoerbeke-Vanhaecke2013}
M.~Adler, P.~Van Moerbeke, and P.~Vanhaecke.
\newblock {\em Algebraic integrability, Painlev{\'e} geometry and Lie
  algebras}, volume~47.
\newblock Springer Science \& Business Media, 2013.

\bibitem{Adler-vanMoerbeke1982}
M.~Adler and P.~van Moerbeke.
\newblock Kowalewski's asymptotic method, {Kac-Moody} {Lie} algebras and
  regularization.
\newblock {\em Comm. Math. Phys.}, 83(1):83--106, 1982.

\bibitem{Adler-vanMoerbeke1991}
M.~Adler and P.~van Moerbeke.
\newblock The {T}oda lattice, {D}ynkin diagrams, singularities and {A}belian
  varieties.
\newblock {\em Invent. Math.}, 103(1):223--278, 1991.

\bibitem{Berenstein-Fomin-Zelevinsky2005}
A.~Berenstein, S.~Fomin, and A.~Zelevinsky.
\newblock Cluster algebras. III. {U}pper bounds and double {B}ruhat cells.
\newblock {\em Duke Math. J.} 126(1): 1-52, 2005.

\bibitem{Bloch-Gekhtman1998}
A.~M. Bloch and M.~I. Gekhtman.
\newblock Hamiltonian and gradient structures in the {T}oda flows.
\newblock {\em J. Geom. Phys.}, 27(3-4):230--248, 1998.

\bibitem{Bogoyavlensky1976}
O.~I. Bogoyavlensky.
\newblock On perturbations of the periodic {Toda} lattice.
\newblock {\em Comm. Math. Phys.}, 51(3):201--209, 1976.

\bibitem{Bourbaki2008}
Nicolas Bourbaki.
\newblock {\em Lie groups and Lie algebras: chapters 7-9}, volume~3.
\newblock Springer Science \& Business Media, 2008.

\bibitem{Brown-Goodearl-Yakimov2006}
K. A.~Brown, K. R.~Goodearl, and M.~Yakimov.
\newblock Poisson structures on affine spaces and flag varieties. I. {M}atrix affine {P}oisson space,
\newblock{\em Adv. Math.}, 206(2):567-629, 2006.

\bibitem{Casian-Kodama2002}
L.~Casian and Y.~Kodama.
\newblock Blow-ups of the {T}oda lattices and their intersections with the
  {B}ruhat cells.
\newblock In {\em The legacy of the inverse scattering transform in applied
  mathematics ({S}outh {H}adley, {MA}, 2001)}, volume 301 of {\em Contemp.
  Math.}, pages 283--310. Amer. Math. Soc., Providence, RI, 2002.

\bibitem{Casian-Kodama2006}
L.~Casian and Y.~Kodama.
\newblock Toda lattice, cohomology of compact {L}ie groups and finite
  {C}hevalley groups.
\newblock {\em Invent. Math.}, 165(1):163, 2006.

\bibitem{CHHL2018}
X.~Chang, Y.~He, X.~Hu, and S.~Li.
\newblock Partial-skew-orthogonal polynomials and related integrable lattices with Pfaffian tau-functions.
\newblock{\em Comm. Math. Phys.}, 363(3):1069-1119, 2018.

\bibitem{Chernyakov-Sharygin-Sorin2012}
Y.~B. Chernyakov, G.~I. Sharygin, and A.~S. Sorin.
\newblock Bruhat order in full symmetric {T}oda system.
\newblock {\em Comm. Math. Phys.}, 330(1):367--399, 2014.

\bibitem{Chernyakov-Sharygin-Talalaev2023}
Y.~B. Chernyakov, G.~I. Sharygin, and D.V. Talalaev.
\newblock Full symmetric {T}oda system: {QR}-solution for complete
  {DLNT}-family.
\newblock {\em arXiv preprint arXiv:2212.09148}, 2004.

\bibitem{Damianou-Sabourin-Vanhaecke2015}
P.~A. Damianou, H.~Sabourin, and P.~Vanhaecke.
\newblock Intermediate {T}oda systems.
\newblock {\em Regul. Chaotic Dyn.}, 20(3):277--292, 2015.

\bibitem{Deift1999}
P.~Deift,
\newblock Orthogonal polynomials and random matrices: a Riemann-Hilbert approach.
\newblock {\em Courant Lect. Notes Math., 3}, New York University, Courant Institute of Mathematical Sciences, New York American Mathematical Society, Providence, RI, viii+273 pp. 1999.

\bibitem{DLNT1986}
P.~Deift, L.~C. Li, T.~Nanda, and C.~Tomei.
\newblock The {T}oda flow on a generic orbit is integrable.
\newblock {\em Comm. Pure Appl. Math.}, 39(2):183--232, 1986.

\bibitem{Ercolani-Flaschka-Singer1993}
N.~M. Ercolani, H.~Flaschka, and S.~Singer.
\newblock The geometry of the full {K}ostant-{T}oda lattice.
\newblock In {\em Integrable systems ({L}uminy, 1991)}, volume 115 of {\em
  Progr. Math.}, pages 181--225. Birkh\"auser Boston, Boston, MA, 1993.

\bibitem{Ercolani-Ramalheira-Tsu2022}
N.M. Ercolani and J.~Ramalheira-Tsu.
\newblock A path-counting analysis of phase shifts in box-ball systems.
\newblock {\em SIGMA Symmetry Integrability Geom. Methods Appl.}, 18(063):42
  pp., 2022.

\bibitem{Ercolani-Ramalheira-Tsu2023}
N.M. Ercolani and J.~Ramalheira-Tsu.
\newblock Lusztig factorization dynamics of the full {Kostant-Toda} lattice.
\newblock {\em Math. Phys. Anal. Geom.}, 26(1), 2023.

\bibitem{Flaschka1974b}
H.~Flaschka.
\newblock On the {T}oda lattice. {II}. {I}nverse-scattering solution.
\newblock {\em Progr. Theoret. Phys.}, 51(3):703--716, 1974.

\bibitem{Flaschka1974a}
H.~Flaschka.
\newblock The {T}oda lattice. {I}. {E}xistence of integrals.
\newblock {\em Phys. Rev. B}, 9(3):1924--1925, 1974.

\bibitem{Flaschka1988}
H.~Flaschka.
\newblock The {T}oda lattice in the complex domain.
\newblock In M.~Kashiwara and T.~Kawai, editors, {\em Algebraic Analysis},
  pages 141--154. Academic Press, 1988.

\bibitem{Flaschka-Haine1991}
H.~Flaschka and L.~Haine.
\newblock Vari\'et\'es de drapeaux et r\'eseaux de {T}oda.
\newblock {\em Math. Z.}, 208(4):545--556, 1991.

\bibitem{Gekhtman-Shapiro1997}
M.~I. Gekhtman and M.~Z. Shapiro.
\newblock Completeness of real {T}oda flows and totally positive matrices.
\newblock {\em Math. Z.}, 226(1):51--66, 1997.

\bibitem{Gekhtman-Shapiro1999}
M.~I. Gekhtman and M.~Z. Shapiro.
\newblock Noncommutative and commutative integrability of generic {T}oda flows
  in simple {L}ie algebras.
\newblock {\em Comm. Pure Appl. Math.}, 52(1):53--84, 1999.

\bibitem{Henon1974}
M.~H\'enon.
\newblock Integrals of the {T}oda lattice.
\newblock {\em Phys. Rev. B}, 9(3):1921--1923, 1974.

\bibitem{Knapp2002}
A.~W. Knapp.
\newblock {\em Lie groups beyond an introduction. Second edition.}, volume 140
  of {\em Progress in Mathematics}.
\newblock Birkh\"auser Boston, Inc., Boston, MA, 2002.

\bibitem{Kodama-McLaughlin1996}
Y.~Kodama and K.~T-R McLaughlin.
\newblock Explicit integration of the full symmetric {T}oda hierarchy and the
  sorting property.
\newblock {\em Lett. Math. Phys.}, 37(1):37--47, 1996.

\bibitem{Kodama-Okada2023}
Y.~Kodama and S.~Okada.
\newblock Extended {S}chur's {Q}-functions and the full {K}ostant-{T}oda
  hierarchy on the lie algebra of type {D}.
\newblock {\em Phys. D}, 443, 2023.

\bibitem{Kodama-Shipman2018}
Y.~Kodama and B.~A. Shipman.
\newblock Fifty years of the finite nonperiodic {T}oda lattice: a geometric and
  topological viewpoint.
\newblock {\em J. Phys. A}, 51(35):353001, 2018.

\bibitem{Kodama-Williams2015}
Y.~Kodama and L.~Williams.
\newblock The full {Kostant}-{Toda} hierarchy on the positive flag variety.
\newblock {\em Comm. Math. Phys.}, 335:247--283, 2015.

\bibitem{Kodama-Xie2021f-KT}
Y.~Kodama and Y.~Xie.
\newblock On the full {Kostant}-{Toda} hierarchy and its $\ell$-banded reductions for the
  {Lie} algebras of type ${A}, {B}$ and ${G}$.
\newblock \newblock {\em arXiv preprint arXiv:2303.07331}, 2023.

\bibitem{Kodama-Ye1996}
Y.~Kodama and J.~Ye.
\newblock Iso-spectral deformations of general matrix and their reductions on
  {L}ie algebras.
\newblock {\em Comm. Math. Phys.}, 178(3):765--788, 1996.

\bibitem{Kodama-Ye1996b}
Y.~Kodama and J.~Ye.
\newblock Toda hierarchy with indefinite metric.
\newblock {\em Phys. D}, 91(4):321--339, 1996.

\bibitem{Kostant1978}
B.~Kostant.
\newblock On {Whittaker} vectors and representation theory.
\newblock {\em Invent. Math.}, 48(2):101--184, 1978.

\bibitem{Kostant1979b}
B.~Kostant.
\newblock Quantization and representation theory.
\newblock In {\em Representation theory of Lie groups}, volume~34 of {\em
  London Math. Soc. Lect. Notes Series}, pages 467--497. Cambridge University
  Press, Oxford, 1979.

\bibitem{Kostant1979a}
B.~Kostant.
\newblock The solution to a generalized {Toda} lattice and representation
  theory.
\newblock {\em Adv. Math.}, 34(3):195--338, 1979.

\bibitem{Kowalevski1889}
S.~Kowalevski.
\newblock Sur le probl{\`e}me de la rotation d'un corps solide autour d'un
  point fixe.
\newblock {\em Acta Math.}, 12(1):177--232, 1889.

\bibitem{Lusztig1994}
G.~Lusztig.
\newblock Total positivity in reductive groups.
\newblock In {\em Lie theory and geometry}, volume 123 of {\em Progr. Math.},
  pages 531--568. Birkh\"auser Boston, Boston, MA, 1994.

\bibitem{Manakov1974}
S.~V. Manakov.
\newblock Complete integrability and stochastization of discrete dynamical
  systems.
\newblock {\em Sov. Phys. JETP}, 40(2):269--274, 1974.

\bibitem{Marsh-Rietsch2004}
R.~J. Marsh and K.~C. Rietsch.
\newblock Parametrizations of flag varieties.
\newblock {\em Represent. Theory}, 8:212--242 (electronic), 2004.

\bibitem{Moser1975a}
J.~Moser.
\newblock Finitely many mass points on the line under the influence of an
  exponential potential -- an integrable system.
\newblock In {\em Dynamical systems, theory and applications (Rencontres,
  Battelle Res. Inst., Seattle, Wash., 1974)}, volume~38 of {\em Lecture Notes
  in Phys.}, pages 467--497. Springer, Berlin, 1975.
  
\bibitem{Nekhoroshev1972}
N.~N. Nehoro\v{s}ev.
\newblock Action-angle variables, and their generalizations.
\newblock {\em Trudy Moskov. Mat. Ob\v{s}\v{c}.}, 26: 181-198, 1972. 

  
\bibitem{Okamoto1979}
K.~Okamoto.
\newblock Sur les feuilletages associ\'es aux \'equations du second ordre \`a points critiques fixes de P. Painlev\'e.
\newblock {\em Japan. J. Math. (N.S.)}, 5(1):1-79, 1979.

\bibitem{Olshanetsky-Perelomov1979}
M.~A. Olshanetsky and A.~M. Perelomov.
\newblock Explicit solutions of classical generalized {T}oda models.
\newblock {\em Invent. Math.}, 54(3):261--269, 1979.

\bibitem{Papi1994}
P.~Papi.
\newblock A characterization of a special ordering in a root system.
\newblock {\em Proc. Amer. Math. Soc.}, 120(3):661--665, 1994.

\bibitem{Reshetikhin2016}
N.~Reshetikhin.
\newblock Degenerately integrable systems.
\newblock {\em Zap. Nauchn. Sem. S.-Peterburg. Otdel. Mat. Inst. Steklov. (POMI)}, 433:224-245, 2015.
\newblock {\em J. Math. Sci. (N.Y.)}, 213(5):769-785, 2016.

\bibitem{Semenov-Tian-Shansky2021}
M.~Semenov-Tian-Shansky.
\newblock Quantum {T}oda lattice: a challenge for representation theory.
\newblock In {\em Integrability, quantization, and geometry. I. Integrable
  systems}, pages 403--428. Amer. Math. Soc., Providence, RI, 2021.
  
\bibitem{Shipman1995}
B.~A. Shipman.
\newblock Convex polytopes and duality in the geometry of the full {K}ostant-{T}oda lattice.
\newblock PhD thesis, The University of Arizona, 209 pp. 1995. 

\bibitem{Shipman2000}
B.~A. Shipman.
\newblock The geometry of the full {K}ostant-Toda lattice of $\mathfrak{sl}(4, \mathbb{C})$.
\newblock {\em J. Geom. Phys.}, 33(3-4):295-325, 2000.

\bibitem{Singer1991}
S.~F. Singer.
\newblock {\em The geometry of the full Toda lattice}.
\newblock PhD thesis, New York University, 1991.

\bibitem{Symes1980}
W.~W. Symes.
\newblock Systems of {T}oda type, inverse spectral problems, and representation
  theory.
\newblock {\em Invent. Math.}, 59(1):13--51, 1980.

\bibitem{Toda1967a}
M.~Toda.
\newblock Vibration of a chain with nonlinear interaction.
\newblock {\em J. Phys. Soc. Japan}, 22(2):431--436, 1967.

\bibitem{Toda1967b}
M.~Toda.
\newblock Wave propagation in anharmonic lattices.
\newblock {\em J. Phys. Soc. Japan}, 23(3):501--506, 1967.

\bibitem{Tomei2013}
C.~Tomei.
\newblock The {T}oda lattice, old and new.
\newblock {\em J. Geom. Mech.}, 5(4):511, 2013.

\bibitem{Whittaker-Watson2020}
E.~T. Whittaker and G.~N. Watson.
\newblock {\em A Course of Modern Analysis}.
\newblock Courier Dover Publications, 2020.

\bibitem{Xie2021}
Y.~Xie.
\newblock {\em Algebraic Curves and Flag Varieties in Solutions of the {KP}
  Hierarchy and the Full {Kostant-Toda} Hierarchy}.
\newblock PhD thesis, The Ohio State University, 2021.

\end{thebibliography}


\end{document}